\numberwithin{equation}{section}
\newcommand{\subscript}[2]{$#1 _ #2$}
\newcommand{\vp}{\varphi}
\newcommand{\ve}{\varepsilon}
\newcommand{\bX}{    {\bm X }  }
\newcommand{\bV}{  {\bm V }   } 
\newcommand{\bP}{   {\bm P } }
\renewcommand{\d}{\mathrm{d}}
\newcommand{\N}{\mathbb{N}} 
\newcommand{\C}{\mathbb{C}} 
\newcommand{\Z}{\mathbb{Z}} 
\newcommand{\R}{\mathbb{R}} 
\newcommand{\U}{\mathcal{U}}
\newcommand{\calD}{\mathcal{D}}
\newcommand{\K}{\mathcal{K}}
\newcommand{\M}{\mathcal{M}}
\renewcommand{\L}{ \mathcal{L} }
\newcommand{\F}{\mathcal{F}}
\renewcommand{\S}{\mathcal{S}}
\newcommand{\calS}{\mathcal{S}}
\newcommand{\calI}{ \mathcal I  }
\newcommand{\G}{\mathcal{G}}
\newcommand{\Ha}{ \mathcal{H}_N }   
\renewcommand{\H}{ \mathscr{H}  }    
\newcommand{ \Dv }{  \Delta V_k(t)  }
\newcommand{\1}{\mathds{1}}
\newcommand{\W}{  \mathcal{W}  }
\renewcommand{\i}{ \mathrm{ i }  }
\newcommand{\intt}{ \scaleobj{1.4}{ \textstyle \int_{    \scaleobj{.85}{  \R^3  } }   } } 
\newcommand{\inttt}{ \scaleobj{1.4}{ \textstyle \int_{    \scaleobj{.85}{  \R^6  } }   } } 
\renewcommand{\t}[1]{\textnormal{#1}}
\newcommand{\<}{\left\langle}
\renewcommand{\>}{\right\rangle}
\renewcommand{\(}{\left(}
  \renewcommand{\)}{\right)}
\renewcommand{\[}{   [  }
\renewcommand{\]}{   ] }
\renewcommand{\leq}{\leqslant}
\renewcommand{\geq}{\geqslant}
\newtheorem{condition}{Condition}
\newtheorem{theorem}{Theorem}[section]
\newtheorem{proposition}{Proposition}[section]
\newtheorem{corollary}{Corollary}
\newtheorem{lemma}{Lemma}[section]
\newtheorem{definition}[theorem]{Definition}
\theoremstyle{remark}
\newtheorem{remark}{Remark}[section]
\theoremstyle{definition}
 \newcounter{listi}
\newcommand*{\defeq}{\mathrel{\rlap{%
			\raisebox{0.3ex}{$\m@th\cdot$}}%
		\raisebox{-0.3ex}{$\m@th\cdot$}}%
	=}
\let\oldtocsection=\tocsection
\let\oldtocsubsection=\tocsubsection
\let\oldtocsubsubsection=\tocsubsubsection
\renewcommand{\tocsection}[2]{  \hspace{0em}\oldtocsection{#1}{#2}}
\renewcommand{\tocsubsection}[2]{  \hspace{2em}\oldtocsubsection{#1}{#2}}
\renewcommand{\tocsubsubsection}[2]{\hspace{2em}\oldtocsubsubsection{#1}{#2}}
\title{Tracer particles coupled to   an interacting boson gas}
\author[Esteban C\'ardenas]{Esteban C\'ardenas}
\address[Esteban C\'ardenas]{Department of Mathematics,
	University of Texas at Austin,
	2515 Speedway,
	Austin TX, 78712, USA}
\email{eacardenas@utexas.edu}
\begin{document}
 
 \begin{abstract}
In this work, we investigate the mean-field limit of a model 
consisting of $m \geq 1 $ tracer particles, coupled to  an interacting  boson field. 
We assume 
 the mass of the tracer particles and the expected number of bosons to be of the same order of magnitude $N \geq 1 $ and we investigate the $N\rightarrow  \infty $ limit. 
  In particular, we show that the limiting system can be effectively  described by a pair of  variables $(  \textbf{X} _t ,\vp_t ) \in \R^{3m} \times H^1(\R^3)$ that solve a  mean-field equation. 
  Our methods are based on proving estimates for the number of bosonic particles    in a suitable \textit{fluctuation state} $\Omega_{N,t }$. 
  The main diffculty  of the problem comes from   the fact that the interaction with the tracer particles can  create or destroy bosons for   states close to the vacuum. 
 \end{abstract}

\maketitle

{\hypersetup{linkcolor=black}
\tableofcontents}

\section{Introduction}
In this article, we consider $m \geq 1 $  heavy quantum particles (\textit{tracer particles}) that
are coupled to a weakly interacting gas of non-relativistic scalar bosons. We are interested in the regime for which the average number of bosons and the mass of the tracer particles are, in appropiate physical units, large and of the same order of magnitude $N $.  The microscopic states of this  system are described by   wave functions
  that belong to the Hilbert space
\begin{equation}
\H \defeq L^2_\bX  \otimes \F_b 
\end{equation}
where $L^2_\bX = L^2(\R^{3m   }, \d \bX)$ is the state space  
for the tracer particles, with positions labeled by $\bX = (X^{(1)}, \ldots, X^{(m)} ) \in \R^{ 3m   }$, and   $\F_b$ is the spinless bosonic Fock space
\begin{equation}
\textstyle 
\F_b  \defeq  \C \oplus \bigoplus_{n\geq 1} \F_n   \qquad \t{where} \qquad  \F_n 
\defeq \,  L^2(\R^3, \d   x  )^{\otimes_\mathrm{sym}  n} \, .
\end{equation}
We will denote by $\Omega \defeq (1, 0 , \ldots ) $   the \textit{vacuum} vector in $\F_b$ and make extensive use of the operator-valued distributions $a_x $ and $a_x^*$ satisfying 
the canonical commutation relations (CCR)
\begin{equation}
\[a_x, a_y^* \] = \delta( x- y ) \, ,
\qquad
 \[a_x, a_ y \] = \[ a_x^* , a_y^*\] = 0\, , \, \, \qquad \quad x,y\in\R^3\, .
\end{equation} 
The  reader is refered to   \cite[Section 2]{Rodianski Schlein 2009} for a nice account on creation and annihilation operators, and to  \cite[Section X.7]{ReedSimonVol2} for a thorough mathematical treatment.

\vspace{1mm}

If $\Psi_{N,t } \in \H $ denotes the wave function of the above system at time $t \geq 0 $, the  main goal  of the present article is   to derive an effective description of this object, 
 in the large $N$ limit. From a heuristic point of view, one expects that as   the mass of the tracer particles gets larger,   \textit{classical behaviour} should be dominant. On the other hand, for low enough temperatures, one expects a  boson gas to experience \textit{condensation}--most of the particles will be   in the same quantum state.  This leads us to introduce a pair of interacting  mean-field variables
 \begin{equation}
t  \in \R \longmapsto  (\bX_t, \vp_t )\in \R^{3m } \times L^2(\R^3 )
 \end{equation}
where $\bX_t$ describes the classical trajectories followed by the tracer particles and $\vp_t $ is a self-interacting wave function describing the spatial probability distribution of the   condensate. 

\subsection{The Hamiltonian}  
Throughout this article, we assume that the boson-tracer and boson-boson interactions
are 
mediated
 by   even,   real-valued functions 
$w  : \R^3 \rightarrow \R $ and 
$v : \R^3 \rightarrow \R $. 
We will always assume enough regularity  so that the 
$(n + m ) $-particle Hamiltonian
\begin{equation} 
H_{N,n }
\defeq  
- \frac{ \Delta_{ \bX }}{2 N } \otimes \1   
- \sum_{i=1}^n  \1 \otimes  {\Delta_{x_i }} 
+\1 \otimes  \frac{1}{ N } \sum_{i<j}^n v(x_i - x_j )
+
\sum_{ i = 1 }^n \sum_{ \ell  = 1 }^m w(x_i - X^{( \ell )}) \, , \quad n \geq 1 
\end{equation}
is self-adjoint in its natural domain, with the obvious modification for $n =0 $.  
The dynamics of our combined system will be then described by      the second-quantized Hamiltonian  in $\H $
\begin{equation} 	\label{Hamiltonian}
\Ha     \defeq 
- \frac{ \Delta_{ \bX }}{2 N } \otimes \1   +
\1 \otimes T_b  
+ 
\1 \otimes \frac{ 1  }{ 2 N}
\, 
\inttt 
v( x- y) a_x^* a_y^* a_xa_y \, \d x\d y 
+
\intt \, 
\underline w ( x ,  \bX  )				
 \otimes a_x^*  a_x\, \d x \,   ,
\end{equation}
equivalently defined as the direct sum  $ \bigoplus _{n \geq 0 } H_{N , n}$.  Here and in the sequel, $T_b$ denotes the kinetic energy of the bosons 
 \begin{equation}
\textstyle 
 T_b 
  \defeq
  \, 
  \intt 
  a_x^* ( - \Delta_x ) a_x \d x  
 =
    (0)  \oplus   \bigoplus_{n\geq 1} \sum_{i = 1 }^n 
 \big(-  \Delta_{x_i }  \big)
 \end{equation}
 and we will be    using the following short-hand notation for the total boson-tracer interaction
 \begin{equation}
\textstyle 
 \underline w (x , \bX ) \defeq \sum_{ \ell = 1 }^m w (x - X^{(\ell )} ) 
 \, ,
  \qquad x \in \R^3 \, , 
   \ \bX = (X^{(1)}, \ldots, X^{(m)} ) \in \R^{3m } \, .
 \end{equation}

\vspace{1mm}
The Spectral Theorem then gives meaning to the evolution associated to the  Schr\"odinger equation--from now on refered to as the  \textit{microscopic dynamics} of the system--
\begin{equation}\label{microscopic dynamic}
\begin{cases}
& \i \partial_t \Psi_{N,t } = \Ha  \Psi_{N,t }\\
& \Psi_{N,t }\big|_{t = 0 } = \Psi_{N,0}
\end{cases}
\end{equation}
by means of exponentiation $\Psi_{N,t }  = \exp(- \i  t \Ha  ) \Psi_{N,0 }$. A few comments are  in order.

\begin{remark}
Since the  Hamiltonian $\Ha$ is diagonal with respect to Fock space $\F_b $, it commutes with the particle number operator
\begin{equation}
\textstyle  
N_b \defeq \intt a_x^* a_x \d x =   \bigoplus_{n\geq 0} n  \, .
\end{equation}
Therefore, if    $\Psi_{N,0 }$ contains in quantum-mechanical average $N$ bosons, so will $\Psi_{N,t }$ for all later times. In other words, it holds that 
\begin{equation}
\<  \Psi_{N,t } , \1 \otimes N_b \Psi_{N,t } \>_\H = 
\<  \Psi_{N,0  } , \1 \otimes N_b \Psi_{N, 0 } \>_\H 
= N \, ,\qquad \forall t \geq 0  \, .
\end{equation}
The initial data that we work with will  always satisfy this condition.  
\end{remark}
\begin{remark}
Equation  \eqref{microscopic dynamic} is equivalent to an \textit{infinite} system of equations: one for each component of  $\Psi_{N,t } $ in its direct sum decomposition. Thus, an effective description of the microscopic dynamics in terms of the $(3m+1)$ mean-field variables drastically reduces the number of unknowns one should solve for. Describing such an approximation is  the content of the present paper. 
\end{remark}
\begin{remark}
On  a formal level, each term present in the definition of $\Ha$ is of order  $N$. Indeed, for the tracer particles, we expect \textit{position} and \textit{velocities} to be of order $1$, so that their kinetic energy    is of order $N$ due to their heavy mass. For the boson field, we expect its kinetic term to be of order $N$ since there should be in average $N$ of them, each of which  has order $1$ kinetic energy. Similary, the interaction terms are scaled so that they remain of the same order of magnitude in $N$;    one can directly count the powers on creation and annhilation operators present in each term. This is the so-called weak coupling scaling, or mean-field regime. 
\end{remark}
\begin{remark}
	If the potentials $v$ and $w$ satisfy Condition \ref{cond 2}--stated below--it is well-known that  the  finite particle  Hamiltonian $H_{N,n}$ is self-adjoint in     
	$  H^2_\bX \otimes H^2(\R^{3})^{\otimes_{sym} n }$. 
	Further, for smooth $\Phi  \in \F_b$ with finitely many non-zero entries, one may verify that the boson-boson interaction satisfies the estimate
	\begin{align}
	\frac{1}{N} 
	\big\|  \, 
	\inttt v(x - y ) a_x^* a_y^* a_x a_y \d x \d  y \,  \Phi 
	\big\|_{\F_b }
 \,  \lesssim  \, 
 \|  T_b \Phi   \|_{\F_b}
 +
 \frac{1}{N^2 }
 \|  (\1 + N_b)^3 \Phi \|_{\F_b } \, .
	\end{align}
Similarly,   the boson-tracer interaction is controlled by $ \1 \otimes N_b$. 
Consequently, we have the following characterization for the domain of  $\Ha$: 
\begin{equation}\label{domain characterization}
\calD(\Ha)\cap \calD(\1 \otimes N_b^3)
= 
H_\bX^2  \otimes  \calD(   T_b  )\cap \calD(\1 \otimes N_b^3) \, , \qquad \forall N \geq 1  \, .
 \end{equation}
\end{remark}
 
\subsection{The Mean-Field Equations}
Let us now     introduce the equations satisfied by  the mean-field variables. 
Heuristically, the classical trajectory of a tracer particle should obey Newton's equation, with a time-dependent force depending on the location  of the bosons. Similarly, the quantum boson condensate should feel a time-dependent potential, related to the positions of the tracer particles, plus its usual self-interactions as is well-known  from the realm of Hartree's equation. Thus, we introduce the following system of equations, from now on refered to as the \textit{mean-field} equations
\begin{align}  \label{mf equations}
\begin{cases} 
&    \ddot \bX_t   =  - \intt  \, \nabla_\bX  \underline w (x, \bX_t )  \,  |\vp_t (x)|^2\,\d x     \\
& \i \partial_t \vp_t   = -  \Delta \vp_t     + 
\underline w (x  ,  \bX_t  )
\vp_t 
+
 ( v * |\vp_t|^2 \,  ) \, \vp_t    \\
& (\bX_t, \dot  \bX_t , \vp_t) \big|_{t = 0 } = (\bX_0 , \bV_0 , \vp_0 ) 
\end{cases} 
\end{align} 
provided some initial data $( \bX_0 , \bV_0 , \vp_0 )$ has been specified at time zero. Global well-posedness in $\R^{6m } \times H^1(\R^3 )$, for the class of potentials we are going to be working with, is proven  in Section \ref{appendix WP mean field}. In particular, the $L^2$ norm of the bosons  and the energy of the system are conserved quantities.

\section{Main Results}\label{section main results}
\subsection{Discussion of Main Results}
In this section we state our main theorems, which   connect  the microscopic dynamics \eqref{microscopic dynamic} with the mean-field equations \eqref{mf equations}. First, let us introduce the following objects; we will make extensive use of them throughout this article.
\begin{itemize}
	\item  We denote by  $\bX $ and $ \bP  =  - \i \nabla_\bX$  
	the tracer particle's
	position and momentum observables in $L^2_\bX $, with domains $\calD(\bX)$ and $\calD(\bP)$, respectively.  
	\item For $f \in L^2(\R^3 )$ 
	we introduce  the   \textit{Weyl operator}   on $\F_b$, defined  as the unitary transformation
	\begin{equation}
	\W(f) \defeq \exp (  a^*(f)   - a (f)  ) \, , 
	\end{equation}
	where
	\begin{equation}
	a(f) \defeq  \intt 
	 f (x ) 
	   a_x \d x \, 
	\quad   \t{and} \quad 
	a^*(f) 
	\defeq  \intt  
	 	 \widebar{ f (x ) }
	 a_x^*  \d x \,  . 
	\end{equation}
	See 
	\cite[Lemma 2.2]{Rodianski Schlein 2009} for a list of the properties satisfied by Weyl operators.
\end{itemize}

\vspace{1mm}
 
 The following Condition contains the   mathematical properties  satisfied by     the initial data $\Psi_{N, 0 }$ that we work with. 

 \begin{condition}[Initial Data]\label{condition 1}
 	We assume that  the  initial state of the combined system is a tensor product of the form
 	\begin{equation}
 	\Psi_{N,0  } = 
 	u_{N,0} \otimes 
 \Phi_{N,0}
 	\, , \qquad N \geq  1
 	\end{equation}
 	where $u_{N,0} \in L^2_\bX  $ and $ \Phi_{N,0} \in \F_b $  satisfy the  additional requirements:
 	 	\noindent 
	\begin{enumerate}
\item $ u_{N,0} \in \calD (\bP^3)\cap \calD(\bX^3)$  has  unit $L^2$-norm.
Further,  we assume  that there exists $(\bX_0  , \bV_0) \in \R^{3m } \times \R^{3m  }$ and $C_0>0$ such that	for any $ 1 \leq p \leq 3 $  and all  $N \geq 1 $ it holds
		\begin{align*}
		\| 
		\, 	 | \bX - \bX_0  |^{p} u_{N,0}  \, 
		\|_{L^2_\bX } 
		\, 	+ \, 
		\| 
		\, |  N^{ -1 }\bP -  \bV_0   |^{p} u_{N,0} \, 
		\|_{L^2_\bX }
		\ \leq  \ 
		\frac{ C_0 }{N^{ p /2 }} \ .
		\end{align*}
\item There exists $\vp_0 \in H^1(\R^3 )$, of unit $L^2$-norm, such that  
$$  \Phi_{N,0} =  \W (\sqrt  N  \vp_{ 0  }) \Omega  \, . $$
	\end{enumerate}
\end{condition}

\begin{remark}\label{Remark 2}
	Given  $u\in \calD( \bP^3)\cap \calD( \bX^3)$, we may construct  an initial condition that satisfies Condition \ref{condition 1}  by appropiate re-scaling, that is, by setting 
	$$
	u_N ( \bX) = N^{ 3m /4} \exp(\i N \, \bX \cdot \bV_0 ) \, u(N^{ 1/2}  (  \bX  -   \bX_0    ))\, ,
	 \qquad  \bX\in \R^{3m    } \, , \ N \geq 1 \, .
	$$
	In particular, $\|u_N\|_{L^2} = \|u\|_{L^2}$ for each $N$. 
\end{remark}

In $(1)$--besides regularity requirements--we ask the tracer particle's wave function to be such that
 its position and velocity distribution converge to a 
 Dirac delta, centered at the 
 phase space point    $(\bX_0 , \bV_0 )\in \R^{3m} \times \R^{3m}$. In particular, we choose the fastest rate of convergence that is in agreement with    Heisenberg's uncertainty principle:  $\Delta \bX \Delta \bV \geq  N^{ -1 }$. 
 Note that convergence is rather strong, since it includes up to six moments in the position and momentum observables.
Our aim is to understand how   these two conditions are propagated in time, provided one replaces   $(\bX_0 , \bV_0 )$ with $(\bX_t, \dot \bX_t )$.

\vspace{1mm}

As for the bosons, in (2) 
we ask the initial datum to be a \textit{coherent state}; each $n$-particle entry   may be described as a symmetrization of the boson field $\vp_0$, in the sense that 
\begin{equation}
\W(\sqrt  N \vp_0 ) \Omega 
=
\exp( - N / 2 ) \sum_{n = 0 }^\infty 
\frac{N^{n/2}}{ \sqrt{n! }}   \vp_0^{\otimes n  }  \, .
\end{equation}
The time evolution of coherent states has been widely studied in the literature, and the following approach is now standard. 
We  introduce  the one-particle density 
\begin{equation}
\Gamma_{N,t}(x,y) \defeq \frac{1}{N}
\<  \Psi_{N,t} , \1 \otimes a_x^* a_y  \Psi_{N,t}  \>_\H  \, , \qquad (x,y)\in \R^3 \times \R^3 \,  
\end{equation}
and we let $\Gamma_{N,t}$ be the linear operator in $L^2(\R^3  )$ whose integral kernel is $\Gamma_{N,t}(x,y)$. In particular, this operator is positive, self-adjoint and trace-class. Similarly, for  the solution of the mean-field equations $\vp_t \in L^2(\R^3)$,
we let 
$ \ket{\vp_t } \bra{\vp_t } $ be the orthogonal projection onto the subspace spanned by $\vp_t $. 
One is then interested in  the convergence of  $\Gamma_{N,t}$ towards 
$ \ket{\vp_t } \bra{\vp_t } $.

\vspace{1mm}

The class of potentials that we work with is described in the following Condition. 
Most notably, we are able to include Coulomb potentials in the boson-boson interaction. On the other hand, requiring   
three derivatives in the boson-tracer interaction is, almost certainly, not optimal.

\begin{condition}[Potentials]\label{cond 2}
We assume $w: \R^3 \rightarrow \R$ and $v : \R^3 \rightarrow \R $ are even, real-valued functions that satisfy
\begin{enumerate}
\item $ w   \in C_b^3 (\R^3 )  $ . 
\item  $ v = v_1 + v_2 $
where 
$ v_1 = \lambda |x|^{ -1 }  $  for some $ \lambda \in \R$ 
 and 
$  v_2 \in L^\infty (\R^3 )$ .
\end{enumerate}
\end{condition}

Our main results are the   following two theorems.

\begin{theorem}\label{theorem 1}
	Let $( \Psi_{N,0})_{N \geq 1 }$ satisfy Condition \ref{condition 1} with respect to  
	$(\bX_0,\bV_0 , \vp_0 ) \in \R^{6m } \times H^1(\R^3 )$ and assume the potentials $v$ and $w$ satisfy Condition \ref{cond 2}. 
	Let $(\bX_t,\vp_t)$ 
	solve the mean-field equations  \eqref{mf equations}, with the same initial data, and let $\Psi_{N,t } = \exp( - \i t \Ha) \Psi_{N, 0 }$ solve the microscopic dynamics \eqref{microscopic dynamic}.
Then, 
	there exists $C>0$ such that  for all $t \geq 0 $ and $N \geq  1$  it holds that 
	\begin{align}
	 \big| 
	  \<    
	  \Psi_{N,t } , \,  \bX \otimes \1\    \Psi_{N,t }
	   \>_\H     
	   - \bX_t
	\big| 
  & 	 \  \leq     \ 
	 \frac{ C   \,   e^{Ct}}{N^{1/4}}    \\ 
	\big| 
	N^{ -1 }
	   \<    
	 \Psi_{N,t } ,  \, \bP \otimes \1   \Psi_{N,t }
	 \>_\H   
	 - \dot  \bX_t
	\big|  
&  \ 	\leq    \ 
	\frac{C \, e^{C t}}{N^{1/4}}       \, .
	\end{align} 
\end{theorem}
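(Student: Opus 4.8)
The plan is to reduce the theorem to a Gronwall inequality for the two $\R^{3m}$-valued functions
\begin{equation*}
\mathcal{X}_t \defeq \< \Psi_{N,t} , \bX\otimes\1\, \Psi_{N,t}\>_\H ,
\qquad
\mathcal{V}_t \defeq N^{-1}\< \Psi_{N,t} , \bP\otimes\1\, \Psi_{N,t}\>_\H ,
\end{equation*}
compared against the mean-field pair $(\bX_t,\dot\bX_t)$. Differentiating in $t$, and using that among the four terms of $\Ha$ only $-\Delta_\bX/(2N)\otimes\1$ fails to commute with $\bX\otimes\1$ and only the boson--tracer term fails to commute with $\bP\otimes\1$ -- together with $[-\Delta_\bX/(2N),\bX]=-\i N^{-1}\bP$ and $[\underline w(x,\bX),\bP]=\i\,\nabla_\bX\underline w(x,\bX)$ -- one obtains the closed system
\begin{equation*}
\dot{\mathcal{X}}_t = \mathcal{V}_t ,
\qquad
\dot{\mathcal{V}}_t = -\, N^{-1}\< \Psi_{N,t},\ \intt \nabla_\bX\underline w(x,\bX)\otimes a_x^* a_x\,\d x\ \, \Psi_{N,t}\>_\H .
\end{equation*}
Since \eqref{mf equations} gives $\ddot\bX_t=-\intt\nabla_\bX\underline w(x,\bX_t)\,|\vp_t(x)|^2\,\d x$, while Condition \ref{condition 1}(1) yields $|\mathcal X_0-\bX_0|+|\mathcal V_0-\bV_0|\leq CN^{-1/2}$, it suffices to bound the force difference $\dot{\mathcal V}_t-\ddot\bX_t$ by $|\mathcal X_t-\bX_t|+|\mathcal V_t-\dot\bX_t|$ plus an $O(N^{-1/4}e^{Ct})$ error, and then conclude by Gronwall.

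I would split the force difference by inserting $\nabla_\bX\underline w(x,\bX_t)$ into the integral. The ``Lipschitz remainder'', involving $\nabla_\bX\underline w(x,\bX)-\nabla_\bX\underline w(x,\bX_t)$, is controlled using $w\in C_b^3$ (Condition \ref{cond 2}(1)) by $CN^{-1}\| \,|\bX-\bX_t|\otimes\1\,\Psi_{N,t}\|\,\|\1\otimes N_b\,\Psi_{N,t}\|$, where $\|\1\otimes N_b\,\Psi_{N,t}\|=\|\1\otimes N_b\,\Psi_{N,0}\|\lesssim N$ by conservation of $N_b$ and the coherent-state form of $\Phi_{N,0}$. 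The term remaining after the insertion equals $-\mathrm{Tr}\big(\nabla_\bX\underline w(\cdot,\bX_t)\,(\Gamma_{N,t}-\ket{\vp_t}\bra{\vp_t})\big)$, hence is bounded by $\|\nabla w\|_\infty\,\|\Gamma_{N,t}-\ket{\vp_t}\bra{\vp_t}\|_{\mathrm{tr}}$. Thus everything reduces to two a priori bounds on the fluctuation vector $\Omega_{N,t}\defeq(\1\otimes\W(\sqrt N\vp_t)^*)\,\Psi_{N,t}\in\H$, which at $t=0$ is $u_{N,0}\otimes\Omega$: (i) propagation of the tracer moments,
\begin{equation*}
\| \,|\bX-\bX_t|^{p}\otimes\1\,\Psi_{N,t}\| \, + \, \| \,|N^{-1}\bP-\dot\bX_t|^{p}\otimes\1\,\Psi_{N,t}\| \ \leq \ CN^{-p/2}e^{Ct} , \qquad 1\leq p\leq 3 ,
\end{equation*}
and (ii) a bound on the number of fluctuations $\<\Omega_{N,t},(N_b+\1)^{k}\,\Omega_{N,t}\>$, which -- expanding $a_x^* a_y$ under the Weyl conjugation -- controls $\|\Gamma_{N,t}-\ket{\vp_t}\bra{\vp_t}\|_{\mathrm{tr}}\lesssim N^{-1/2}\<\Omega_{N,t},(N_b+\1)\,\Omega_{N,t}\>^{1/2}$.

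Both bounds would be established together, by a single Gronwall argument for the fluctuation dynamics $\i\partial_t\Omega_{N,t}=\mathcal{L}_N(t)\,\Omega_{N,t}$, where $\mathcal{L}_N(t)$ results from conjugating $\Ha$ with $\1\otimes\W(\sqrt N\vp_t)$ and subtracting the time-derivative contribution of the Weyl operator. Because $\vp_t$ solves the Hartree part of \eqref{mf equations}, $\mathcal{L}_N(t)$ decomposes into a semiclassical tracer Hamiltonian $-\Delta_\bX/(2N)\otimes\1 + N\big(\intt\underline w(x,\bX)|\vp_t(x)|^2\,\d x\big)\otimes\1 + \cdots$ -- whose classical flow is exactly the Newton equation in \eqref{mf equations}, so that (i) follows from a Hepp-type expansion of $\underline w(x,\bX)$ around $\bX_t$, in which the position and momentum moment functionals are themselves coupled through the kinetic term -- plus boson terms, of which the Hartree and Newton choices cancel everything except the linear pieces $\sqrt N\intt\big(\underline w(x,\bX)-\underline w(x,\bX_t)\big)\otimes(\vp_t(x)a_x^*+\overline{\vp_t(x)}a_x)\,\d x$ and subleading contributions controlled by $N^{-1}\bP-\dot\bX_t$ and by $N_b$. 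Differentiating $\<\Omega_{N,t},(N_b+\1)^{k}\Omega_{N,t}\>$ and the squared moments, and estimating the cross terms by Cauchy--Schwarz jointly in the tracer and Fock variables, couples the two families of functionals; the loop closes because the factor $\sqrt N$ in the linear pieces is absorbed by the $O(N^{-1/2})$ size of $|\bX-\bX_t|$, at the price that one only obtains $\<\Omega_{N,t},(N_b+\1)\,\Omega_{N,t}\>\leq C\sqrt N\,e^{Ct}$. Feeding this into (ii) gives $\|\Gamma_{N,t}-\ket{\vp_t}\bra{\vp_t}\|_{\mathrm{tr}}\lesssim N^{-1/4}e^{Ct}$, so both contributions to $\dot{\mathcal V}_t-\ddot\bX_t$ are $O(N^{-1/4}e^{Ct})$, and a final Gronwall inequality for $|\mathcal X_t-\bX_t|+|\mathcal V_t-\dot\bX_t|$ yields the stated estimates.

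The main obstacle is precisely this coupling. In contrast to the usual Hartree setting, the boson--tracer potential $w$ has strength of order one rather than $O(1/N)$, so the linear-in-creation/annihilation terms generated by the Weyl conjugation cannot be removed by the choice of $\vp_t$ alone: the mismatch between the operator $\bX$ and the classical trajectory $\bX_t$ leaves a residual source of size $\sim\sqrt{N_b+\1}$ that pumps fluctuations at an order-one rate. Controlling it forces one to propagate joint moments of $|\bX-\bX_t|$ and $N_b$ rather than the two quantities separately, and is the reason the convergence rate is $N^{-1/4}$ and not $N^{-1/2}$.
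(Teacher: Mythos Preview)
Your sketch follows the paper's strategy closely, and you correctly identify both the main obstacle --- the $\sqrt N$-sized linear term in the fluctuation generator --- and its resolution via coupling the particle-number moments to the tracer moments $|\bX-\bX_t|$, $|N^{-1}\bP-\dot\bX_t|$. Two points deserve comment.

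First, the opening Gronwall for the \emph{expectations} $\mathcal X_t,\mathcal V_t$ is superfluous. Your Lipschitz remainder already forces you to control the \emph{second} moment $\|\,|\bX-\bX_t|\,\Psi_{N,t}\|$, and once that is available (your bound (i) for $p=1$), Cauchy--Schwarz gives $|\langle\Psi_{N,t},\bX\,\Psi_{N,t}\rangle-\bX_t|\le\|\,|\bX-\bX_t|\,\Psi_{N,t}\|$ directly. The paper does exactly this: Lemma~\ref{lemma 4} runs the Gronwall on $\|\Delta\bX(t)\Psi_{N,t}\|^2+\|\Delta\bV(t)\Psi_{N,t}\|^2$, with the force split just as you describe, and Theorem~\ref{theorem 1} is then one line. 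Note also that the paper obtains only rate $N^{-1/4}$ for this second moment, not the $N^{-p/2}$ you claim in (i); the loss is precisely the $N^{1/2}$ growth of $\langle\Omega_{N,t},(N_b+1)\Omega_{N,t}\rangle$ that you flag at the end.

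Second, and more substantively, your ``single Gronwall'' for (i) and (ii) will not close as written. The cubic off-diagonal piece $N^{-1/2}\inttt v(x-y)\,a_x^*\bigl(\vp_t(y)a_y^*+\overline{\vp_t(y)}\,a_y\bigr)a_x\,\d x\,\d y$ in the fluctuation generator couples $\langle\Omega_{N,t},(N_b+1)^k\Omega_{N,t}\rangle$ to the next higher half-moment, so the hierarchy of number moments is open; the $\sqrt N$ absorption you invoke handles only the tracer-coupled linear term. The paper closes the loop by inserting a cutoff $\chi(N_b\le M)$ into the interaction: for the truncated generator the cubic commutator is bounded by $(1+\sqrt{M/N})\langle(N_b+1)^k\rangle$, the joint Gronwall --- for a functional $\G_\Phi$ combining $\|\Delta\bX^3\|^2$, $\|\Delta\bV^3\|^2$ and $N^{-3}\|(N_b+1)^{3/2}\|^2$, just as you anticipate --- then yields uniform bounds on the truncated fluctuation state (Proposition~\ref{prop1}), and a Duhamel comparison at $M=N$ transfers these to the true $\Omega_{N,t}$ at the cost of the extra $N^{1/2}$ (Theorem~\ref{thm number estimates}). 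This truncation-plus-bootstrap, inherited from Rodnianski--Schlein, is the missing technical device in your outline.
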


\begin{theorem}\label{theorem 2}
Under the same conditions of  Theorem \ref{theorem 1}, there exists  $C>0$ such that for all $t \geq 0 $ and $N \geq 1$ it holds that 
\begin{equation}
\mathrm{Tr}\,    |  \, 
 \Gamma_{N,t} 
-
 \ket{\vp_t } \bra{\vp_t } 
     \, 
     |  
 \ 	\leq    \ 
\frac{C \, e^{C t}}{N^{1/4}}   \, . 
\end{equation}
\end{theorem}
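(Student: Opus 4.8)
The plan is to work in the standard "fluctuation dynamics" framework for coherent states, adapted to the presence of the tracer particles. One introduces the fluctuation vector $\Omega_{N,t} \in \H$ by undoing the mean-field motion:
\begin{equation}
\Omega_{N,t} \defeq \W(\sqrt N \vp_t)^* \, \tau_{\bX_t, \bV_t}^* \, \Psi_{N,t},
\end{equation}
where $\W(\sqrt N \vp_t)^*$ strips off the condensate and $\tau_{\bX_t,\bV_t}$ is the Weyl-type translation in the tracer phase space that centers the tracer distribution at the classical point $(\bX_t, \dot\bX_t)$. The key quantity is the expected number of bosonic fluctuations, $\mathcal{N}(t) \defeq \<\Omega_{N,t}, \1\otimes N_b\, \Omega_{N,t}\>_\H$, together with the phase-space moments $\<\Omega_{N,t}, |\bX - \bX_t|^p \otimes \1\, \Omega_{N,t}\>$ and $\<\Omega_{N,t}, |N^{-1}\bP - \dot\bX_t|^p \otimes \1\, \Omega_{N,t}\>$ for $p=1,2$. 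The heart of the argument — which I expect is carried out as a separate proposition earlier or alongside — is a Grönwall estimate showing that these quantities, suitably combined into a single functional $\beta_N(t)$, satisfy $\beta_N(t) \lesssim e^{Ct}(\beta_N(0) + N^{-1/2})$, with $\beta_N(0)$ controlled by Condition \ref{condition 1} at the rate $N^{-1/2}$ (six moments there giving enough room for the cubic commutator terms and the $N_b^3$ corrections flagged in the domain discussion). The factor $N^{-1/4}$ rather than $N^{-1/2}$ in the final bounds reflects taking a square root of a second-moment estimate — a Cauchy–Schwarz step.

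**Then**, granting the fluctuation estimate $\mathcal{N}(t) \lesssim e^{Ct}$, the passage to Theorem \ref{theorem 2} is the usual comparison. One writes the one-particle density in terms of the fluctuation vector: by the shift properties of the Weyl operator, $a_y^* a_x = (\W(\sqrt N\vp_t) (a_y^* + \sqrt N\bar\vp_t(y))(a_x + \sqrt N\vp_t(x)) \W(\sqrt N\vp_t)^*)$ conjugated appropriately, so that
\begin{equation}
N\,\Gamma_{N,t}(x,y) = N\vp_t(x)\bar\vp_t(y) + \sqrt N\big(\vp_t(x)\<\Omega_{N,t},\1\otimes a_y\,\Omega_{N,t}\> + \text{c.c.}\big) + \<\Omega_{N,t},\1\otimes a_y^* a_x\,\Omega_{N,t}\>.
\end{equation}
The cross terms are bounded by $\sqrt N \cdot \|\vp_t\|_{L^2}\cdot \<\Omega_{N,t},\1\otimes N_b\,\Omega_{N,t}\>^{1/2} \lesssim \sqrt N \cdot e^{Ct/2}$, and the last term by $\mathcal N(t)\lesssim e^{Ct}$. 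Dividing by $N$ and reading this as an operator inequality, one gets $\|\Gamma_{N,t} - \ket{\vp_t}\bra{\vp_t}\|_{\mathrm{HS}} \lesssim N^{-1/2}e^{Ct}$. To upgrade Hilbert–Schmidt to trace norm one uses the standard trick for differences of a trace-class operator and a rank-one projection of equal trace: since $\mathrm{Tr}\,\Gamma_{N,t} = 1 = \mathrm{Tr}\,\ket{\vp_t}\bra{\vp_t}$ and $\ket{\vp_t}\bra{\vp_t}$ is a rank-one projection, one has $\mathrm{Tr}\,|\Gamma_{N,t} - \ket{\vp_t}\bra{\vp_t}| \leq 2\sqrt{2}\,\big(1 - \<\vp_t, \Gamma_{N,t}\vp_t\>\big)^{1/2} \leq C\,\|\Gamma_{N,t}-\ket{\vp_t}\bra{\vp_t}\|_{\mathrm{HS}}^{1/2}$ (the square root being exactly where $N^{-1/2}$ degrades to $N^{-1/4}$), or alternatively one bounds the trace norm directly by $\|\Gamma_{N,t}-\ket{\vp_t}\bra{\vp_t}\|_{\mathrm{HS}}$ plus a rank-one error and keeps the rate $N^{-1/2}$; in either case the claimed $N^{-1/4}e^{Ct}$ follows.

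**The main obstacle** is the Grönwall estimate for $\mathcal N(t)$ itself, and specifically the issue highlighted in the abstract: the boson–tracer coupling $\intt \underline w(x,\bX)\otimes a_x^* a_x\,\d x$, once conjugated by the Weyl operator, produces a term linear in creation/annihilation operators, $\sqrt N \intt \underline w(x,\bX)(\bar\vp_t(x) a_x + \vp_t(x)a_x^*)\,\d x$, which does \emph{not} preserve particle number and is formally of size $\sqrt N \cdot \mathcal N(t)^{1/2}$ — this is the mechanism by which bosons are created from near-vacuum states. The resolution is that the mean-field equation for $\vp_t$ has been chosen precisely so that the \emph{worst} linear term is cancelled; what survives is the difference $\underline w(x,\bX) - \underline w(x,\bX_t)$, which by the $C^3_b$ regularity of $w$ (Condition \ref{cond 2}(1)) and the mean value theorem is controlled by $|\bX - \bX_t|$, hence by the tracer moments that are themselves part of the functional $\beta_N(t)$ — closing the loop. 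One must also handle the Coulomb singularity in $v$, where the bound on the boson–boson term flagged after \eqref{domain characterization} ($\lesssim \|T_b\Phi\| + N^{-2}\|(1+N_b)^3\Phi\|$) together with conservation of energy along \eqref{mf equations} supplies the needed a priori $H^1$-control on $\vp_t$ and the kinetic-energy control on the fluctuations. Propagating all three pieces of $\beta_N(t)$ simultaneously — the fluctuation number, the cubic tracer position/momentum moments, and the cross-derivative (commutator) terms between $\bX$ and the bosonic operators — is the genuinely technical core; everything downstream is bookkeeping with Weyl shifts and Cauchy–Schwarz.
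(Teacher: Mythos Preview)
Your overall strategy matches the paper's: define a fluctuation state via the bosonic Weyl operator, control its particle-number moments by a Gr\"onwall argument that couples them to the tracer moments $|\bX-\bX_t|^p$ and $|N^{-1}\bP-\dot\bX_t|^p$ (exploiting precisely the cancellation $\underline w(x,\bX)-\underline w(x,\bX_t)$ you describe), then read off convergence of $\Gamma_{N,t}$ from the Weyl-shift identity and Cauchy--Schwarz. Two corrections, though.

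First, the paper's fluctuation state $\Omega_{N,t}$ involves \emph{only} the bosonic Weyl operator $\W(\sqrt N\vp_t)^*$ and a scalar phase; there is no tracer phase-space translation $\tau_{\bX_t,\bV_t}^*$. The tracer moments are controlled directly on $\Psi_{N,t}$ (or on $\Omega_{N,t}$, which differs from it only by a bosonic unitary), not by re-centering.

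Second, and more substantively, you misidentify the origin of the $N^{-1/4}$ rate. The paper does \emph{not} prove $\mathcal N(t)\lesssim e^{Ct}$ uniformly in $N$; it explicitly remarks that it cannot. The available bound (Theorem~\ref{thm number estimates}) is only $\|N_b^{1/2}\Omega_{N,t}\|^2\le Ce^{Ct}N^{1/2}$. With this input, the dominant cross term in your Weyl-shift expansion is
\[
N^{-1/2}\|N_b^{1/2}\Omega_{N,t}\|\ \lesssim\ N^{-1/2}\bigl(e^{Ct}N^{1/2}\bigr)^{1/2}=e^{Ct/2}N^{-1/4},
\]
so the $N^{-1/4}$ already appears at the Hilbert--Schmidt level. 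The passage HS $\to$ trace is then \emph{linear}, via $\mathrm{Tr}\,|\Gamma_{N,t}-\ket{\vp_t}\bra{\vp_t}|\le 2\|\Gamma_{N,t}-\ket{\vp_t}\bra{\vp_t}\|_{HS}$ (valid because $\ket{\vp_t}\bra{\vp_t}$ is rank one and both operators have unit trace, forcing the difference to have at most one negative eigenvalue), with no further loss. Your alternative route --- assume $\mathcal N(t)\lesssim e^{Ct}$, get HS rate $N^{-1/2}$, then lose a square root in the trace conversion --- arrives at the same endpoint but rests on an intermediate bound that is not established here.
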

 
\vspace{1mm}

\begin{remark}
	The conclusion of Theorem \ref{theorem 1} characterizes the limits of the first moments of the probability  measures
	\begin{equation}
	\mu^{\bX}_{N,t}(  E  )
	\defeq
	\|   \1(\bX  \in E ) \Psi_{N,t }  \|^2_\H  
	\qquad
	\t{and}
	\qquad
	\mu^{\bV}_{N,t}(  E  )
	\defeq
	\|   \1( N^{-1 }\bP   \in E ) \Psi_{N,t }  \|^2_\H  
	\end{equation}
	where $E \in \mathcal{B}(\R^{3m})$ is a Borel set. Since we are actually able to control their variance in Lemma \ref{lemma 4}, it follows easily that there is weak-* convergence as $N\rightarrow \infty$: 
	\begin{equation}
	\mu^{\bX}_{N,t}   \stackrel{\ast}{\rightharpoonup}  \delta_{\bX_t }
	\qquad
	\t{and}
	\qquad 
	\mu^{\bV}_{N,t}   \stackrel{\ast}{\rightharpoonup}  \delta_{\dot \bX_t } \, .
	\end{equation}
\end{remark}

\begin{remark}
	In the particular situation of Theorem \ref{theorem 2}, the Hilbert-Schmidt norm controls the trace norm. Namely, for the linear operator $\mathcal{S} \defeq \Gamma_{N,t} -  
	 \ket{\vp_t } \bra{\vp_t } 
	$ it holds that
	\begin{equation}\label{trace norm dominated}
	\mathrm{Tr} \, | \mathcal{S}| \leq 2 \|\mathcal{S}\|_{HS} \, .
	\end{equation}
Therefore, we will actually prove convergence in Hilbert-Schmidt norm and then apply \eqref{trace norm dominated}.
\end{remark}

\begin{remark}[Convergence Rates]
	If one compares Condition \ref{condition 1} for $p =1 $, with the result of Theorem \ref{theorem 1}, it is evident that there is a loss of a factor $N^{ - 1 /4 }$ in the rate of convergence.
	For $v = 0 $, it follows  easily from ours methods that optimal convergence rate is achieved and we only need $u_{N,0} \in \calD( \bX ) \cap \calD( \bP)$ with (1) in Condition \ref{condition 1} valid only up   to  $p  = 1$. 
	For  $v \neq 0 $, we can achieve optimal convergence rate provided we take the following stronger assumptions: \textit{ (i)}
	$u_{N,0 } \in \calD( \bX^4 ) \cap \calD( \bP^4 )$ with  (1) in Condition \ref{condition 1} valid for 
	$ p = 4 $  and \textit{(ii)} $w \in C_b^4(\R^3 )$. This last statement, however, requires more work as one needs to deal with higher powers of the observables $\bX $ and $\bP$. 
	
The author did not investigate the optimality of the rate of convergence in Theorem \ref{theorem 2}. We believe, however, the  optimal  rate to be $N^{ -1  }$ provided one  assumes $p $ to be large enough in Condition \ref{condition 1}, and $w $ to be regular enough in Condition \ref{cond 2}. 
\end{remark}

 \begin{remark}[Tracer-tracer Interactions]
 	With our methods, it would be possible to include  interactions between the tracer particles  $V(X^{(i)} - X^{(\ell)})$, provided the potential function is regular enough. We decide to omit such a term in order to highlight the more interesting   boson-tracer interaction $\underline w(x , \bX)$ and the more  technical boson-boson term 
 	$v( x -y )$. 
 \end{remark}

\subsubsection*{Our contribution} There have  been several  studies in the literature that consider derivation of effective equations for   many-particle systems  of 
interacting bosons. The classical references are \cite{Hepp 1974,Spohn 1980,GinibreVelo1979 1,GinibreVelo1979 2}. More recently, 
\cite{ESY 2006,ESY 2007,ESY 2009,ESY 2010}
together with 
\cite{Klainerman Machedon 2008,Rodianski Schlein 2009}
 have motivated much of the activity in the field; 
 for a non-exhaustive list we refer to 
 \cite{LewinNamRougerie2014,Lewin Nam Schlein 2015,Chen et al,Chen et al 2015,ChenPavlovic2010,ChenPavlovic2013,ChenHolmer2016,ChenHolmer2016 2,GressmanSohingerStaffilani2014,GrillakisMachedon2017,GrillakisMachedonGrillakis2013,GrillakisMachedonMargetis2010,KirkpatrickSchleinStaffilani2011,Pick2011}.

On the other hand, there have been only  few  studies of systems that include interactions between bosons and one or more tracer particle.
 Most remarkably:
 \begin{enumerate}
 	\item 
In \cite{deckert et al}, Deckert, Fr\"ohlich, Pizzo and Pickl
 	consider the $(N+1)$-particle Hilbert space
 	  $ L^2(\R^{3(N+1)})$   with  the    $(N+1)$-particle       Hamiltonian $H_{N,N}$ for $v=0$, i.e. an ideal (non-interacting) Bose gas. They study the case for which
 	 $\vp_0 $ is  localized in a region that expands as $ N \rightarrow \infty $ and derive   a slighly different system of mean-field equations. We note   that they consider only $C_c^\infty$ potentials and initial data, but include a smooth tracer-tracer interaction term. 
 
 	\item In \cite{Chen et al}, Chen and Soffer consider our same model for the case in which there is only one tracer particle, i.e. $ m = 1 $. 
 	The main advantage of this model   comes from   translation invariance of the system: one can   decompose the Hilbert space into blocks of constant \textit{total momentum} $P  \in \R^3 $, 
 	i.e. there is a direct integral representation 
 	\begin{equation}
 	\H = \int_{\R^3 }^\oplus \H( P )  \d P 
 	\end{equation}
that allows to mod out the tracer particle dependence in the diagonal term $  \int_{\R^3 } w(x - X )a_x^*a_x  \d x $, see  their equation (1.11). Notably, they are able to give a mean-field description 
that converges in Fock space norm once the intertwining between the bosons and the tracer particle's momenta has been understood. 

\item Very recently, Lampart and Pickl in 
\cite{Lampart Pickl 2021}
studied the mean-field limit of a boson gas coupled to a tracer particle, whose mass remains finite. They are able to derive a norm approximation of the microscopic dynamics via   Bogoliubov theory of the excitations of the condensate. 

\item In the context of quantum friction/Cherenkov radiation, 
interactions between tracer particles and ideal gases of bosons have been studied by Fr\"ohlich and Gang in \cite{FrohlichGangSoffer2012,FrohlicGang2014,FrohlichGang2014 2}. Their starting point are the   effective  equations derived in \cite{deckert et al}. 
\end{enumerate}
Our main contribution  in this paper  can be summarized as follows. Once translational symmetry is broken, the   argument of \cite{Chen et al} breaks down and a new approach towards finding a mean-field description is needed; we develop this approach while including, at the same time, the possiblity for the Bose gas to be non-ideal, including the case of   Coulomb interactions.

\subsubsection{Idea of the proof} The proof of our main results is heavily inspired 
by   \cite{deckert et al} (to control expectations of tracer particle's observables)
and by \cite{Rodianski Schlein 2009} (to control vacuum fluctuations) and the idea is as follows. First,  we introduce \textit{fluctuation states} $\Omega_{N,t } \in \H$ 
 that show up naturally when trying to replace $a_x$ with $\sqrt N \vp_t(x)$. However, a major difficulty arises when one writes down its infinitesimal generator. Namely, there is a (non-diagonal) term of the form 
 \begin{equation}
   \sqrt  N 
 \, 
 \intt
 \big(
 \underline w(x , \bX ) - \underline w (x, \bX_t )
 \big) 
 \otimes 
\big(
  a_x \overline{\vp_t} (x)  +   a_x^* \vp_t(x)				
 \big) 
 \,   \d x 
 \end{equation}
which has an apparent growth of $\sqrt N $. In other words, the presence of   tracer particles 
can potentially create or destroy  a lot of bosons for states close to the vacuum.
  In order to control  this   term, we study  \textit{simultaneuosly} the difference $\bX - \bX_t $ and apply a Taylor estimate on this interaction $w(x - X )$. Note that, at the same time, this requires  control of the difference $ N^{-1} \bP - \dot \bX_t $ thanks to Ehrenfest's Theorem. Subsequently, once that control of $\Omega_{N,t }$ has been established, important remainder terms can be 
   estimated and the proof of our main results follows  in a few pages.

\subsubsection{Organization of the paper}
 In Section \ref{fluctuation states} we introduce 
 the fundamental notion of a  fluctuation state, together with its    regularized version.
 In Section \ref{proof of main results} we prove our main results, based  on 
 particle number estimates satisfied by    fluctuation states. 
 In Section \ref{prop 1 proof section} we complete the proof of  these   particle number   estimates.  In Sections \ref{appendix WP mean field} and \ref{section WP truncated} we prove some required well-posedness results.

\subsubsection{Notation}
\begin{itemize} 
	\item[-]  $C_b^k (\R^3 )$ is the class of $k \in \N$ times continuously differentiable functions, with all derivatives bounded. We equip it with the norm $ \| f  \|_{C_b^k}  \defeq m \sum_{ 0 \leq |\alpha | \leq  k} \| \partial^{\alpha} f \|_{L^\infty }$. 
	\item[-] Given two quantities $ A (N,t )$ and $ B (N,t)$, we say that $A  \lesssim   B $ if there exists a constant $C >0$, depending only on $ w  $,  $  v  $,  $  \vp_0 $ and $C_0 $, 
	 such that 
	\begin{equation}
	\textstyle A(N,t) \leq C\, B(N,t) \, ,\qquad \forall N\geq 1 \, , \  \forall t \geq 0 \, .
	\end{equation}
\item[-] We drop the tensor product symbol $\otimes$ for products of \textit{operators}. 
\item[-] We drop the subscript $\H$ associated to the inner product      $\<  \cdot , \cdot \>_\H$, whenever there is no risk of confusion. 
\item[-] $\chi (A)$ stands for the characteristic function of the  measurable set $A  \subset \R$.
\item[-] Given  a   Banach space $E$, we denote by $ \mathcal B ( E )$  the Banach space of bounded, linear maps $ E \rightarrow E$, equipped with its usual norm $\|\cdot  \|_{\mathcal B ( E )}.$
\end{itemize}

\subsubsection{Fundamental estimates}
Let us give a list of estimates that will be used throughout this work, together with  close variations of them; since they are well-known, we     omit their proof. In the sequel, we shall refer to them in italics, i.e. as \textit{Particle Number Estimates}, \textit{Potential Estimates}, etc. 
 
\vspace{1mm}
\noindent \textbf{Taylor Estimates.}
For all $x,y \in \R^3 $ it holds that 
\begin{equation}\label{taylor estimate}
|w(x) - w(y)| + |\nabla w (x) - \nabla w (y)|     \lesssim   |x - y| \, .
\end{equation}
\noindent \textbf{Particle Number Estimates.}
For all $ \Phi \in \H $ and  
$ f = f (x , \bX )$ it holds 
 \begin{align}\label{particle number estimates}
\big\| 
\intt 
 f  (x, \bX )  a_x^* \d x \,   \Phi  
  \big\|
 &  \lesssim
    \sup_{\bX \in \R^{3m } } \| f  (\cdot, \bX ) \|_{L^2 } \|(N_b + 1 )^{1/2}\Phi  \|  
    \\
\big\| 
  \intt
     f  (x  , \bX  ) a_x^*a_x \d x \Phi  
    \big\|  
&   \lesssim 			  
\| 
    f  \|_{L^ \infty} 
    \|N_b \Phi  
    \big\|  \ . 
\end{align}
In addition, for all $\Phi \in \H $ and  $\phi_1 =  \phi_1(x)$ and 
$\phi_2  = \phi_2(y)$ it holds 
\begin{align} 
\big\| 
   \inttt 
   v(x - y) \phi_1(x) \phi_2(y) a_y^* a_x \d  x   \d y \,  \Phi   
   \big\| 
&   \lesssim 				 \nonumber
  \big(   \inttt    v(x -  y )^2
   |\phi_1(x)|^2 
   |\phi_2(y)|^2 
   \d x \d y  \big)^{1/2}
    \|  (N_b + 1)\Phi  \|  \\ 
\big\| 
\inttt 
 v(x - y ) 
\phi_1  (y )
a_x^* a_y^* a_x \d x \d y \, \Phi 
\big\| 
 & \lesssim 						 
\big( \sup_{x \in \R^3 }
 \intt  v(x - y)^2 
|\phi_1 (y)|^2
 \d y
    \big)^{1/2}
 \| (N_b + 1 )^{3/2}  \, \Phi  \|  \, .
\end{align}
\noindent \textbf{Potential Estimates.}
Let  $v$ satisfy  Condition \ref{cond 2}. Then, the following \textit{operator inequality} holds 
\begin{equation}\label{operator inequality}
\<   \vp ,  v^2 \vp  \> 
\leq 
 |v|^2
\| \vp  \|_{H^1}^2 
\, , \qquad \vp \in H^1(\R^3 )
\end{equation}
for some $|v|>0$. Consequently, 
for any $\vp \in H^1(\R^3 )$ it holds 
\begin{align}
\sup_{x\in \R^3}				\label{v estimate 1}
\intt 
v(x - y)^2 |\vp(y)|^2 \d y 
& \  \leq \ 
|v|^2  \| \vp  \|_{H^1}^2 \\
\sup_{x\in \R^3}			\label{v estimate 2}
\intt 
v(x - y) |\vp(y)|^2 \d y 
& \  \leq \ 
|v|  \| \vp \|_{L^2 }   \| \vp  \|_{H^1}  \\
\inttt 			\label{v estimate 3}
v(x - y)^2 |\vp(x)|^2 |\vp(y)|^2  \d y \d x 
& 
\ \leq \ 
|v|^2 \|\vp \|_{L^2}^2 \|\vp  \|_{H^1}^2  \, .
\end{align}

\noindent \textbf{Local Lipschitz Property.}
Let $v$ satisfy Condition \ref{cond 2} and define the map $J(\vp) \defeq (v * |\vp|^2 )\vp$ for $ \vp \in H^1 (\R^3)$. Then, 
the Leibniz rule and \eqref{v estimate 2} imply that for all  $\vp , \psi \in H^1(\R^3 )$
\begin{equation}\label{lipschitz}
\| 
J(\vp)
-
J(\psi )
\|_{H^1 }
\leq
 L (v)
\big(
\, 
\| \vp 	\|^2_{H^1 }
+
\|\psi  \|_{H^1}^2
\,
\big) 
\|
\vp - \psi 
 \|_{H^1 } \, .
\end{equation}
where $L(v) = C |v|$. 
For a detailed account on the Lipschitz property for the  case $v  =  \lambda |x|^{-\gamma}$ with $\gamma \in (0,3/2)$, 
we refer the reader to \cite[Lemma A.2]{Hott2021}, whose analysis follows \cite[Lemma 1]{Lenzmann 2007}.

\section{Fluctuation States }\label{fluctuation states}
Let $\Psi_{N,t }  = \exp(- \i t \Ha ) 
\Psi_{N,0  }$  and $(\bX_t, \vp_t)$ be the   solutions of
\eqref{microscopic dynamic} and  \eqref{mf equations}, respectively.
In  this section, we study the number of particles present in the 
\textit{fluctuation state}
\begin{equation} \label{fluctuation state}
\Omega_{N,t }  \defeq 
e^{ - i S(t)}
\W^* (\sqrt N \vp_t)
\exp( - \i t \Ha)
\W (\sqrt N \vp_0)
(u_{N,0}\otimes \Omega)\, ,
\qquad t \in \R \,   ,  \  N \geq 1 \, .
\end{equation}
Here, $S(t)$ is a scalar term that modifies  $\Omega_{N,t }$ only by an overall phase; it is not a physically measurable quantity. We choose  it to be 
\begin{equation}\label{def scalar term}
S (t)   \,   \defeq     \,      N  
 \int_0^t
 \int_{\R^3  }
 \Big( 
 \underline w \big( 
   x , 
\bX_s 
\big) 
\, 
 |   \vp_s( x)|^2 
\,  + \, 
\frac{1}{2}
\big( v*|\vp_s|^2 \big) 
\! (x) \, |\vp_s( x)|^2 
\Big)
 \d x\,   
 \d s 
\end{equation}
The dynamics that drive states similar to  $\Omega_{N,t }$ has been studied extensively, with the first rigorous study being carried out in \cite{GinibreVelo1979 1} and more recently  re-activated in 
\cite{Rodianski Schlein 2009}. When no tracer particle is present, the usual  infinitesimal   generator consists of kinetic terms, plus second quantized operators that are polynomials of order bigger than or equal to two in the creation and annihilation operators $a_x$ and $a_x^*\, $; 
the scalar term $S(t)$ and the boson field $\vp_t$ can be  chosen so that no zero or first order terms are present.  
However, when interacting with tracer particles, the linear terms
  that arise from $\int \underline w (x , \bX)  a_x^* a_x \d x $\footnote{
In \cite{Chen et al}, the $\bX$ dependence is cleverly eliminated from the very beginning,  thanks to the translational symmetry of  the  model. 
}
cannot be removed unless one is willing to introduce a $\bX$-dependent boson field $\vp_t(x, \bX)$. 
Note that doing so 
would have the effect of making $- \Delta_{ \bX }$ not commute  with the   Weyl operator $ \W (\sqrt  N \vp_t (\bX))$. 
We believe that the consequences   outweight the benefits and    shall not take this approach.

\vspace{1mm}
Our main result concerning particle number estimates for   fluctuation states  
is the following theorem.

\begin{theorem}\label{thm number estimates}
	Let $\Omega_{N,t }$ be the fluctuation state defined in \eqref{fluctuation state}, with initial data satisfying Condition \ref{condition 1}. Then, there is a constant $C>0$ such that
	for all $ t \geq 0 $ and $ N \geq 1  $
	the following estimates hold 
	\begin{align}
	\|  N_b^{1/2} \Omega_{N,t } \|^2 
	&\ \leq \ C  					\label{estimate N}
	e^{C  t } N^{1/2}				 \,  , 		 \\
	\|  N_b  \, \Omega_{N,t } \|^2 
	& \ \leq \ C  						\label{estimate N^2}
	e^{C  t } N^{3/2} \, .
	\end{align}
\end{theorem}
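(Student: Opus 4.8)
The plan is a Grönwall estimate for a functional that couples the number of fluctuation particles to the distance of the tracer observables from their classical trajectory. Since the generator of $t\mapsto\Omega_{N,t}$ involves unbounded operators whose commutators are only meaningful on $\calD(\Ha)\cap\calD(\1\otimes N_b^3)$ (recall \eqref{domain characterization}), I would first carry out all the computations below for the \emph{regularized} fluctuation state of Section~\ref{fluctuation states} — obtained by truncating $\Ha$ in the boson number — proving the two estimates with constants independent of the truncation parameter, and then remove the cutoff. The starting point is the computation of the generator $\mathcal L(t)$, defined by $i\partial_t\Omega_{N,t}=\mathcal L(t)\Omega_{N,t}$. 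Conjugating every term of $\Ha$ by $\W(\sqrt N\vp_t)$ via $\W^*(f)a_x\W(f)=a_x+f(x)$, adding the Weyl phase coming from $i\partial_t\W^*(\sqrt N\vp_t)$ and the scalar $\dot S(t)$, and using the mean-field equations \eqref{mf equations}, every scalar term of order $N$ and every linear term of order $\sqrt N$ cancels, \emph{except} the two residuals forced by the $\bX$-dependence of the tracer coupling:
\begin{equation*}
V(\bX,t)\ \defeq\ N\intt\big(\underline w(x,\bX)-\underline w(x,\bX_t)\big)\,|\vp_t(x)|^2\,\d x\ ,\qquad
\mathcal L_{\mathrm{lin}}(t)\ \defeq\ \sqrt N\intt\big(\underline w(x,\bX)-\underline w(x,\bX_t)\big)\big(\overline{\vp_t(x)}\,a_x+\vp_t(x)\,a_x^*\big)\,\d x\ .
\end{equation*}
What is left of $\mathcal L(t)$ is the diagonal $-\Delta_\bX/2N+\intt\underline w(x,\bX)a_x^*a_x\,\d x$, together with the quadratic, cubic and quartic terms in $a,a^*$ produced by $\frac1{2N}\intt v(x-y)a_x^*a_y^*a_xa_y$, of orders $N^0,N^{-1/2},N^{-1}$.

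Next I would introduce a Lyapunov functional $\mathcal M(t)$ built from $\langle\Omega_{N,t},(N_b+1)\Omega_{N,t}\rangle$ and $\tfrac1N\langle\Omega_{N,t},(N_b+1)^2\Omega_{N,t}\rangle$, together with the weighted classical moments $N^{j}\langle\Omega_{N,t},\big(|\bX-\bX_t|^2+|N^{-1}\bP-\dot\bX_t|^2\big)^{j}\Omega_{N,t}\rangle$ for $j=1,2,3$ (possibly augmented by a few cross terms). By Condition~\ref{condition 1}, which controls $\|\,|\bX-\bX_0|^p u_{N,0}\|$ and $\|\,|N^{-1}\bP-\bV_0|^p u_{N,0}\|$ up to $p=3$, i.e.\ up to sixth moments, one has $\mathcal M(0)\lesssim 1$, and the two assertions follow once one proves $\mathcal M(t)\lesssim e^{Ct}N^{1/2}$. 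Differentiating, the Fock contributions $i[\mathcal L(t),N_b]$ and $i[\mathcal L(t),N_b^2]$ coming from the $v$-interaction are handled by the \emph{Particle Number} and \emph{Potential Estimates} and the conserved bound $\|\vp_t\|_{H^1}\lesssim 1$ from Section~\ref{appendix WP mean field}, producing contributions controlled by $\langle(N_b+1)^2\rangle$ up to $N^{-1/2}$-small errors (the Coulomb part of $v$, which forces an extra half power of $N_b$ in the cubic term, is the most delicate here). The decisive term is $\mathcal L_{\mathrm{lin}}(t)=\sqrt N(A_L+A_L^*)$ with $A_L=\intt(\underline w(x,\bX)-\underline w(x,\bX_t))\overline{\vp_t(x)}a_x\,\d x$: by the \emph{Taylor Estimate} $|\underline w(x,\bX)-\underline w(x,\bX_t)|\lesssim|\bX-\bX_t|$, the Particle Number Estimate, and Cauchy–Schwarz in the tracer variable,
\begin{equation*}
\sqrt N\,\big|\big\langle\Omega_{N,t},A_L\,\Omega_{N,t}\big\rangle\big|\ \lesssim\ \sqrt N\,\big\langle|\bX-\bX_t|^2\big\rangle^{1/2}\big\langle N_b\big\rangle^{1/2}\ \leq\ N\big\langle|\bX-\bX_t|^2\big\rangle+\big\langle N_b\big\rangle ,
\end{equation*}
and both right-hand terms are already summands of $\mathcal M$; it is exactly this step — converting a $\sqrt N$ prefactor against a deviation of size $N^{-1/2}$ into an $O(1)$ contribution to $N\langle|\bX-\bX_t|^2\rangle$ — that accounts for the loss relative to the pure-boson case.

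For the classical moments I would use the canonical commutator $[\bP^2,(|\bX-\bX_t|^2)^{j}]$, which generates terms of the type $\langle|\bX-\bX_t|^{2j-1}\,N^{-1}\bP\rangle$, and match it against the explicit time derivatives $\partial_t\bX_t$ and $\partial_t\dot\bX_t$; invoking the mean-field ODEs (Ehrenfest's theorem), the velocity and force terms recombine so that only the differences $\underline w(x,\bX)-\underline w(x,\bX_t)$ and $\nabla_\bX\underline w(x,\bX)-\nabla_\bX\underline w(x,\bX_t)$ survive, again Taylor-bounded by $|\bX-\bX_t|$, after which Cauchy–Schwarz and Young bring every term back into $\mathcal M$ — modulo the lower-order pieces coming from $\intt\nabla_\bX\underline w(x,\bX)a_x^*a_x\,\d x$ and from $[\mathcal L_{\mathrm{lin}},\bP^2]$, which carry extra powers of $N^{-1/2}$ and of $N_b$. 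Collecting everything yields a differential inequality of the form $\dot{\mathcal M}(t)\lesssim \mathcal M(t)+N^{1/2}$, and Grönwall's lemma together with $\mathcal M(0)\lesssim 1$ finishes the argument; passing to the limit in the regularization is routine.

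The main obstacle is closing this coupled system with the correct powers of $N$: the residuals $V$ and $\mathcal L_{\mathrm{lin}}$ make the bosonic particle-number estimate depend on the classical deviations, while $-\Delta_\bX/2N$, $\int\nabla_\bX\underline w\,a^*a$ and $[\mathcal L_{\mathrm{lin}},\bP]$ feed the classical moments back with higher moments of $\bX-\bX_t$ and of $N_b$; one must choose the weights $N^{j}$ and how many moments to include so that the hierarchy truncates and every term produced by $\dot{\mathcal M}$ is dominated by $\mathcal M+N^{1/2}$. This is precisely why Condition~\ref{condition 1} is imposed up to $p=3$ and why $w\in C_b^3(\R^3)$ is required.
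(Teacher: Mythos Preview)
Your overall architecture---a Gr\"onwall argument on a functional that couples boson-number moments with the tracer deviations $|\bX-\bX_t|$, $|N^{-1}\bP-\dot\bX_t|$, and the use of the Taylor estimate to convert the dangerous $\sqrt N$ in $\mathcal L_{\mathrm{lin}}$ into a factor $|\bX-\bX_t|$---is exactly the mechanism the paper uses. Where your proposal diverges from the paper, and where I think it has a real gap, is in the claimed one-step Gr\"onwall $\dot{\mathcal M}\lesssim\mathcal M+N^{1/2}$ with constants ``independent of the truncation parameter,'' followed by a routine removal of the cutoff.

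The paper's proof is genuinely two-layered. First (Proposition~\ref{prop1} and Section~\ref{prop 1 proof section}) one introduces an \emph{auxiliary} truncated generator $\L_M(t)$, obtained not by cutting off $\Ha$---which already commutes with $N_b$---but by inserting $\chi(N_b\le M)$ in front of the interaction $\calI(t)$ of the Weyl-conjugated generator $\L(t)$. For this truncated evolution one proves a \emph{pure} Gr\"onwall inequality, with no source term, on a functional $\G_\Phi$ built from $N^{-3}\langle(N_b+1)^3\rangle$ together with first and third moments of $\Delta\bX$, $\Delta\bV$. The point is that the constants are \emph{not} independent of $M$: the cubic $v$-term and $\L_1$ produce an exponent $C(1+M/N)^2$, which is bounded only for $M\sim N$, and blows up as $M\to\infty$. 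With $M=N$ one obtains $\|(N_b+1)^{3/2}\Omega_{N,t}^{(\ve,N)}\|\lesssim e^{Ct}$, i.e.\ a bound \emph{uniform in $N$} for the truncated state. Second (Subsection~\ref{subsection proof thm}), one compares $\Omega_{N,t}^\ve$ with $\Omega_{N,t}^{(\ve,N)}$ via Duhamel, controls the difference by $\|\chi(N_b>N)\calI(t)\Omega_{N,t}^{(\ve,N)}\|\lesssim e^{Ct}N^{-1/2}$ (Corollary~\ref{corollary truncation control}, which crucially uses the uniform $N_b^{3/2}$ bound just obtained), and combines this with the \emph{weak a priori} estimate $\|N_b^k\Omega_{N,t}^\ve\|\lesssim N^k$ from Lemma~\ref{lemma regularized fluctuation state}. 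It is this comparison that manufactures the $N^{k-1/2}$ growth in \eqref{estimate N}--\eqref{estimate N^2}; no source term in a Gr\"onwall produces it.

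Your functional $\mathcal M$, built from $\langle N_b\rangle$ and $N^{-1}\langle N_b^2\rangle$, cannot close in one step: without the cutoff, the cubic term's commutator with $N_b^2$ calls $\langle N_b^{5/2}\rangle$, forcing you up the hierarchy; with the cutoff at level $M$, the resulting constant carries $(1+\sqrt{M/N})$ and your ``pass to the limit'' becomes exactly the nontrivial comparison above. So the step you flag as ``routine'' is in fact where the $N^{1/2}$ comes from, and it requires the uniform $N_b^{3/2}$ control that your $\mathcal M$ does not contain.
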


\begin{remark}
	Even though we are not able to show that the fluctuation state $\Omega_{N,t }$
	has  particle number moments that are uniformly bounded in $N \geq 1$, we are able to reduce by a factor $N^{1/2}$  the \textit{apriori} growth of these moments. This in turn is enough to prove our main theorems,    the price to pay being having a worse   convergence rate with respect to the initial data. 
\end{remark}

 The  proof of Theorem \ref{thm number estimates} consists in a detailed study of the linear generator that drives the time evolution of $\Omega_{N,t }$; let us then describe it in more detail.  
 The reader should  be aware that we will regularize the problem in a way that the following formal calculations are properly justified--for the moment we focus on the algebraic aspects of the proof.
 
 \vspace{1mm}
 
 First,  note that $\Omega_{N,t } = \U_{N}(t,0) (u_{N,0} \otimes \Omega)$, where we introduce on $\H$ the following two-parameter family of unitary transformations 
 \begin{equation}
 	\U_N (t,s) \defeq e^{- \i S(t)} \W^*(\sqrt N \vp_t ) \exp(- \i (t -s )\Ha )e^{ \i S(s)}  \W(\sqrt N \vp_s ) \, , \quad 
 	t,s \in \R  \, , \,  N \geq 1  \, .
 \end{equation}
 In particular, straightforward differentiation with respect to time $ t$ gives 
 \begin{equation}\label{Omega eq 1}
 	\i \partial_t 
 	\U_N (t,s )
 	= 
 	\big( \, 
 	\W_{N,t}^* \Ha \W_{N,t}
 	\, + \, 
 	\i \partial_t  \W_{N,t}^*\,  \W_{N,t}
 	- \i \, \dot S(t)
 	\, \big) \, 
 	\U_{N}(t,s) =: \L(t) \,  \U_{N}(t,s) 
 \end{equation}
 where we have introduced the short-hand notation 
 $
 \W_{N,t} \defeq \W (\sqrt N \vp_t). 
 $
 In order to get a more explicit representation of the
  linear generator $\L(t)$--defined in \eqref{Omega eq 1}--we proceed as follows. First, since  $(\bX_t , \vp_t )$ satisfies the mean field equations
 \eqref{mf equations}, the time derivative of the Weyl operator may be computed as 
 \begin{align}
 	\i \partial_t \W_{N,t}   &
 	  \  =  \  
 	 [  \mathcal H _{\bX , \vp } (t) ,\W_{N,t}  ] \\
 	\mathcal H _{\bX , \vp } (t) 
 	&   \ =  \ 
 	T_b
 	+
 	\intt 
 	\underline w
 	\big( x ,  \bX_t
 	\big) 
 	\, a_x^* a_x\, \d x
 	+
 	\intt   \big(  v * |\vp_t|^2 \big)(x) \,  a_x^* a_x\, \d x \, ,
 \end{align}
see for reference \cite[Lemma 3.1]{GinibreVelo1979 1}. Moreover, a straightforward, although tedious, application of the translation properties 
 \begin{equation}\label{translation property}
 	\W_{N,t}^* a_x\W_{N,t} = a_x+ \sqrt N  \,\, \vp_t(x)   
 	\quad \mathrm{ and } \quad 
 	\W_{N,t}^* a_x^* \W_{N,t} = a_x^* + \sqrt N \,\,  \overline  {\vp_t }(x) 
 \end{equation}
 in the second-quantized operators of   \eqref{Omega eq 1}, let us cast  the infinitesimal generator of the dynamics  in the form 
 \begin{align}\label{generator L }
 	\L(t) & \ =  \ 
 	- \frac{\Delta_\bX}{ 2N }  
 	\,  +  T_b \, 
 	+
 	\,   \calI(t) 
 	+ \, \calI_4  \, .
 \end{align}
Here, the first two terms correspond to the kinetic energy of the system. 
The third term corresponds to the    time-dependent  operator
 \begin{equation}
 \calI (t) \defeq \L^{(d)} (t) + \L^{(od)}(t)
 \end{equation}
  that contains the \textit{diagonal} and \textit{off-diagonal} interaction terms, given by  
 \begin{align}\label{diagonal off diagonal}
 	\L^{(d )} (t)  									
 	 \  \defeq \ \ 
 	  \nonumber  & 
 	\intt 
 	\big(
 	\underline w ( x  ,  \bX)  -  \underline w  ( x  ,  \bX_t) 
 	\big)  \, 
 	(N |\vp_t(x)|^2
 	-
 	a_x^* a_x) \,\,\, \d x \\
 	&  + \ \     
 	\intt \, 
 	\big( 
 	v* |\vp_t |^2
 	\big)(x) a_x^* a_x \d x 
 	\ \  + \ \    			  
 	\inttt \, 
 	v(x - y )  \overline{\vp_t}(x) \vp_t(y) a_y ^* a_x \, \d x \d y  \, ,  	\\
 	\L^{(od )}(t) \   \defeq \ \   & 									
 	\sqrt N 
 	\, \intt  \, 
 	\big(  
 	\underline w ( x  , \bX)  -  \underline w ( x  ,  \bX_t) 
 	\big) \, 
 	\big(
 	a_x \overline{\vp_t} (x)  +   a_x^* \vp_t(x)				\nonumber
 	\big) \, \d x \,  \\					
 	&  + \ \   
 	\nonumber
 	\frac{1}{2}
 	\, \inttt  \, 
 	v(x - y) 
 	\big( \, 
 	\vp_t(x) \vp_t(y) a_x^* a_y ^* 
 	\, + \, 
 	\overline{\vp_t}(x) \, \overline{\vp_t}(y) a_x a_y 
  \, 	\big) \, 
 	\d x \d y \\
 	& + \  \ 
 	\frac{1}{\sqrt N}
 	\, \inttt  \, 
 	v(x - y ) \, 
 	a_x^* \, 
 	\big(  \, 
 	\vp_t(y) a_y ^* 
 	\ +  \ 
 	\overline{\vp_t} (y)\, a_y 
 	\, \big)
 	 \,  a_x \, \d x \d y \,  . 
 \end{align}
The fourth term in \eqref{generator L }  
 is the time-independent, diagonal quartic term 
 \begin{equation}\label{cuartic term}
 	\calI_4  \  \defeq \   
 	\frac{1}{2N}
 	\, \inttt  \, 
 	v(x - y ) a_x^* a_y ^*  \, a_x a_y  \, \d x \d y \, . 
 \end{equation}
Note that both  the kinetic terms,  $\L^{(d)}(t)$ and $\calI_4$ commute with the particle number operator $N_b$, but  $\L^{ (od)}(t)$  does not. In other words, $[\L(t) , N_b] = [\L^{(od)}(t), N_b]$. 
\vspace{1mm}

The main difficulty of the present article is understanding the first term of $\L^{(  od )}(t)$ as it involves a   $\sqrt{N}$ factor. To control the other two terms, we proceed very similarly as in \cite{Rodianski Schlein 2009}. Namely, we introduce a \textit{truncated  dynamics} whose generator is $\L  (t)$ but with a cut-off in the interaction term; this is described in Subsection \ref{subsection truncated}. First, we regularize the problem at hand.

\subsection{Regularization of the Problem}\label{subsection regularization}
 The purpose of regularizing the problem is twofold: 
 (1) all of the upcoming   calculations  are  rigorously justified, and 
 (2) the proof of well-posedness for the (auxiliary) truncated dynamics is short. 
 In this subsection, 
we show how to regularize both the microscopic dynamics \eqref{microscopic dynamic} and the mean-field equations \eqref{mf equations}.  This in turn is enough to regularize the fluctuation state $\Omega_{N,t }$

\vspace{1mm}

Let us be more precise. 
Consider some initial datum  $\Psi_{N, 0 } = u_{N,0} \otimes \W (\sqrt N \vp_0 )$  satisfying Condition \ref{condition 1}, and let $v = \lambda |x|^{-1} + v_2 $ and $w$ be potentials satisfying Condition \ref{cond 2}.  We will regularize these objects according to the following definition.

\begin{definition}
In what follows, we always assume that $\ve \in (0,1)$. 
	\begin{enumerate}
		\item We let  $w^\ve \in \S(\R^3 , \R  )$   be  a Schwartz class function such that 
		\begin{equation}
		\lim_{\ve \downarrow 0  }    	 \|  w - w^\ve  \|_{C_b^3}  = 0  \, . 
		\end{equation}
		\item We define the bounded potential
		$v^\ve \defeq \lambda (\ve^2  + |x|^2)^{-1/2} + v_2  \in L^\infty (\R^3, \R) $; it satisfies
		\begin{equation}
		\|   v  - v^\ve \|_{L^p} \leq C (p)  \, \ve^{(3 - p ) / p } \, , \qquad p \in [1,3) \, . 
		\end{equation}
		\item We let 
		$ \vp^\ve_0   \in H^2(\R^3 ) $
		be   an   initial datum for the boson field that satisfies
		\begin{equation}
		\lim_{\ve \downarrow 0  }  
		\| \vp_0 - \vp_0^\ve \|_{H^1 }=0 \, .
		\end{equation}
		\item For each $N \geq 1 $, we let $u_{N,0}^\ve \in \calS(\R^3 )$ be an initial datum   that satisfies 
		\begin{equation}\label{regularization un0}
		\lim_{\ve \downarrow 0 } \|  (  |\bX|^3 + |\bP|^3   ) (u_{N,0} - u_{N,0}^\ve )   \|_{L^2}= 0 \, .
		\end{equation}
	\end{enumerate}
\end{definition}

\begin{remark}
For the boson-boson interaction, the pointwise bound 
	$  (\ve^2 + x^2 )^{-1/2}  \leq |x|^{-1 }$ implies that the \textit{Potential Estimates} and the \textit{Local Lipschitz Property} are both satisfied by $v^\ve$, uniformly in $\ve$. 
On the other hand,	since  giving   explicit formulas for the other regularized objects 
	is not enlightening, we refrain to do so.  
\end{remark}

\subsubsection{Regularization of the Microscopic Dynamics}
In the sequel, we refer to the solution 
$\Psi_{N,t}^{\ve } = \exp(- \i t \Ha^\ve ) \Psi_{N, 0 }^\ve $ of the Schr\"odinger equation 
\begin{equation}\label{regularized microscopic dynamic}
\begin{cases}
& \i \partial_t \Psi_{N,t }^{\ve} = \Ha^\ve  \Psi_{N,t }^\ve \\
& \Psi_{N,t }^\ve\big|_{t = 0 }  =   u_{N,0}^\ve \otimes \W(\sqrt N \vp_0^\ve ) \Omega 
\end{cases}
\end{equation}
as the \textit{regularized microscopic dynamics}, where the Hamiltonian $\Ha^\ve$   is    given by \eqref{Hamiltonian} but with $(v^\ve,w^\ve)$ replacing the original potentials    $(v , w)$.  
In particular, $\Psi_{N,t }^\ve  \in \calD( \Ha^\ve)  $ 
for all $ t \geq  0 $ and \eqref{regularized microscopic dynamic} holds in the strong sense. 

\vspace{1mm}

The next result establishes the validity of the approximation when $\ve \downarrow 0 $. Since the proof is of rather technical character, we postpone it to   Appendix \ref{appendix regularization}. 

\begin{lemma}\label{lemma regularized microscopic}
	Let $\Psi_{N,t }$ and $\Psi_{N,t }^\ve $ be the microscopic dynamics and the regularized microscopic dynamics, respectively.  Then, for all $N \geq 1$ and $ t \geq 0 $  it holds that 
	\begin{equation}
	\lim_{ \ve \downarrow 0 }
	\|  \Psi_{N,t } - \Psi_{N,t }^\ve  \|_{\H }  =0 \, .
	\end{equation}
\end{lemma}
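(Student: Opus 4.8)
The plan is to prove Lemma \ref{lemma regularized microscopic} by a standard Duhamel/Gronwall argument comparing the two evolution groups $\exp(-\i t \Ha)$ and $\exp(-\i t \Ha^\ve)$, while controlling the difference of the initial data separately. Writing $\Delta_0^\ve \defeq \Psi_{N,0} - \Psi_{N,0}^\ve$ and using unitarity of $\exp(-\i t\Ha)$,
\begin{align*}
\|\Psi_{N,t} - \Psi_{N,t}^\ve\|_\H
&\leq \|\Delta_0^\ve\|_\H
+ \big\| \big(e^{-\i t \Ha} - e^{-\i t \Ha^\ve}\big)\Psi_{N,0}^\ve\big\|_\H \, .
\end{align*}
The first term tends to $0$ as $\ve\downarrow 0$: by Condition \ref{condition 1}(2) and the regularization of $\vp_0$ together with the continuity of $f\mapsto \W(\sqrt N f)\Omega$ in $L^2$ (which follows from the Weyl relations and the identity $\|\W(g)\Omega - \W(h)\Omega\|^2 = 2 - 2 e^{-\|g-h\|^2/4}\re\langle \W(g)\Omega,\W(h)\Omega\rangle$-type estimates), the bosonic factors converge in $\F_b$; and $\|u_{N,0} - u_{N,0}^\ve\|_{L^2}\to 0$ by \eqref{regularization un0}. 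Since $N$ is fixed, no uniformity in $N$ is needed here, which simplifies matters considerably.

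For the second term, I would apply Duhamel's formula in the form
\begin{align*}
\big(e^{-\i t \Ha} - e^{-\i t \Ha^\ve}\big)\Psi_{N,0}^\ve
= -\i \int_0^t e^{-\i(t-s)\Ha}\,(\Ha - \Ha^\ve)\, e^{-\i s \Ha^\ve}\Psi_{N,0}^\ve\, \d s \, ,
\end{align*}
valid because $\Psi_{N,0}^\ve \in \calD(\Ha^\ve)$ (stated in the excerpt) so that $e^{-\i s\Ha^\ve}\Psi_{N,0}^\ve$ stays in the common domain $H_\bX^2 \otimes \calD(T_b) \cap \calD(\1\otimes N_b^3)$ per the domain characterization \eqref{domain characterization}. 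Taking norms,
\begin{align*}
\big\|\big(e^{-\i t \Ha} - e^{-\i t \Ha^\ve}\big)\Psi_{N,0}^\ve\big\|_\H
\leq \int_0^t \big\|(\Ha - \Ha^\ve)\,\Psi_{N,s}^\ve\big\|_\H\, \d s \, .
\end{align*}
The operator difference $\Ha - \Ha^\ve$ consists of a boson-tracer piece $\int (\underline w - \underline w^\ve)(x,\bX)\, a_x^* a_x\,\d x$ and a boson-boson piece $\frac{1}{2N}\int\!\!\int (v - v^\ve)(x-y)\,a_x^* a_y^* a_x a_y\,\d x\d y$. Using the \textit{Particle Number Estimates} the first is bounded by $\lesssim \|w - w^\ve\|_{L^\infty}\,\|N_b \Psi_{N,s}^\ve\|$, and using the \textit{Potential Estimates} with $\|v - v^\ve\|_{L^p}\to 0$ for $p\in[1,3)$ the second is bounded by $\lesssim \|v - v^\ve\|_{(\text{suitable norm})}\,\|(N_b+1)^2\Psi_{N,s}^\ve\|$ (the relevant estimate being the boson-boson domain bound quoted in the remark after \eqref{domain characterization}). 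Both prefactors vanish as $\ve\downarrow 0$ by the defining properties of $w^\ve, v^\ve$.

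Thus the remaining point—and the only genuine obstacle—is a uniform-in-$\ve$ bound on $\|(N_b+1)^2\Psi_{N,s}^\ve\|_\H$ and on the $H_\bX^2\otimes\calD(T_b)$ norm of $\Psi_{N,s}^\ve$, on compact time intervals, for fixed $N$. I would obtain this by a Gronwall argument on the $\ve$-dynamics itself: compute $\frac{\d}{\d s}\|(N_b+1)^2\Psi_{N,s}^\ve\|^2$, commute $(N_b+1)^2$ through $\Ha^\ve$ — noting $N_b$ commutes with every term of $\Ha^\ve$ since $\Ha^\ve$ is diagonal on Fock space, so in fact $\|(N_b+1)^2\Psi_{N,s}^\ve\| = \|(N_b+1)^2\Psi_{N,0}^\ve\|$ is conserved — and this conserved quantity is finite and $\ve$-uniformly bounded because $\Psi_{N,0}^\ve = u_{N,0}^\ve\otimes\W(\sqrt N\vp_0^\ve)\Omega$ is a coherent state whose particle-number moments depend continuously on $\|\vp_0^\ve\|_{L^2} = \|\vp_0\|_{L^2}=1$. (More carefully, $\|(N_b+1)^2 \W(\sqrt N \vp_0^\ve)\Omega\|$ is an explicit polynomial in $N\|\vp_0^\ve\|^2$, hence bounded uniformly in $\ve$.) The control of the kinetic norm $\|T_b\Psi_{N,s}^\ve\|$ and $\|\Delta_\bX\Psi_{N,s}^\ve\|$ — needed only to justify that the Duhamel integrand lies in the right space and is $s$-continuous, not to produce a small factor — follows from energy conservation for $\Ha^\ve$ combined with the boson-boson form bound and the $N_b$-moment control just established; since this is a fixed-$N$, fixed-$\ve$ qualitative statement it does not require sharp constants. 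Assembling: the Duhamel integrand is $\leq o_\ve(1)\cdot C(N,\text{compact }t)$, the time integral stays bounded on compacts, and together with $\|\Delta_0^\ve\|_\H\to 0$ we conclude $\|\Psi_{N,t} - \Psi_{N,t}^\ve\|_\H\to 0$ for each fixed $N$ and $t$.
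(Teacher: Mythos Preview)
Your overall strategy---split into initial-data difference plus evolution-group difference, then Duhamel---matches the paper's, and your treatment of the initial data and of the boson-tracer piece is fine. The gap is in the boson-boson (quartic) piece. You claim
\[
\Big\|\tfrac{1}{2N}\!\int\!\!\!\int (v-v^\ve)(x-y)\,a_x^*a_y^*a_xa_y\,\Psi_{N,s}^\ve\Big\|
\ \lesssim\ \|v-v^\ve\|_{(\text{suitable norm})}\,\|(N_b+1)^2\Psi_{N,s}^\ve\|\,,
\]
citing the domain bound after \eqref{domain characterization}. But that bound reads $\lesssim \|T_b\Phi\| + N^{-2}\|(1+N_b)^3\Phi\|$, with the implicit constant depending on the Hardy-type constant $|v-v^\ve|$ from the operator inequality \eqref{operator inequality}. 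Since $v-v^\ve \notin L^\infty$ (it behaves like $\lambda|x|^{-1}$ near the origin) and the pointwise bound $|v-v^\ve|\le |\lambda|\,|x|^{-1}$ is sharp there, that constant does \emph{not} vanish as $\ve\downarrow 0$; nor can you drop the $T_b$ term and keep only powers of $N_b$. So as written, your Duhamel integrand carries no small factor on the quartic piece.

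The paper handles this differently: it first restricts to finitely many particle sectors via $\chi(N_b\le n_0)$ (the tail $\chi(N_b>n_0)\Psi_{N,0}$ is small for $n_0$ large), and on each $n$-sector estimates the difference in the \emph{quadratic form} sense,
\[
\big|\langle\psi_n(s),(v-v^\ve)(x_1-x_2)\,\psi_n^\ve(s)\rangle\big|
\ \lesssim\ \|v-v^\ve\|_{L^{3/2}}\,\|\psi_n(s)\|_{H^1}\|\psi_n^\ve(s)\|_{H^1}\,,
\]
via H\"older and the Sobolev embedding $H^1(\R^3)\hookrightarrow L^6(\R^3)$. Here $\|v-v^\ve\|_{L^{3/2}}\to 0$ genuinely, and only $H^1$ control of $\psi_n,\psi_n^\ve$ (for each fixed $n$, uniformly in $\ve$) is needed---this sidesteps both the non-vanishing Hardy constant and any need for uniform $\|T_b\,\cdot\,\|$ control on the full Fock space. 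Your strong-Duhamel route could be salvaged, e.g.\ by using $\|v-v^\ve\|_{L^2}\to 0$ together with $H^2(\R^3)\hookrightarrow L^\infty(\R^3)$ sector by sector, but that requires uniform-in-$\ve$ propagation of $H^2$-type norms mixed with $N_b$-powers, which is substantially more than ``energy conservation'' gives and more than you sketched.
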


\subsubsection{Regularization of the Mean-field Equations}
In the sequel, we consider a regularized pair of mean-field variables $(\bX_t^\ve, \vp_t^\ve )$ satisfying the coupled equations 
\begin{align}\label{regularized mf equations}
\begin{cases}
\ddot \bX^{ \ve }_t
=
- \intt  \, \nabla_{\bX}
\underline w^{ \ve }
\big(
x   ,   \bX^{  \ve  }_t
\big)
|\vp^{ \ve }_t (x)|^2\,\d x     \\
\i \partial_t \vp_t^{\ve}  = -    \Delta \vp_t^{\ve}     + 
\underline w^{ \ve }
\big( 
x ,   \bX^{\ve }_t 
\big)
\vp_t^{ \ve} 
+
\big( v^{\ve} * |\vp^{\ve}_t|^2 \, \big) \, \vp^{\ve}_t \\
( \bX^{ \ve}_{t = 0 } , \dot \bX^{ \ve }_{t = 0 }, \vp^{ \ve }_0 ) 
=
( 
\bX_0 , \bV_0 , \vp_0^{\ve} ) \, , 
\end{cases} 
\end{align}
from now on refered to as the \textit{regularized mean-field equations}. In particular, as proven in Section \ref{appendix WP mean field}, the regularized boson field $\vp_t^\ve$ remains in $H^2(\R^3 )$ for all later times
and \eqref{regularized mf equations} holds in the strong sense. 

\vspace{1mm}

The following result is the analogous of Lemma \ref{lemma regularized microscopic}. For the same reasons, its proof gets postponed to Appendix \ref{appendix regularization}.

\begin{lemma}\label{lemma regularized mf}
	Let $(\bX_t, \vp_t )$ and $(\bX_t^\ve, \vp_t^\ve )$ be the mean-field variables and regularized mean-field variables, respectively. Then, it holds that  for all $ t \geq 0 $ 
	\begin{equation}
	\lim_{ \ve \downarrow 0 } 
	\big( 	|  \bX_t - \bX_t^\ve  | + |  \dot \bX_t - \dot \bX_t^\ve | + 
	\| \vp_t - \vp_t^\ve  \|_{H^1 } \big) 
	= 0 \, .
	\end{equation}
\end{lemma}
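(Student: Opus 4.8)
The plan is a Grönwall argument on an arbitrary compact interval $[0,T]$, comparing \eqref{mf equations} and \eqref{regularized mf equations} through their Duhamel formulations. Set $\delta_\ve(t)\defeq|\bX_t-\bX_t^\ve|+|\dot\bX_t-\dot\bX_t^\ve|+\|\vp_t-\vp_t^\ve\|_{H^1}$. The first step would be to record a priori bounds that are \emph{uniform} in $\ve\in(0,1)$: because $\|w-w^\ve\|_{C_b^3}\to0$ and $\|\vp_0-\vp_0^\ve\|_{H^1}\to0$, and the \textit{Potential Estimates} hold for $v^\ve$ uniformly in $\ve$, the conserved energies of the two systems at time $0$ stay bounded uniformly in $\ve$; combined with conservation of mass ($\|\vp_t\|_{L^2}=1$ and $\|\vp_t^\ve\|_{L^2}=\|\vp_0^\ve\|_{L^2}$, a bounded quantity) and the sub-quadratic $H^1$-bound on the boson-boson interaction energy furnished by the \textit{Potential Estimates}, the well-posedness theory of Section \ref{appendix WP mean field} then yields a constant $C_T$ with
\begin{equation*}
\sup_{0\le t\le T}\Big(\|\vp_t\|_{H^1}+\|\vp_t^\ve\|_{H^1}+|\bX_t|+|\bX_t^\ve|+|\dot\bX_t|+|\dot\bX_t^\ve|\Big)\ \le\ C_T\ ,\qquad\forall\,\ve\in(0,1)\,.
\end{equation*}
In particular $\|w^\ve\|_{C_b^3}$ stays bounded as $\ve\downarrow0$, so the \textit{Taylor Estimates} \eqref{taylor estimate} also hold for $w^\ve$ and $\nabla w^\ve$ with an $\ve$-independent constant.

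For the boson field, since the Schr\"odinger group $e^{\i t\Delta}$ is unitary on $H^1(\R^3)$, Duhamel's formula gives
\begin{equation*}
\|\vp_t-\vp_t^\ve\|_{H^1}\ \le\ \|\vp_0-\vp_0^\ve\|_{H^1}+\int_0^t\big\|\underline w(\cdot,\bX_s)\vp_s-\underline w^\ve(\cdot,\bX_s^\ve)\vp_s^\ve\big\|_{H^1}\,\d s+\int_0^t\big\|J(\vp_s)-J^\ve(\vp_s^\ve)\big\|_{H^1}\,\d s\,,
\end{equation*}
with $J^\ve(\vp)\defeq(v^\ve*|\vp|^2)\vp$. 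I would bound each integrand by a three-term splitting: replace $\underline w$ by $\underline w^\ve$, then $\bX_s$ by $\bX_s^\ve$, then $\vp_s$ by $\vp_s^\ve$ (and similarly $v$ by $v^\ve$ inside $J$). Using $\|fg\|_{H^1}\lesssim\|f\|_{C_b^1}\|g\|_{H^1}$, the $C_b^3$-closeness of $w$ and $w^\ve$, the \textit{Taylor Estimates} for $w^\ve$, the bound $C_T$, and the \textit{Local Lipschitz Property} \eqref{lipschitz} for $J^\ve$ (valid uniformly in $\ve$), the boson-tracer term together with the Lipschitz part of the nonlinearity are bounded, up to a constant, by $C_T\big(\|w-w^\ve\|_{C_b^3}+\delta_\ve(s)\big)$. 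The one delicate contribution is the regularization error $(J-J^\ve)(\vp_s)=\big((v-v^\ve)*|\vp_s|^2\big)\vp_s$, for which only $H^1$ control of $\vp_s$ is available; here I would keep the derivative on $\vp_s$, writing $\nabla|\vp_s|^2=2\,\re(\overline{\vp_s}\,\nabla\vp_s)\in L^q(\R^3)$ for $q\in[1,3/2]$ with norm $\lesssim\|\vp_s\|_{H^1}^2$, and then balance Lebesgue exponents through Young's inequality and the $3$-dimensional Sobolev embeddings $H^1\hookrightarrow L^r$, $r\in[2,6]$, using that $v-v^\ve\in L^p$ for $p\in[3/2,3)$ with norm $O(\ve^{(3-p)/p})$. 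This gives $\|(J-J^\ve)(\vp_s)\|_{H^1}\lesssim C_T\,\eta(\ve)$ with $\eta(\ve)\to0$.

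For the tracer particles I would integrate Newton's equation and subtract:
\begin{equation*}
\dot\bX_t-\dot\bX_t^\ve\ =\ -\int_0^t\int_{\R^3}\Big(\nabla_\bX\underline w(x,\bX_s)\,|\vp_s(x)|^2-\nabla_\bX\underline w^\ve(x,\bX_s^\ve)\,|\vp_s^\ve(x)|^2\Big)\,\d x\,\d s\,.
\end{equation*}
The same three-term splitting --- now using $\|\nabla w-\nabla w^\ve\|_{L^\infty}\le\|w-w^\ve\|_{C_b^3}$, the \textit{Taylor Estimate} for $\nabla w^\ve$, $\|\vp_s\|_{L^2}=1$ with $\|\vp_s^\ve\|_{L^2}$ bounded, and $\big\||\vp_s|^2-|\vp_s^\ve|^2\big\|_{L^1}\le\|\vp_s-\vp_s^\ve\|_{L^2}\big(\|\vp_s\|_{L^2}+\|\vp_s^\ve\|_{L^2}\big)$ --- yields $|\dot\bX_t-\dot\bX_t^\ve|\lesssim C_T\int_0^t\big(\eta(\ve)+\delta_\ve(s)\big)\,\d s$, and since $\bX_t-\bX_t^\ve=\int_0^t(\dot\bX_s-\dot\bX_s^\ve)\,\d s$ the same bound holds for $|\bX_t-\bX_t^\ve|$. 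Adding the three contributions, $\delta_\ve(t)\le C_T\,\tilde\eta(\ve)+C_T\int_0^t\delta_\ve(s)\,\d s$ on $[0,T]$, with $\tilde\eta(\ve)\defeq\eta(\ve)+\|w-w^\ve\|_{C_b^3}+\|\vp_0-\vp_0^\ve\|_{H^1}\to0$; Gr\"onwall's inequality then gives $\delta_\ve(t)\le C_T\,\tilde\eta(\ve)\,e^{C_T t}$, which vanishes as $\ve\downarrow0$ for each fixed $t\in[0,T]$, proving the lemma.

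The hard part is the combination of two points. First, the $\ve$-uniform a priori bounds rely on the convergence of the conserved energies as $\ve\downarrow0$ and on controlling the Coulomb part of the boson-boson interaction energy by $\|\vp\|_{L^2}$ and $\|\nabla\vp\|_{L^2}$ to a power $<2$. Second, the regularization error $\big((v-v^\ve)*|\vp_s|^2\big)\vp_s$ has to be estimated in $H^1$ with only $H^1$ regularity of $\vp_s$ available; the way around this is to transfer the derivative onto $\vp_s$ rather than onto the singular kernel and then balance Lebesgue exponents via Young's inequality and the $3$-dimensional Sobolev embedding. Everything else is a routine triangle-inequality splitting closed by Gr\"onwall.
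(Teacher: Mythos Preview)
Your proposal is correct and follows essentially the same route as the paper: uniform-in-$\ve$ $H^1$ bounds from conservation laws, Duhamel for the boson field, a triangle-inequality splitting handled by the \textit{Local Lipschitz Property} and \textit{Taylor Estimates}, the $(v-v^\ve)$ regularization error controlled via Leibniz, Young, and the Sobolev embedding $H^1\hookrightarrow L^6$ with $\|v-v^\ve\|_{L^{3/2}}\lesssim\ve$, and then Gr\"onwall. The only cosmetic difference is the order of your three-term splitting (you peel off $(v-v^\ve)$ on $\vp_s$, whereas the paper does it on $\vp_s^\ve$ after applying Lipschitz with $v$), which is immaterial since the Lipschitz constant is uniform in $\ve$.
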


\subsubsection{Regularization of the Fluctuation State}
In terms of the regularized microscopic dynamics $ \Psi_{N,t }^\ve$ and the regularized mean-field variables $(\bX_t^\ve , \vp_t^\ve)$, we define the 
\textit{regularized fluctuation state}  as  
\begin{equation} \label{regularized fluctuation state}
\Omega_{N,t }^\ve  \defeq 
e^{ - i S^\ve (t)}
\W^* (\sqrt N \vp_t^\ve)
\exp( - \i t \Ha^\ve)
\W (\sqrt N \vp_0^\ve)
(u_{N,0}^\ve \otimes \Omega)\, , 
\quad   
 N \geq 1 \, ,  \  t \in \R \, .
\end{equation}
Here, $S^\ve(t)$ is the regularized version of the scalar term $S(t)$ defined in \eqref{def scalar term}. The following result establishes the validity of the approximation $\ve \downarrow 0 $
\begin{lemma}\label{lemma regularized fluctuation state}
	Let $ t \geq 0  $,  $ N \geq1$ and $k \in \N $. Then, the following holds:
	
	\vspace{1mm}

	(1) $ \ \lim_{ \ve \downarrow 0 } 
	\|   \Omega_{N,t} - \Omega_{N,t }^\ve      \|_\H = 0 \, .  $ 
	
	\vspace{1mm}
	
	(2) $  \| N_b^k \Omega_{N,t }^\ve   \| + 
\| N_b^k \Omega_{N,t }  \|  \lesssim N^k
	$ uniformly in $t \in \R$ and $\ve \in (0,1)$ . 
    
	\vspace{1mm}
	
	(3) $   \lim_{ \ve \downarrow 0 }  \|  N_b^{ k /2 }  (   \Omega_{N,t }  - \Omega_{N,t }^\ve)  \|_\H 
	 = 0 \, . $
\end{lemma}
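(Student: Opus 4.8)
The plan is to establish the three items in the order (2), (1), (3), since (1) will invoke the uniform moment bound of (2), and (3) is an interpolation of (1) against (2). For item (2), the structural input is that $\Ha$, and likewise $\Ha^\ve$, is diagonal with respect to $\F_b$ and hence commutes with $N_b$; thus $\|N_b^j\Psi_{N,t}\|=\|N_b^j\Psi_{N,0}\|$ and $\|N_b^j\Psi_{N,t}^\ve\|=\|N_b^j\Psi_{N,0}^\ve\|$ for all $j\in\N$ and all $t$. Since a coherent state obeys $\|N_b^j\W(\sqrt N f)\Omega\|\lesssim(1+N\|f\|_{L^2}^2)^j$ and $\|\vp_0\|_{L^2}=\|u_{N,0}\|_{L^2}=1$, while (after the harmless normalization $\|\vp_0^\ve\|_{L^2}=\|u_{N,0}^\ve\|_{L^2}=1$) the same holds for the regularized data, we obtain $\|N_b^j\Psi_{N,t}\|+\|N_b^j\Psi_{N,t}^\ve\|\lesssim N^j$, uniformly in $t\geq0$ and $\ve\in(0,1)$. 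To pass from $\Psi$ to $\Omega$, I would move $N_b^k$ through the outer Weyl operator using the translation identities \eqref{translation property} (valid with any $L^2$ field): this produces a sum of terms each dominated by $N^{k-j}\|(N_b+1)^j\Phi\|$ with $0\leq j\leq k$, provided the Weyl argument has $L^2$ norm $\leq1$. Since the mean-field $L^2$ norm is conserved, $\|\vp_t\|_{L^2}=\|\vp_t^\ve\|_{L^2}=1$, so taking $\Phi=\Psi_{N,t}$ and $\Phi=\Psi_{N,t}^\ve$ (the unimodular phases $e^{-\i S(t)},e^{-\i S^\ve(t)}$ being irrelevant) and using the moment bounds just established gives $\|N_b^k\Omega_{N,t}\|+\|N_b^k\Omega_{N,t}^\ve\|\lesssim N^k$.

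For item (1), I would write $\Omega_{N,t}-\Omega_{N,t}^\ve$ as a telescoping sum of three differences, replacing one ingredient at a time. First, the scalar phase: $|e^{-\i S(t)}-e^{-\i S^\ve(t)}|\leq|S(t)-S^\ve(t)|$, and by the Potential Estimates the integrand of \eqref{def scalar term} is locally Lipschitz in $(\bX_s,\vp_s)\in\R^{3m}\times H^1(\R^3)$ and continuous in $w$ (in $C_b^0$) and in $v$ (in $L^1+L^\infty$); hence Lemma \ref{lemma regularized mf} together with $\|w-w^\ve\|_{C_b^3}\to0$ and $\|v-v^\ve\|_{L^p}\to0$ forces $S^\ve(t)\to S(t)$. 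Second, the outer Weyl operator: by the strong-continuity estimate $\|(\W(f)-\W(g))\Phi\|\lesssim\|f-g\|_{L^2}(1+\|f\|_{L^2}+\|g\|_{L^2})\|(N_b+1)^{1/2}\Phi\|$ (see \cite[Lemma 2.2]{Rodianski Schlein 2009}), applied with $f=\sqrt N\vp_t^\ve$, $g=\sqrt N\vp_t$, $\Phi=\Psi_{N,t}$, this difference is $\lesssim N\|\vp_t-\vp_t^\ve\|_{L^2}\,\|(N_b+1)^{1/2}\Psi_{N,t}\|$, which vanishes as $\ve\downarrow0$ by item (2) and Lemma \ref{lemma regularized mf}. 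Third, the inner data and the evolution: $e^{-\i t\Ha^\ve}\W(\sqrt N\vp_0^\ve)(u_{N,0}^\ve\otimes\Omega)=\Psi_{N,t}^\ve$ and $e^{-\i t\Ha}\W(\sqrt N\vp_0)(u_{N,0}\otimes\Omega)=\Psi_{N,t}$, so the remaining difference is $\Psi_{N,t}^\ve-\Psi_{N,t}$, which tends to zero by Lemma \ref{lemma regularized microscopic}. Summing the three bounds proves (1).

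Item (3) follows by interpolating (1) against (2). By Cauchy--Schwarz in the spectral decomposition of $N_b$, one has $\|N_b^{k/2}\Phi\|^2=\langle\Phi,N_b^k\Phi\rangle\leq\|\Phi\|\,\|N_b^k\Phi\|$ for every $\Phi\in\calD(N_b^k)$. Taking $\Phi=\Omega_{N,t}-\Omega_{N,t}^\ve$ — which lies in $\calD(N_b^k)$ by item (2) — and bounding $\|N_b^k(\Omega_{N,t}-\Omega_{N,t}^\ve)\|\leq\|N_b^k\Omega_{N,t}\|+\|N_b^k\Omega_{N,t}^\ve\|\lesssim N^k$, we get $\|N_b^{k/2}(\Omega_{N,t}-\Omega_{N,t}^\ve)\|^2\lesssim N^k\,\|\Omega_{N,t}-\Omega_{N,t}^\ve\|$, which tends to $0$ as $\ve\downarrow0$ for fixed $N$ by item (1).

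The only genuinely delicate bookkeeping is in item (2) (and in the Weyl-continuity step of item (1)): commuting powers of $N_b$ past Weyl operators whose argument carries a factor $\sqrt N$ generates powers of $N$ that must be matched, term by term, against the $N_b$-moments of the coherent initial state, and the uniformity in $\ve$ rests on the boundedness of $\|\vp_0^\ve\|_{L^2}$ and $\|u_{N,0}^\ve\|_{L^2}$ along the regularization. Everything else is soft — conservation of $N_b$ under the (diagonal) microscopic flow, conservation of the mean-field $L^2$ norm, and continuity of $\ve\mapsto(w^\ve,v^\ve,\vp_0^\ve,u_{N,0}^\ve,\bX_t^\ve,\vp_t^\ve)$ in the relevant norms, which is exactly the content of Lemmas \ref{lemma regularized microscopic} and \ref{lemma regularized mf}.
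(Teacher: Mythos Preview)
Your proof is correct and follows essentially the same approach as the paper's: the same telescoping decomposition for (1), the same translation-property argument for (2), and the identical Cauchy--Schwarz interpolation for (3). The only minor difference is that for the outer Weyl term in (1) you invoke a quantitative continuity bound $\|(\W(f)-\W(g))\Phi\|\lesssim\|f-g\|(1+\|f\|+\|g\|)\|(N_b+1)^{1/2}\Phi\|$ (hence your need to establish (2) first), whereas the paper simply cites the qualitative strong continuity $f\in L^2\mapsto\W(f)\Phi$ from \cite{GinibreVelo1979 1}, which requires no moment on $\Phi$ and makes (1) independent of (2).
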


\begin{proof}
(1) Since $ \| u_{N,0} \|_{L_\bX^2 }$ and $ \| \W (\sqrt N \vp_0^\ve ) \Omega     \|_{\F_b}$ are uniformly bounded in $N \geq $ and $\ve \in (0,1 )$ one obtains that 
\begin{equation}
\| \Omega_{N,t } - \Omega_{N,t }^\ve  \|_\H  
\lesssim  
\|  \big(   \W(\sqrt N \vp_t )  - \W(\sqrt N \vp_t^\ve)   \big) \Psi_{N,t } \|_\H  
+ 
\| \Psi_{N,t } - \Psi_{N,t }^\ve \|_\H  
+
|S(t) - S^\ve(t)| \, .
\end{equation}
It is known   \cite[Lemma 3.1]{GinibreVelo1979 1} that  the map 
$f \in L^2 \mapsto \W(f) \Phi   $ is continuous for every $ \Phi   \in \F_b$. Since $\vp_t^\ve \rightarrow \vp_t$ in $L^2$, this implies that  the first term vanishes in the limit. 
The $\ve \downarrow 0 $ 
limit of the second term  is zero, thanks to  Lemma \ref{lemma regularized microscopic}.
Finally, since $\vp_t^\ve \rightarrow \vp_t $ in $H^1 $, one may easily adapt the proof of Lemma \ref{lemma regularized mf}
to show that   $S(t) = \lim_{ \ve \downarrow 0 } S^\ve(t)$. Thus, 
$\Omega_{N,t } = \H - \lim_{ \ve \downarrow 0 } \Omega_{N,t }^\ve$. 

\noindent (2) 
The proof may be   adapted from the \textit{weak bounds} presented in \cite[Lemma 3.6]{Rodianski Schlein 2009}; one only needs to use the translation property \eqref{translation property} of the Weyl operator $\W(\sqrt N  \vp_t )$ , 
combined with the uniform-in-time    growth $\|  N_b^k\Psi_{N,t } \| \lesssim N^k $. 

\noindent (3)
The Cauchy-Schwarz inequality implies 
\begin{equation}
\| N_b^{k/2} (  \Omega_{N,t } - \Omega_{N,t}^\ve   )  \|^2_\H 
\leq 
 \big( 
 \|  N_b^k \Omega_{N,t }  \|_\H  
 +
 \|  N_b^k \Omega_{N,t }^\ve  \|_\H     
 \big)
  \|  \Omega_{N,t }  -  \Omega_{N,t }^\ve   \|_\H \, .
\end{equation}
It suffices to apply the first two parts of the lemma. 
\end{proof}

\subsection{The Truncated Dynamics}\label{subsection truncated}
Let $M \geq 1$, from now on refered to as the \textit{cut-off parameter}. Given  a solution $ (  \bX_t^\ve , \vp_t^\ve   )$ of the mean-field equations, we introduce  on $\H$ the auxiliary time-dependent  operator  
 \begin{align}\label{truncated generator}
\L_M^\ve (t) & 
 \ \defeq   \ 
- \frac{\Delta_\bX}{ 2N } 
 \, + \, 
  T_b \, 
+ \, 
\chi( N_b \leq M   ) \calI^\ve (t)  
+ 
\calI_4^\ve  \, .
\end{align}
Here, the new $\ve$-dependent operators are defined according to the  formulae
\eqref{generator L }--\eqref{cuartic term}, but with the regularized potentials and 
mean-field variables replacing  the old ones, i.e. we change 
$$
(v, w , \bX_t , \vp_t ) \mapsto  (v^\ve , w^\ve , \bX_t^\ve , \vp_t^\ve ) \, .
$$
We shall refer to $\calI_M^\ve (t) \defeq   \chi(N_b \leq M)  \calI^\ve (t)$ as the \textit{truncated} interaction term; we also refer to $\L_M^\ve (t)$ as the \textit{truncated} generator. As will become clear later on, the particle number cut-off will let us ``close equations"  when trying to run a Gr\"onwall-type argument. In particular, estimates will be given in terms of both $M$ and $N$ and  we will choose $M = N $ at the end of the calculations. Let us note that besides $\L_M^\ve (t)$ one could have chosen different auxiliary operators. For instance, one could have considered introducing more or less terms with a cut-off in particle number. In particular, we decide not to introduce a  cut-off   in the quartic term because we do 
not have good operator bounds on $\calI_4$ ($v^\ve $ is    not  bounded uniformly in $\ve $)--it is, however, diagonal and time independent. 

\vspace{1mm}

In Section \ref{section WP truncated}, we prove that to the truncated generator  we can associate a unique unitary propagator\footnote{A strongly continuous, two-parameter family of unitary transformations   $(t,s) \in \R^2 \mapsto U(t,s) \in \mathcal B (\H )$ is said to be a \textit{unitary propagator}, if $U(t,s) = U(t,r)U(r,s)$ and  $U(t,t) = \1$.}
 that satisfies 
\begin{align}
& \i \partial_t  \, \U_{N}^{( \ve, M)} (t,s )
= 
\L_M^\ve (t)  \, \U_{N}^{( \ve, M)}(t,s)
\quad
\t{and}
\quad  
\U_{N}^{(\ve, M)}  (t,t )= \1  
\, , \quad \forall t ,s \in \R \, .
\end{align} 
We show that  this equation is well-posed and establish basic results related to   propagation of regularity--see Propositon \ref{proposition propagator} and \ref{corollary WP truncated}. 
Our main ingredients are the uniform boundedness of the interaction term, together
with the map $t \in \R  \mapsto  \calI_M^\ve (t) \in \mathcal B (\H)$ being locally Lipschitz; see Lemma \ref{lemma lipschitz}.

\vspace{1mm}
For the rest of the section, we focus on  the case where the inital condition that is being evolved satisfies (up to regularization) Condition \ref{condition 1}, i.e. we introduce 
\begin{equation}
\Omega_{N,t }^{ (\ve, M) } 
\defeq
 \U_{N}^{ (\ve, M) } (t, 0) 
 (u_{N,0}^\ve  \otimes \Omega )\, ,
\qquad 
t \geq0 \, , \ N,M \geq  1 \, , \ \ve \in (0,1 ) \, . 
\end{equation}
One of our most important estimates is contained in the following proposition;  it  is concerned with particle number estimates for $\Omega_{N,t }^{(\ve, M)}$.  Note that, if one chooses $M = N $, we obtain   estimates for the third particle number moment that are  \textit{uniform} in $ N \geq 1$.

\begin{proposition}\label{prop1}
	Let $\Omega_{N,t }^{(\ve, M)}$  be as above, with $(u_{N,0}^\ve)_{N\geq 1 }$ 
	being regularized initial data, satisfying Condition \ref{condition 1}. Then, there is $C>0$ such that for all $t \geq 0 $ and $ M, N \geq 1 $ it holds that  
	\begin{align}				
\limsup_{\ve\downarrow 0}	\|   N_b^{3/2} \Omega_{N,t }^{(\ve, M)}    \|  
	&  \ \leq \ C 									\label{prop1 eq 1}
 \exp\big(   C  \,  t  \,    (1 +    M/N )^2 \,   \big) \, ,    \\
  	\limsup_{\ve\downarrow 0} \|   N_b^{2} \Omega_{N,t }^{(\ve, M)}    \|  
 &  \ \leq \  
C \,    \sqrt N \,  \exp\big(   C  \,  t  \,    (1 +    M/N )^2  \,   \big)    \, . 				\label{prop1 eq 2}
	\end{align}
\end{proposition}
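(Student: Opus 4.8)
\textbf{Proof strategy for Proposition \ref{prop1}.}

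The plan is to run a Gr\"onwall argument on the fourth-root-regularized quantities $f_j(t) \defeq \| N_b^{j/2} \Omega_{N,t}^{(\ve,M)} \|^2$ for $j = 1, 2, 3, 4$, using the fact that $[\L_M^\ve(t), N_b] = [\L^{(od),\ve}_M(t), N_b]$ so that only the off-diagonal interaction contributes to the time derivative. Differentiating, one gets $\partial_t f_j(t) = 2 \im \langle \Omega_{N,t}^{(\ve,M)}, N_b^j \L_M^\ve(t) \Omega_{N,t}^{(\ve,M)}\rangle = 2\im \langle \Omega_{N,t}^{(\ve,M)}, [N_b^j, \L^{(od),\ve}_M(t)] \Omega_{N,t}^{(\ve,M)}\rangle$, where the commutator with $N_b^j$ can be expanded in the standard way (each creation/annihilation pair shifting $N_b$ by at most $2$). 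The three pieces of $\L^{(od),\ve}_M(t)$ must be treated separately: the $\frac{1}{\sqrt N}$-cubic term is harmless (it is of lower order and controlled by the \textit{Particle Number Estimates} and \eqref{v estimate 1}), the Bogoliubov-type quadratic term $v(x-y)(\vp\vp\, a^*a^* + \bar\vp\bar\vp\, aa)$ is controlled via \eqref{v estimate 3} and contributes an $O(1)$ rate, and the dangerous term is
\begin{equation}
\sqrt N \intt \chi(N_b \leq M)\big(\underline w^\ve(x,\bX) - \underline w^\ve(x,\bX_t^\ve)\big)\big(a_x \overline{\vp_t^\ve}(x) + a_x^* \vp_t^\ve(x)\big)\, \d x\, .
\end{equation}

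The heart of the matter is absorbing the $\sqrt N$ in this term. The key idea (following \cite{deckert et al}) is to use the \textit{Taylor Estimate} \eqref{taylor estimate} to bound $|\underline w^\ve(x,\bX) - \underline w^\ve(x,\bX_t^\ve)| \lesssim |\bX - \bX_t^\ve|$, and then to control $|\bX - \bX_t^\ve|$ simultaneously in the same Gr\"onwall scheme. That is, I would enlarge the system of differential inequalities to include the moments $g_p(t) \defeq \| |\bX - \bX_t^\ve|^p \Omega_{N,t}^{(\ve,M)}\|^2$ (and the companion velocity moments $\| |N^{-1}\bP - \dot\bX_t^\ve|^p \Omega_{N,t}^{(\ve,M)}\|^2$, needed because $\partial_t g_p$ picks up a term with $N^{-1}\bP - \dot\bX_t^\ve$ via Ehrenfest/the kinetic term $-\Delta_\bX/2N$). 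One then estimates, by Cauchy--Schwarz,
\begin{equation}
\sqrt N\, \big|\langle \Omega_{N,t}^{(\ve,M)}, [N_b^j, (\text{dangerous term})]\,\Omega_{N,t}^{(\ve,M)}\rangle\big| \lesssim \sqrt N\, \| |\bX - \bX_t^\ve|\, N_b^{(j-1)/2}\Omega_{N,t}^{(\ve,M)}\| \cdot \|N_b^{j/2}\Omega_{N,t}^{(\ve,M)}\| \, ,
\end{equation}
and then splits the first factor using Cauchy--Schwarz a second time into $\sqrt N\,\| |\bX-\bX_t^\ve|^2 \Omega\|^{1/2}\|N_b^{j-1}\Omega\|^{1/2}$ or, more efficiently, interpolates so the $\sqrt N$ is matched against the $N^{-1}$ decay of $g_2(t)$ guaranteed by Condition \ref{condition 1} at $t = 0$ and propagated in time. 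The cut-off $\chi(N_b \leq M)$ enters precisely here: it lets one replace $N_b^{j/2}$ by $M^{1/2} N_b^{(j-1)/2}$ on the relevant factor, which is what produces the $(1 + M/N)^2$ in the exponent once the bookkeeping with $\sqrt N \cdot \sqrt M / N = \sqrt{M/N}$ is done. All of this is justified rigorously because, after regularization, $\Omega_{N,t}^{(\ve,M)}$ lies in the domain of all powers of $N_b$ and of $\bX, \bP$ by the well-posedness results of Section \ref{section WP truncated}, and the generator $\L_M^\ve(t)$ is bounded; the $\limsup_{\ve\downarrow 0}$ in the statement is then taken at the very end.

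Concretely, the steps in order: (i) set up the closed system of differential inequalities for $\{f_j\}_{j\leq 4}$ together with $\{g_p\}_{p\leq 3}$ and the velocity moments, using $[\L_M^\ve(t), N_b] = [\L^{(od),\ve}_M(t), N_b]$; (ii) estimate the contributions of the cubic and Bogoliubov terms, obtaining rates of the form $C(1 + f_j + f_{j+1})$ (here the regularity $\vp_t^\ve \in H^2$ and the uniform-in-$\ve$ \textit{Potential Estimates} for $v^\ve$ are used); (iii) estimate the contribution of the dangerous term, pairing $\sqrt N$ against $|\bX - \bX_t^\ve|$ and against $M^{1/2}$ via the cut-off, yielding a coupling between the $f_j$ and the $g_p$ with coefficient $O(\sqrt{M/N}\,)$; (iv) estimate $\partial_t g_p$, where the main contributions come from the kinetic term $-\Delta_\bX/(2N)$ (producing the velocity moments) and from the $\sqrt N$ off-diagonal term again (producing the $f_j$'s back) --- this is what makes the loop close; (v) invoke Condition \ref{condition 1} to bound the initial data $g_p(0) \lesssim N^{-p}$, $f_j(0) \lesssim 1$ (the latter from $\Omega_{N,0}^{(\ve,M)} = u_{N,0}^\ve \otimes \Omega$, so all particle-number moments vanish at $t=0$), and run Gr\"onwall on the combined vector to extract the stated bounds with the $\exp(Ct(1+M/N)^2)$ growth; (vi) take $\limsup_{\ve\downarrow 0}$, using Lemma \ref{lemma regularized fluctuation state} and the uniform bounds therein. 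The main obstacle is step (iii)--(iv): getting the $\sqrt N$ to cancel requires the coupling between the number moments and the position moments to be set up so that no net positive power of $N$ survives, and checking that the cut-off $\chi(N_b \leq M)$ is inserted in exactly the right places (and that the error from $[\chi(N_b \leq M), a_x^\#]$ is either zero or lower order) is the delicate bookkeeping that the rest of Section \ref{prop 1 proof section} presumably carries out in full.
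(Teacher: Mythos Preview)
Your overall strategy is right and matches the paper: couple particle-number moments to the tracer position/velocity moments in one Gr\"onwall system, use the Taylor estimate to trade the dangerous $\sqrt N$ for $|\bX-\bX_t^\ve|$, and take $\ve\downarrow 0$ at the end. The execution, however, differs in two substantive ways.

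First, the paper does \emph{not} absorb the $\sqrt N$ via Cauchy--Schwarz plus the cut-off. It packages everything into a single quantity $\G_\Phi(t)$ built from the \emph{third} powers $\|\Delta X_k^{(i)}(t)^3\Phi_t\|^2$, $\|\Delta V_k^{(i)}(t)^3\Phi_t\|^2$ (weight $1$), the first powers (weight $N^{-2}$), and $\|(N_b+1)^{3/2}\Phi_t\|^2$, $\|(N_b+1)^{1/2}\Phi_t\|^2$ (weight $N^{-3}$). The mixed term $N\||\Delta\bX(t)|(N_b+1)\Psi\|^2$ coming from the dangerous piece is then decoupled by an \emph{Interpolation Lemma} (Lemma \ref{lemma interpolation}): for self-adjoint $A,B$ with $\i[A,B]=\lambda$, one has $\|A^2B\Psi\|^2+\|ABA\Psi\|^2+\|BA^2\Psi\|^2\lesssim\|A^3\Psi\|^2+\|B^3\Psi\|^2+\lambda^2(\|A\Psi\|^2+\|B\Psi\|^2)$. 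With $A=|\Delta\bX(t)|$, $B=N^{-1/2}(N_b+1)^{1/2}$, $\lambda=0$ this yields $N\||\Delta\bX|(N_b+1)\Psi\|^2\lesssim\|(N_b+1)^{3/2}\Psi\|^2+N^3\||\Delta\bX|^3\Psi\|^2$, which lands inside $N^3\G_\Phi(t)$; the same lemma with $A=\Delta X_k$, $B=\Delta V_k$, $\lambda=N^{-1}$ handles the mixed terms in the evolution of the cubic position/velocity moments. Your double Cauchy--Schwarz $\||\Delta\bX|N_b\Omega\|\leq\||\Delta\bX|^2\Omega\|^{1/2}\|N_b^2\Omega\|^{1/2}$ instead couples $f_3$ to $f_4$ and $g_2$, and then $f_4$ to still higher $N_b$-moments; as written it does not close, and the cut-off cannot rescue it because $\chi(N_b\leq M)$ sits in the generator, not on the state $\Omega_{N,t}^{(\ve,M)}$ itself.

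Second, your attribution of the $(1+M/N)^2$ factor is off. In the paper the cut-off plays no role in taming the $\sqrt N$ term; the $M/N$ factors enter through (i) the Rodnianski--Schlein commutator estimate for the $\L_2,\L_3$ pieces, and chiefly (ii) the evolution of $\|\Delta V_k(t)^3\Phi_t\|^2$, where differentiating produces force errors $\Delta F_k(t)=\i[\F_M(t),N^{-1}P_k]-\ddot X_k(t)$ containing $\chi(N_b\leq M)N^{-1/2}(a\bar\vp+a^*\vp)$ and $\chi(N_b\leq M)N^{-1}a^*a$; the Particle Number Estimates on these give $\sqrt{M/N}$ and $M/N$ (Lemmas \ref{lemma 2}, \ref{lemma 3}). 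Finally, \eqref{prop1 eq 2} is not run inside the same Gr\"onwall: it follows \emph{a posteriori} from a separate differential inequality for $\|(N_b+1)^2\Phi_t\|^2$ (part $(a_3)$ of Lemma \ref{lemma 1}), using boundedness of $w$ rather than a Taylor estimate, which is why it carries the extra $\sqrt N$.
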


\begin{remark}
The proof of Proposition \ref{prop1}  contains  the most intricate  technical work of  this article. This is due to the fact that the
time  evolution of powers of the particle number operator $N_b$ get intertwined with powers of the tracer particle observables $\bX$ and $\bP$; we postpone  its proof and dedicate    Section \ref{prop 1 proof section} to it. 	
\end{remark}

One of the main consequences of Proposition \ref{prop1} is that it allows to control the difference between the original and the truncated evolution. The following result is the first step in this direction. 
\begin{corollary}\label{corollary truncation control}
	Let $\Omega_{N,t }^{(\ve, M)}$ be as in Proposition \ref{prop1}. 
	Then, there is $C>0$ such that for all $ t \geq0 $,   and $ N  = M  \geq  1  $ it holds that 
	\begin{equation}
\limsup_{\ve\downarrow 0}
	\|  \chi(N_b  >  N  )  \,  \calI^\ve (t) \,  \Omega_{N,t }^{(\ve, N)}   \|
	\ \leq \ 
	\frac{C \, e^{C t } }{\sqrt N }
	 \, .
	\end{equation}
\end{corollary}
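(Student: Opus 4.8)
The plan is to estimate $\|\chi(N_b>N)\,\calI^\ve(t)\,\Omega_{N,t}^{(\ve,N)}\|$ by splitting $\calI^\ve(t)=\L^{(d),\ve}(t)+\L^{(od),\ve}(t)$ into its diagonal and off-diagonal pieces, and then, term by term, using the \textit{Particle Number Estimates}, the \textit{Potential Estimates}, and the boundedness of the regularized potentials to bound each contribution by a constant times $N^{a}\|\chi(N_b>N)(N_b+1)^{b}\Omega_{N,t}^{(\ve,N)}\|$ for suitable $a,b$. The key trick, which is what makes the whole estimate work, is that on the range of $\chi(N_b>N)$ we have the operator inequality $\chi(N_b>N)(N_b+1)^{b}\le N^{b-3/2}\chi(N_b>N)(N_b+1)^{3/2}\le N^{b-3/2}(N_b+1)^{3/2}$ (since $N_b+1>N$ there), so every power of $N_b$ beyond $3/2$ that appears can be traded for a power of $N$ in the denominator, and the leftover is controlled by $\|(N_b+1)^{3/2}\Omega_{N,t}^{(\ve,N)}\|$, which is $O(1)$ in $N$ for $M=N$ by the first estimate \eqref{prop1 eq 1} of Proposition \ref{prop1}.

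Concretely, first I would handle the genuinely dangerous term, the $\sqrt N$-term
\[
\sqrt N\,\intt\big(\underline w^\ve(x,\bX)-\underline w^\ve(x,\bX_t^\ve)\big)\big(a_x\overline{\vp_t^\ve}(x)+a_x^*\vp_t^\ve(x)\big)\,\d x .
\]
Here the first \textit{Particle Number Estimate} gives a bound by $\sqrt N\,\|w^\ve\|_{C_b^1}\,|\bX-\bX_t^\ve|\,\|(N_b+1)^{1/2}\chi(N_b>N)\Omega^{(\ve,N)}_{N,t}\|$ after using the \textit{Taylor Estimate} $|\underline w^\ve(x,\bX)-\underline w^\ve(x,\bX_t^\ve)|\lesssim|\bX-\bX_t^\ve|$ (uniformly in $x$, using that $\vp_t^\ve$ has unit $L^2$ norm for the $x$-integral). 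On $\chi(N_b>N)$ we bound $(N_b+1)^{1/2}\le N^{-1}(N_b+1)^{3/2}$, producing a factor $\sqrt N\cdot N^{-1}=N^{-1/2}$ times $\|(N_b+1)^{3/2}\Omega^{(\ve,N)}_{N,t}\|$; the factor $|\bX-\bX_t^\ve|$ is a bounded operator of norm $O(1)$ once one recalls (from Condition \ref{condition 1} and the tracer-particle localization estimates established in Section \ref{proof of main results}, e.g. Lemma \ref{lemma 4}) that $\bX$ applied to the relevant state stays $O(1)$-close to $\bX_t^\ve$ — but since we only need an $N^{-1/2}$ bound and $\bX-\bX_t^\ve$ contributes at worst an $O(1)$ factor here, it can also simply be absorbed using the crude a-priori bound $\|\,|\bX-\bX_t^\ve|\,\Omega^{(\ve,N)}_{N,t}\|\lesssim 1$. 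The remaining off-diagonal terms (the $v$-quadratic and the $N^{-1/2}$ cubic term) and the diagonal terms $\L^{(d),\ve}(t)$ are less delicate: for each, the \textit{Particle Number} and \textit{Potential Estimates} give a bound of the form $C\,N^{c}\,\|(N_b+1)^{d}\chi(N_b>N)\Omega^{(\ve,N)}_{N,t}\|$ with $c-(d-3/2)\le -1/2$ in every case (the cubic term carries the explicit $N^{-1/2}$ prefactor and $d=3/2$; the quadratic $v$-terms have $d=1$; the diagonal $w$-term with the $N|\vp_t^\ve|^2-a_x^*a_x$ factor contributes an $N$ times $|\bX-\bX_t^\ve|$ which is again $O(1)$ times $\|(N_b+1)\chi(N_b>N)\Omega^{(\ve,N)}_{N,t}\|\lesssim N^{-1/2}\|(N_b+1)^{3/2}\Omega^{(\ve,N)}_{N,t}\|$).

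Collecting all the pieces, one obtains
\[
\|\chi(N_b>N)\,\calI^\ve(t)\,\Omega^{(\ve,N)}_{N,t}\|\ \lesssim\ \frac{1}{\sqrt N}\,\big(1+|\bX-\bX_t^\ve|\big)\,\|(N_b+1)^{3/2}\Omega^{(\ve,N)}_{N,t}\| ,
\]
and taking $\limsup_{\ve\downarrow 0}$ and invoking \eqref{prop1 eq 1} with $M=N$ (which makes the right-hand side $\lesssim e^{Ct}/\sqrt N$) finishes the proof. The main obstacle is purely bookkeeping: making sure that in the $\sqrt N$ off-diagonal term and in the $N|\vp^\ve_t|^2$ diagonal term one really does gain the full $N^{-1}$ from $\chi(N_b>N)(N_b+1)^{1/2}\le N^{-1}(N_b+1)^{3/2}$ (rather than only $N^{-1/2}$ from one power), so that the surviving power of $N$ is $N^{-1/2}$ and not $N^{0}$; this is exactly why the estimate is stated with the $e^{Ct}/\sqrt N$ rate and why Proposition \ref{prop1} was proved for the $3/2$ moment. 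One should also be slightly careful that all the constants are uniform in $\ve\in(0,1)$, which holds because $\|w^\ve\|_{C_b^3}$, $\|\vp_t^\ve\|_{H^1}$ and the \textit{Potential Estimates} constant $|v^\ve|$ are all uniformly bounded, as noted after the regularization definition.
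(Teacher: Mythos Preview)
There is a genuine gap in your treatment of the $\bX$-dependent terms. You Taylor-expand $\underline w^\ve(x,\bX)-\underline w^\ve(x,\bX_t^\ve)$ and thereby introduce the \emph{unbounded} multiplication operator $|\bX-\bX_t^\ve|$, and then dispose of it by asserting it is ``a bounded operator of norm $O(1)$'' or by invoking Lemma~\ref{lemma 4}. Neither works: $|\bX-\bX_t^\ve|$ is not a bounded operator, and Lemma~\ref{lemma 4} is proved \emph{after} Theorem~\ref{thm number estimates}, which itself relies on the present corollary, so the appeal is circular. The ``crude a-priori bound'' $\|\,|\bX-\bX_t^\ve|\,\Omega^{(\ve,N)}_{N,t}\|\lesssim 1$ would not help either, because what you actually need is the mixed quantity $\|\,|\bX-\bX_t^\ve|\,(N_b+1)^{1/2}\chi(N_b>N)\Omega^{(\ve,N)}_{N,t}\|$, and your final displayed inequality is not even well-formed: its right-hand side contains the operator $|\bX-\bX_t^\ve|$ multiplying a scalar norm. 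The same problem arises in your handling of the diagonal $N|\vp_t^\ve|^2$ term, where you claim that ``$N$ times $|\bX-\bX_t^\ve|$ is again $O(1)$''.

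The fix is much simpler than what you attempt: do \emph{not} Taylor-expand here. Since $w^\ve$ is bounded, $|\underline w^\ve(x,\bX)-\underline w^\ve(x,\bX_t^\ve)|\le 2m\|w^\ve\|_{L^\infty}$ uniformly, and this is all that is needed. The $\sqrt N$ off-diagonal term then yields $C\sqrt N\,\|(N_b+1)^{1/2}\chi(N_b>N\pm 1)\Phi\|\le C\sqrt N\cdot N^{-1}\|(N_b+1)^{3/2}\Phi\|$, and the diagonal $N|\vp_t^\ve|^2$ term yields $CN\,\|\chi(N_b>N)\Phi\|\le CN\cdot N^{-3/2}\|(N_b+1)^{3/2}\Phi\|$; both are $O(N^{-1/2})\|(N_b+1)^{3/2}\Phi\|$ with no tracer-particle operator left over. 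This is exactly how the paper proceeds, arriving at the clean estimate $\|\chi(N_b>N)\calI^\ve(t)\Phi\|\lesssim N^{-1/2}\|(N_b+1)^{3/2}\Phi\|$ before invoking Proposition~\ref{prop1}. The Taylor expansion of $w$ is essential in Section~\ref{prop 1 proof section} (to close the Gr\"onwall scheme for $\G_\Phi$), but it is unnecessary---and, as you have seen, counterproductive---here.
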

 
 \begin{proof}
For  simplicity, let us introduce the following  notation $\chi_N \equiv \chi (N_b > N )$ for the projection operator. Without loss of generality we take $N \geq 2 $ and $t \geq 0 $; they shall remain fixed. Moreover, since  the upcoming estimates are uniform in $\ve \in (0,1)$, we shall not write it explicitly. 
Finally, we let  $\Phi $  denote a generic element of $L_\bX^2 \otimes \calD( N_b^2)$ and we  let $C>0$ denote   a  constant, independent of these quantities, whose value may  change from line to line.

First,     we recall that $\calI(t) = \L^{(d)} (t) + \L^{(od)}(t)$. For the diagonal part, we easily find  
 thanks to our 
 \textit{Particle Number Estimates} and the \textit{Potential Estimates} that 
\begin{align}
\|   
\chi_N \, \L^{(d)} (t) \Phi 
\|			\nonumber 
&  = 
\|   
 \L^{(d)} (t)
\chi_N 
  \Phi 
\| \\ 
&  \ \leq  \ 
C 
\Big(
N    \|    \chi_N   
\Phi   \|
 \, + \, 
\| (N_b + 1 )  \chi_N 
\Phi   \| 
\Big) \, .
\end{align}
Next, for the non-diagonal term we use the relations $\chi_N a_x = a_x \chi_{N -1 }$ and 
$
\chi_Na_x^* = a_x^* \chi_{ N - 1}
$
to find, thanks again to our
 \textit{Particle Number Estimates} and the \textit{Potential Estimates}, that 
\begin{align}
\|   
\chi_N \, \L^{(od)} (t) \Phi 
\|			\nonumber 
 \ \leq  \  & 
C 
\sqrt N     \, 
\Big(  
   \|  (N_b + 1 )^{1/2 }   \chi_{N  -1 }      \Phi   \| 			+			 \|   (N_b + 1 )^{1/2 }  \chi_{ N +1 }      \Phi   \| 
 \Big)   
  \\
   &  + \ 
C  \,
\Big(
\| (N_b + 1 )  \chi_{ N -2 } 
\Phi   \|
+
\| (N_b + 1 )  \chi_{ N + 2 }			\nonumber 
\Phi   \|
\Big) \\
&  + \ 
\frac{C}{ \sqrt N }
\Big(
\| (N_b + 1)^{3/2} \chi_{N - 1 } \Phi   \|
+
\| (N_b + 1)^{3/2} \chi_{ N +1 } \Phi   \|
\Big)\, . 
\end{align}
In order to control the intermediate terms involving $N_b^{ j} $ for $  0  \leq j \leq 3/2  $ we use the following inequalities: 
\begin{align}
\|  (N_b + 1)^j \, \chi (N_b > M ) \, \Phi   \|
 \leq 
 \frac{1}{(M + 1 )^{ J   - j }} 
\|  (N_b + 1 )^{ J  } \Phi  \| \, , 
\qquad 0 \leq j \leq J \,  
\end{align}
for $    M = N-2  \, , \cdots \,  ,  N + 2  $ and $ J  = 3/2$. Gathering estimates we find that  
\begin{equation}
\| 
\chi( N_b > N ) \calI(t) \Phi 
\|
 \ \leq  
\frac{C}{\sqrt { N   } }
\|  (N_b + 1)^{3/2} \Phi \| \, .
\end{equation}
The last inequality for $\Phi = \Omega_{N,t }^{(N)}$, combined with Proposition \ref{prop1}, finishes the proof. 
 \end{proof}

\subsection{Proof of Theorem \ref{thm number estimates}}\label{subsection proof thm}
Starting from Propositon \ref{prop1} (particle number estimates for the \textit{truncated dynamics}), we can sacrifice growth in $N$ and prove 
Theorem \ref{thm number estimates}
(particle number estimates  for the  \textit{original dynamics}).

\begin{proof}[Proof of Theorem \ref{thm number estimates}]
Let  $N  \geq 1 $, $\ve \in (0,1 )$ and $t \geq 0$.  
The following proof is heavily inspired by that of  \cite[Lemma 3.7]{Rodianski Schlein 2009};
	the idea is as follows:  pick  $ k  \in [0,2] $  and write  
\begin{align}
\| N_b^{ k /2 } 
\Omega_{N,t }^\ve
 \|^2			\nonumber 
&  \, = \, 
 \| N_b^{ k /2 }
 \Omega_{N,t }^{(\ve , N )}
\|^2
+
\big\langle  N_b^k
\Omega_{N,t }^\ve ,  \big(  \Omega_{N,t }^\ve - \Omega_{N,t }^{(\ve , N )} \big)  
\big\rangle 
+
\big\langle   N_b^k
\Omega_{N,t }^{(\ve, N )} ,  \big(  \Omega_{N,t }^\ve - \Omega_{N,t }^{(\ve , N )} \big)  
\big\rangle  
 \, , \\
 & 
  \, \lesssim \, 
   \| N_b^{ k /2 }
  \Omega_{N,t }^{(\ve , N )}
  \|^2 
  + 
  \big( 
  \| N_b^{ k }
  \Omega_{N,t }^{(\ve , N )} \|  
+
\| N_b^{ k  } 
\Omega_{N,t }^\ve
\|
  \big) \,      \|  \Omega_{N,t }^\ve - \Omega_{N,t }^{(\ve , N )}  \|  \, .				\label{proof thm 3 eq 1}
\end{align}
We  control  each term separately. Since $ k  \leq 2 $, the first one may be controlled using Proposition \ref{prop1}. 
For the second one, we use Lemma \ref{lemma regularized fluctuation state} and Proposition \ref{prop1}
to obtain 
\begin{equation}\label{apriori N estimates}
 \|	N_b^{k} \Omega_{N,t }^\ve		\| 
 + 
\|	N_b^{k} \Omega_{N,t }^{(N, \ve )}	\| 
\lesssim \exp(Ct )
(   N^k +    N^{1/2}    )   \,  ,  \quad \ t \in \R  \,  , \ N \geq 1 \,  , \  \ve \in (0,1) \,   . 
\end{equation}
In addition,  we may use Duhamel's formula to find that 
\begin{equation}
    \|  \Omega_{N,t }^\ve - \Omega_{N,t }^{(\ve , N )}  \| 
\leq 
\int_0^t 
 \|
 \big( \L^\ve(s )  - \L_N^\ve   (s )  \big) \Omega_{N,s}^{(\ve , N )}
 \| 
 \,  \d s 
    \leq 
    \int_0^t 
  \|     \chi(N_b > N ) \calI^\ve(s) \Omega_{N,s}^{(\ve , N )}  \|  \,   \d s  \, ,
\end{equation}
which, in combination with Corollary \ref{corollary truncation control}, controls the last term in  \eqref{proof thm 3 eq 1}. Thus, we gather estimates and find that 
\begin{equation}\label{regularized Nb estimate}
\limsup_{\ve \downarrow 0 }
\| N_b^{k/2} \Omega_{N,t }^\ve  \|^2
 \, \lesssim \, 
\exp(Ct)
\Big (
1 
+
 N^{ k - 1/2 }
\Big)  \, .
\end{equation}
Thus, the proof of the theorem is complete after we use Lemma \ref{lemma regularized fluctuation state} and 
  select $k=1$ and $k =2 $ in the last estimate. 
\end{proof}

\section{Proof of  our Main Results}\label{proof of main results}
Throughout this section, the assumptions made in Theorem \ref{theorem 1} hold and we keep on  using the same notation from the previous section. In particular, 
$\Omega_{N,t }$ denotes  
the fluctuation state defined in \eqref{fluctuation state} and,  thanks to 
Theorem \ref{thm number estimates}, 
the following  particle number estimates hold
\begin{align}\label{number estimates}
 \frac{ \| N_b \Omega_{N,t } \| }{N} 
\ + \ 
  \frac{ \| N_b^{1/2} \Omega_{N,t } \| }{N^{1/2}}  
 & \ \leq \  
 \frac{C \,  e^{C t} }{N^{1/4}}  
\end{align}
for some constant $C>0$, independent of $N \geq 1 $ and $t\geq 0 $. 

\vspace{1mm}

For $ t \geq 0 $, let us introduce the following   operators to take into account the classical-quantum error associated to the tracer particle variables
\begin{equation}
\Delta \bX (t) \defeq \bX  -\bX_t   \quad \text{ and } \quad   \Delta \bV (t) = N^{ -1 } \bP   - \dot \bX_t      \ . 
\end{equation}
We note that propagation of regularity for the microscopic state 
$\Psi_{N,t} = e^{- \i t \Ha} \Psi_{N,0}$ follows from standard techniques. In particular, 
$\Psi_{N,t } \in \calD( \bX^3) \cap \calD(\bP^3)  $ for all $ t\geq 0$.

\vspace{1mm}
Let us  start with the following   estimate, from which Theorem \ref{theorem 1} follows easily  after elementary manipulations.  

\begin{lemma}\label{lemma 4}
	There exists $C>0$ such that for all  $ t\geq 0 $ and $ N \geq  1$  the following inequality  holds
\begin{equation}\label{lemma 4 equation}
\textstyle 
\| \Delta \bX (t) \Psi_{N,t }  \|^2  
+
\|  \Delta \bV(t) \Psi_{N,t }  \|^2 
  \ \leq \  
 C \exp(Ct ) N^{ -1 /2 }\, . 
 \end{equation}	
\end{lemma}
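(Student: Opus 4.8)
The plan is to derive a coupled differential inequality for the two quantities $f(t) \defeq \|\Delta\bX(t)\Psi_{N,t}\|^2$ and $g(t) \defeq \|\Delta\bV(t)\Psi_{N,t}\|^2$, together with the particle number quantity $\|N_b^{1/2}\Omega_{N,t}\|^2/N$ controlled by Theorem \ref{thm number estimates}, and then close it with Gr\"onwall. First I would compute $\frac{d}{dt}\langle\Psi_{N,t},|\bX-\bX_t|^2\Psi_{N,t}\rangle$. The time derivative of $\bX_t$ is classical and produces $-2\dot\bX_t\cdot\langle\Psi_{N,t},(\bX-\bX_t)\Psi_{N,t}\rangle$; the commutator $[\Ha,\bX\otimes\1]$ produces (using $[-\Delta_\bX/2N,\bX]=-\i\bP/N$) a term $\frac{2}{N}\re\langle\Psi_{N,t},(\bX-\bX_t)\cdot(\bP-N\dot\bX_t)\Psi_{N,t}\rangle$, i.e. essentially $2\re\langle\Delta\bX\Psi_{N,t},\Delta\bV\Psi_{N,t}\rangle$, which is $\leq f(t)+g(t)$ by Cauchy--Schwarz. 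So $\dot f \lesssim f + g$.

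Next I would differentiate $g(t)=\langle\Psi_{N,t},|N^{-1}\bP-\dot\bX_t|^2\Psi_{N,t}\rangle$. Here $\ddot\bX_t = -\int\nabla_\bX\underline w(x,\bX_t)|\vp_t(x)|^2\,\d x$ from the mean-field equations, while $[\Ha,\bP\otimes\1]$ only sees the interaction term and gives $-\int\nabla_\bX\underline w(x,\bX)\otimes a_x^*a_x\,\d x$. Thus $\frac{d}{dt}(N^{-1}\bP-\dot\bX_t)$ acting inside the expectation yields, up to the $1/N$, the operator $\int\big(\nabla_\bX\underline w(x,\bX_t)|\vp_t(x)|^2 - N^{-1}\nabla_\bX\underline w(x,\bX)\,a_x^*a_x\big)\d x$. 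I would split this difference as
\begin{align*}
\intt\big(\nabla_\bX\underline w(x,\bX_t)-\nabla_\bX\underline w(x,\bX)\big)|\vp_t(x)|^2\,\d x
\ +\ \frac{1}{N}\intt\nabla_\bX\underline w(x,\bX)\big(N|\vp_t(x)|^2-a_x^*a_x\big)\d x \, .
\end{align*}
The first piece is bounded by the \textit{Taylor Estimate} \eqref{taylor estimate}, contributing $\lesssim |\Delta\bX(t)|$ and hence, after pairing with $\Delta\bV(t)\Psi_{N,t}$, a term $\lesssim f(t)^{1/2}g(t)^{1/2}\leq f(t)+g(t)$. The second piece is where the fluctuation state enters: conjugating by $\W(\sqrt N\vp_t)$ and using the translation property \eqref{translation property}, the operator $N|\vp_t(x)|^2-a_x^*a_x$ becomes $-\sqrt N(\overline{\vp_t}(x)a_x+\vp_t(x)a_x^*)-a_x^*a_x$ acting on $\Omega_{N,t}$, so by the \textit{Particle Number Estimates} \eqref{particle number estimates} and $\|w\|_{C^3_b}<\infty$ its norm is $\lesssim \sqrt N\|(N_b+1)^{1/2}\Omega_{N,t}\| + \|N_b\Omega_{N,t}\|$; dividing by $N$ and invoking \eqref{estimate N}--\eqref{estimate N^2} gives $\lesssim N^{-1/4}e^{Ct}$. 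Pairing with $\Delta\bV(t)\Psi_{N,t}$ (Cauchy--Schwarz) yields a contribution $\lesssim g(t)^{1/2}N^{-1/4}e^{Ct}\lesssim g(t)+N^{-1/2}e^{Ct}$.

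Putting the two computations together gives the system $\dot f(t)+\dot g(t)\lesssim f(t)+g(t)+N^{-1/2}e^{Ct}$ with $f(0)+g(0)\lesssim N^{-1}$ by part (1) of Condition \ref{condition 1} applied with $p=1$ (note $\|\,|\bX-\bX_0|u_{N,0}\|^2\leq C_0^2/N$, and similarly for the velocity, using that $\Phi_{N,0}$ is normalized). Gr\"onwall's inequality then yields $f(t)+g(t)\lesssim e^{Ct}N^{-1/2}$, which is exactly \eqref{lemma 4 equation}. I expect the main obstacle to be the rigorous justification of the formal differentiations: all commutator manipulations should be performed on the regularized dynamics $\Psi_{N,t}^\ve$ (where $\Psi_{N,t}^\ve\in\calD(\Ha^\ve)$ and enough moments of $\bX,\bP$ are finite), with the conjugation-by-Weyl step producing the regularized fluctuation state $\Omega_{N,t}^\ve$; one then passes to the limit $\ve\downarrow 0$ using Lemmas \ref{lemma regularized microscopic}, \ref{lemma regularized mf} and \ref{lemma regularized fluctuation state}, together with the uniform-in-$\ve$ particle number bounds, to recover the inequality for the unregularized objects. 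A secondary point requiring care is that the estimate $\|N_b^{1/2}\Omega_{N,t}\|^2\lesssim e^{Ct}N^{1/2}$ is only an \emph{a priori} $N^{1/2}$ gain over the trivial bound $\lesssim N$, which is precisely what produces the non-optimal $N^{-1/4}$ rate rather than $N^{-1/2}$.
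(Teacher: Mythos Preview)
Your proposal is correct and follows essentially the same route as the paper: differentiate $\|\Delta\bX(t)\Psi_{N,t}\|^2$ and $\|\Delta\bV(t)\Psi_{N,t}\|^2$, split the force error into a Taylor-controlled piece and a fluctuation piece handled via the Weyl translation property and the particle number estimates of Theorem~\ref{thm number estimates}, then close with Gr\"onwall and justify rigorously by regularization. The paper's proof is organized identically, including the same splitting (in the opposite order) and the same appeal to the $\ve$-regularized dynamics for the formal differentiations.
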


\begin{proof}
Let us first assume that  $(t\mapsto \Psi_{N,t }) \in C^1(\R, \H) \cap C(\R , \calD(\Ha ))$. We start with the difference in position; straightforward differentiation and the Cauchy-Schwarz inequality gives 
\begin{align}\label{lemma 4 eq 1}
\frac{d}{dt}
\| \Delta \bX (t) \Psi_{N,t }  \|^2  
 \, = \, 
2 \< \Psi_{N,t } , \Delta \bX(t) \cdot \Delta \bV(t) \Psi_{N,t } \>
 \, \leq  \, 
\| \Delta \bX (t) \Psi_{N,t }  \|^2   
+
\| \Delta \bV (t) \Psi_{N,t }  \|^2   \, .
\end{align}
The difference in velocity is a little more intricate. First, differentiation with respect to time and the Cauchy-Schwarz inequality gives
\begin{align}
\frac{d}{dt}		    
\| \Delta   &  \bV   (t)   		 \Psi_{N,t }  \|^2  \\
&  \, = \, 
2  \, 							\nonumber 
\big\langle 
\Psi_{N,t } , 
\Delta \bV(t)   \cdot 
\Big(   
		 \intt \nabla_\bX  \underline w (x ,  \bX)
		  \frac{a_x^* a_x}{N}  \,  \d x 
		 	 - 
	 \intt \nabla_\bX   \underline w(x, \bX_t  ) 
	   |\vp_t(x)|^2   \,  \d x  
	\Big)
\Psi_{N,t } 
\big\rangle \\ 
& \, \leq \,
\| \Delta \bV (t) \Psi_{N,t }  \|^2   
+ 
\Big\| 
\Big(   
\intt    \nabla_\bX  \underline w (x ,  \bX)  \frac{a_x^* a_x}{N}  \,  \d x 
- 
\intt   \nabla_\bX  \underline w (x, \bX_t  )   |\vp_t(x)|^2   \,  \d x  
\Big)
\Psi_{N,t } 
\Big\|^2					\nonumber 
   .     
\end{align}
Here, we have used the commutation relation  $ \i [\bP , \Ha] =  - \int_{\R^3 } \nabla_{\bX } \underline{w}(x, \bX )  a_x^* a_x \d x $ together with  the   equation  of motion  $ \ddot \bX_t = - \int \nabla_{\bX} \underline w  (x , \bX )|\vp_t(x)|^2 \d x $. Next, we control the second term in the right hand side; the triangle inequality gives 
\begin{align}
\Big\|											   
\bigg(   
\intt 
\nabla_\bX  w( x ,  \bX) 
\frac{a_x^* a_x}{N}  \,  \d x 
 \,   - \,    & 
\intt
 \nabla_\bX  \underline w(x, \bX_t  ) 
 |\vp_t(x)| 
 \,  \d x  
\Big)									  	
\Psi_{N,t } 
\bigg\|      \\
  \  \leq  \  & 
 \nonumber		
   \Big\|  
\intt 
\nabla_\bX  \underline w ( x ,  \bX) 
  \Big( 
  \frac{a_x^* a_x}{N }       -   |\vp_t(x)|^2   
  \Big) 
   \d x \Psi_{N,t }		\nonumber 
\Big\|   \\
&   +    
  \,  
  \Big\|
\intt				  
\Big( 
\nabla_\bX  \underline w ( x ,  \bX)   -   \nabla_\bX  \underline w (x, \bX_t  )  
 \Big) 
|\vp_t(x)|^2 \d x \, 
  \Psi_{N,t }
\Big\| 				\nonumber 
\ . 
\end{align}
The first term on the right hand side can be estimated using the translation property \eqref{translation property} and the \textit{Particle Number Estimates}. We obtain 
\begin{align}
  \Big\|  
\intt 
\nabla_\bX  \underline w (x,   \bX)  
\Big( 
\frac{a_x^* a_x}{N }      \,   - \,    &  |\vp_t(x)|^2 
\Big)
\d x \Psi_{N,t }		\nonumber 
\Big\| \\
 &  =   
   \Big\|  
 \intt 
 \nabla_\bX  \underline w ( x ,  \bX)  
 \Big( 
 \frac{a_x^* a_x}{N }     
   +
         \frac{1}{\sqrt N}
         (  a_x^* \vp_t(x)  
         + \t{h.c} )
 \Big)
 \d x \,    \Omega_{N,t }	\nonumber 
 \Big\|   \\ 
		 & \leq  								\nonumber 
2   
 \,   \|  \nabla w \|_{L^\infty } 
   \,   \| \vp_0\|_{L^2 }
    \\ 
& \qquad \times  
 \Big(
\frac{1}{N} \| (N_b + 1 )\Omega_{N,t } \| 
+
\frac{1}{   N^{1/2} } \| (N_b + 1 )^{1/2}\Omega_{N,t } \| 
 \Big) \, . 
\end{align}
Similarly, we use a first order Taylor estimate to find that 
\begin{align}
\Big\|
\intt				  
\Big( 
\nabla_\bX  \underline w( x ,  \bX)  
 -
\nabla_\bX  \underline w( x ,  \bX_t )  
\Big) 
|\vp_t(x)|^2 \d x \, 
\Psi_{N,t }
\Big\| 
&    \ \leq     \ 
\| w   \|_{C_b^2 } 
 \|  \vp_0\|_{L^2 } 
  \|  \Delta \bX(t) \Psi_{N,t } \| \, .
\end{align}
We put the last four estimates together to find that  for some $C >0$  it holds that 
\begin{equation}\label{lemma 4 eq 1.5}
\frac{d }{dt }
\| \Delta \bV(t) \Psi_{N,t}  \|^2 
\lesssim 
\| \Delta \bV(t) \Psi_{N,t}  \|^2 
+ 
\| \Delta \bX(t) \Psi_{N,t}  \|^2 
+
\frac{ \exp{(Ct) }}{N^{1/2}} \, , \qquad t \geq 0 \, , N \geq 1 \, , 
\end{equation}
where we have used the number estimates \eqref{number estimates}. 
We combine \eqref{lemma 4 eq 1} and \eqref{lemma 4 eq 1.5} to find that, thanks to the Gr\"onwall inequality,  
the following estimate holds for some $C> 0 $
\begin{equation}\label{lemma 4 eq 2}
\| \Delta \bX(t) \Psi_{N,t}  \|^2 
+
\| \Delta \bV(t) \Psi_{N,t}  \|^2 
\lesssim 
e^{Ct}
\Big( 
\|  \Delta \bX(0)  \, u_{N,0}  \|^2_{L^2 } 
+
\|   \Delta \bV(0 ) \,  u_{N,0}  \|^2_{L^2_{  }}
+
\frac{1 }{ N^{1/2}} \Big) \, .  
\end{equation} 
Since $u_{N,0}$ satisfies Condition \ref{condition 1}, the right hand side is of order $  \exp(Ct) N^{-1/2}$. Thus,
 the proof of the lemma is complete under our time differentiability assumption. To remove it, one may re-do all the calculations for    $\Psi_{N,t }^\ve$  and $\bX_t^\ve $, introduced  in  Section \ref{fluctuation states}. 
Moreover, thanks to Lemma \ref{lemma regularized microscopic} and \ref{lemma regularized mf}, we may adapt the argument presented in Lemma \ref{lemma regularized fluctuation state} to show that 
\begin{equation}
\textstyle 
\| \Delta \bX (t) \Psi_{N,t }  \|^2  
+
\|  \Delta \bV(t) \Psi_{N,t }  \|^2 
\leq 
\limsup_{\ve \downarrow 0 } 
\Big( 
\| \Delta \bX^\ve (t) \Psi^\ve_{N,t }  \|^2  
+
\|  \Delta \bV^\ve (t) \Psi^\ve_{N,t }  \|^2 
\Big) 
\, . 
\end{equation}
Further,        all we used in the proof of \eqref{lemma 4 eq 2} were standard bounds, and the particle number estimates \eqref{number estimates}. Thus, our claim follows in view of \eqref{regularized Nb estimate}.  
\end{proof}

The proof of our first main theorem now becomes a straightforward corollary of the last result. 

\begin{proof}[Proof of Theorem \ref{theorem 1}]
Let us fix $t \geq 0 $ and $N \geq 1 $. Since $\Psi_{N,t }$ has unit norm, one easily finds thanks to  the Cauchy-Schwarz inequality that
\begin{equation}
|
 \<  \Psi_{N,t } , \bX  \Psi_{N,t } \>  - \bX_t 
|
\leq 
 \<  \Psi_{N,t } , |  \bX  - \bX_t |  \Psi_{N,t } \>   
 \leq  \|   \Delta \bX(t)  \Psi_{N,t } \| \, .
\end{equation}
Thus, it suffices to apply Lemma \ref{lemma 4}. The argument for the difference in velocity   remains unchanged. 
 \end{proof}

\begin{proof}[Proof of Theorem \ref{theorem 2}]
	Let $\vp_t$ be the boson field solving the system \eqref{mf equations}. 
	Then, the translation property \eqref{translation property} satisfied by the Weyl operator implies that 
	\begin{equation}
		\frac{1}{N}					    
		\langle   \Psi_{N,t} , a_x^* a_y   \Psi_{N,t}     \rangle 		  			  
		  = 
		 \overline{\vp_t} (x ) \vp_t(y) 
 		  \!  +  \! 
		\frac{1}{N}
		\langle 
		\Omega_{N,t} , 
		 \big( 
		a_x^* a_y 
	 \! 	+  \! 
		\sqrt{N } \vp_t(y)a_x^* 
		 \!  + \! 
		   \sqrt{N} \,  \overline{\vp_t}(x)\,  a_y \big)  \Omega_{N,t} 
		     \rangle     . 
	\end{equation}
	In particular,  we have the following expression for the difference of the integral kernels: 
	\begin{align}
		\Gamma_{N,t}(x,y) 
		-
				 \overline{\vp_t} (x ) \vp_t(y)  
		= 
		\frac{1}{N}
		\<  a_x \Omega_{N,t} , a_y \Omega_{N,t} \> 
		+
		\frac{\vp_t(y)}{\sqrt {N}} \< a_x \Omega_{N,t} , \Omega_{N,t}\>
		+
		\frac{\overline{\vp_t }(x)}{  \sqrt N  }
		\<  \Omega_{N,t} ,   a_y  \Omega_{N,t} \> \, .  
	\end{align}
	Next, let $K(x,y) \in L^2(\R^3 \times \R^3)$ be the integral kernel of the Hilbert-Schmidt operator $\mathcal{K}$. We integrate both sides of the last identity against 
	$K(x,y)$  and use the Cauchy-Schwarz inequality to find
	\begin{align}
		\big|     \mathrm{Tr}   \,\,  \K  
		\big(
		\Gamma_{N,t} 
		 -
	 & 	   \ket{\vp_t }  \bra{\vp_t } 					  \nonumber 
		   \big) 
		   \, \big|  \\ 
		   & \leq							  \nonumber 
		\frac{\|K\|_{L^2} }{N} 
		\(  \, 
		\scaleobj{1.2}{ \textstyle \int_{\R^6}   } 
		| \<a_x \Omega_{N,t}, a_y \Omega_{N,t}\>|^2 \,   
		\)^{1/2}
		+
		\frac{2  \| K  \|_{L^2 }}{\sqrt{N}}
		\( 
		\scaleobj{1.2}{ \textstyle \int_{\R^6}   } 
		|\vp_t(y)|^2 \|a_x \Omega_{N,t}\|^2 \,  
		\)^{1/2}\\
		&\leq
		\frac{ \|\K\|_{ HS }}{N}
		\intt \|   a_x \Omega_{N,t}  \|^2  \d x
		+          \,  \nonumber
		\frac{ 2 \, \|\K\|_{ HS }}{\sqrt{N }}
		\(  \intt \|   a_x \Omega_{N,t}  \|^2  \d x \)^{1/2}\\
		& = 
		\frac{ \|\K\|_{ HS }}{N}
		\|  N_b^{1/2} \Omega_{N,t}  \|^2
		+  
		\frac{ 2 \, \|\K\|_{ HS }}{\sqrt{N }}
		\|  N_b^{1/2} \Omega_{N,t}  \|  \, , 
	\end{align}
	where we have also used the fact that $\Omega_{N,t}$ and $\vp_t$ have unit norms. It follows now from the   number estimates
	\eqref{number estimates} that there is $C>0$ such that for all $ t\geq 0 $ and $ N \geq 1 $
	\begin{align}
		\big| 
		\mathrm{Tr}   \,\,  \K  
		\big(
		\Gamma_{N,t} 
		 - 
		 \ket{\vp_t } \bra{\vp_t } 
		 \big)  \,
		\big| 
		&  \ \leq \  C  
		\|  \K   \|_{ HS } 
		\Big(
		\frac{ e^{C t}  }{N^{1/2}} +    \frac{  e^{C t}  }{ N^{1/4}  }     
		\Big)  \, . 
	\end{align}
	Therefore, from the   previous estimates we   conclude that 
	\begin{equation}
				\|  \Gamma_{N,t}   -   		 \ket{\vp_t } \bra{\vp_t } 		 \|_{HS}
	\ 	=
		\sup_{ \|\K\|_{HS} \leq 1 }
		\big| \mathrm{Tr} \, \K  \big(     \Gamma_{N,t}   - 
		 \ket{\vp_t } \bra{\vp_t } 
		  \big)  \big| 
		 \ \leq \ 
		\frac{   C \, e^{ C t }  }{  N^{1/4}  }   \, , 
	\end{equation}
	for all $ t \geq 0$ and $N \geq  1$. 	In view of \eqref{trace norm dominated}, this completes the proof.
\end{proof}

\section{Particle Number Estimates  of the Truncated Dynamics }\label{prop 1 proof section}
The main goal of this section is proving Proposition \ref{prop1}, which contains the  
key estimate used to prove Theorem \ref{thm number estimates}. Let us then
  fix   $N,M \geq 1 $.

\subsection{Dynamical Estimates}

 We will be working with the following space of \textit{smooth functions}
\begin{equation}\label{smooth}
\mathscr{D}
\defeq 
\mathscr{D}_\infty 
\cap 
\calD(  T_b)
\, , 
\quad  
\t{where}
\quad 
\mathscr{D}_\infty
\defeq 
\bigcap_{ k = 1 }^\infty 
\calD( |\bX|^k ) \cap 
\calD( |\bP |^k ) \cap
\calD(N_b^k ) \, .
\end{equation}
In particular, 
it follows from our results in Section \ref{section WP truncated}
 that the regularized, truncated dynamics propagates smoothness  in the sense that 
 \begin{equation}
 \U_{N}^{(M,\ve)}(t,s ) \mathscr{D} \subset \mathscr{D} \, , 
 \qquad  \forall t ,s \in  \R \, .
 \end{equation}
All of the upcoming calculations are justified in this space.

We will be  mostly concerned in proving estimates for 
$\| N_b^{3/2 }  \,  \U_{N}^{(M,\ve)}(t,0 ) \,  \Phi  \|$, 
 when  $\Phi \in \mathscr{D}$ and $t \geq 0 $. Let us note that thanks to the non-diagonal term
\begin{equation}
\sqrt N 
\int_{\R^3 } 
\big(
\underline w^{\ve}(x  ,  \bX) 
-
 \underline w^{\ve}(x  , \bX^{\ve}_t   )
 \big)
  \big (   a_x \overline{\vp_t}^{\ve} (x)  +   a_x^* \vp_t^{\ve}(x)		\big)  \d x 
\end{equation}
present in the generator of the (regularized) truncated dynamics, there is a link between the production of particles and the interaction between bosons and tracer particles. In order to account for this phenomenon, we introduce the time-dependent operators
\begin{align}
\label{delta X} & \Delta \bX (t,\ve) \defeq \bX - \bX^{\ve}_t \\
\label{delta V} & \Delta \bV (t,\ve) \defeq    N^{ -1 }    \bP - \dot \bX^{\ve}_t \, .
\end{align}
where $\bX$ and $\bP $ are the tracer particle's position and momentum observables.  It will be convenient to work with their Euclidean components. 
Namely, if $i \in \{ 1, \ldots, m \}$ labels the different tracer particles and $k \in \{1,2,3\}$ labels their  components,  we write 
\begin{align}
& \Delta \bX(t,\ve ) 
=
 \big(  \Delta X_k^{(i) }(t,\ve)   \big)_{ i =1 , k =1 }^{m, 3}   
\quad \t{ and } \quad 
\Delta \bV(t , \ve ) 
=
 \big(  \Delta V_k^{(i)} (t,\ve)    \big)_{i = 1 , k =1 }^{ m , 3}  \ . 
\end{align}
We will be using the following quadratic form to quantify the production of bosonic particles associated to the interaction with the tracer particles. Here and in the sequel, we let 
$\U_{N,t}^{(M, \ve )} \equiv \U_{N }^{(M,\ve )}(t,0)$ unless stated otherwise. 

\begin{definition}
	For $ \Phi \in \mathscr{D}$,  $t \geq 0 $ and $\ve \in (0,1)$ we introduce the quantity
	\begin{align}
	\G_\Phi (t , \ve )  \defeq  & 
	\sum_{ i =1 }^m
	\sum_{k = 1}^3 \nonumber 
	\bigg( 
	\|  \, \Delta X_k^{(i)}  (t , \ve)^3   \   \U_{N,t}^{(M, \ve)} \Phi    \|^2 
	+
	\|  \, \Delta V_k^{(i)}  (t , \ve )^3      \     \U_{N,t}^{(M, \ve)} \Phi     \|^2 
	\bigg)  \\ 
	&+		\nonumber 
	\frac{1}{N^2}
	\sum_{ i =1 }^m
	\sum_{k = 1}^3 
	\bigg( 
	\|  \,  \Delta X_k^{(i)}  (t , \ve )     \   \U_{N,t}^{(M, \ve)} \Phi     \|^2 
	+
	\|  \,  \Delta V_k^{(i)}  (t, \ve)   \     \U_{N,t}^{(M, \ve)} \Phi    \|^2 
	\bigg) \\ 
	&+		\label{Gnt}
	\frac{1}{N^3}
	\bigg( 
	\| 	 (N_b+1 )^{3/2}  \    \U_{N,t}^{(M, \ve)} \Phi  	\|^2 
	+
	\| 	 (N_b+1)^{1/2}   \    \U_{N,t}^{(M, \ve)} \Phi  	\|^2  
	\bigg) \, .
	\end{align}
\end{definition}
\begin{remark}
When we evaluate at our (regularized) initial condition  
$\Phi_N^\ve = u_{N,0}^\ve \otimes \Omega$, the following order-of-magnitude estimate holds at time $t = 0:$
\begin{equation}
\limsup_{\ve \downarrow 0 } \G_{\Phi_N^\ve } (0, \ve )  \lesssim  N^{ -3 }  \,. 
\end{equation}
Thus, our goal will be to show that there is $ C $,  independent of $N , M ,  t , \ve$ and $ \Phi $, such that
\begin{equation}\label{goal}
\partial_t \,     \G_\Phi ( t, \ve )       
\   \leq \ 
C (1 + M/N)^2  \,    \G_\Phi (t, \ve)
\end{equation}
holds. We then apply the Gr\"onwall inequality.  
\end{remark}

The proof of \eqref{goal} will be split into three parts, contained in the following three lemmas.

\begin{lemma}\label{lemma 1}
There is $C>0$, independent of $N$ and $M$, such that for all $ \Phi \in \mathscr{D}$ the following holds
	\begin{enumerate}[label=(\subscript{a}{{\arabic*}})]
		\item 		For any $t \geq 0 $ and $\ve \in ( 0,1 ) $  
		\begin{align}\nonumber 
		\frac{d}{dt} 
		\| 	 (N_b+1)^{3/2} 
   \ \U_{N,t}^{(M, \ve)} \Phi 
		\|^2  
		 \ \leq \ C 
		\big(				1  + \sqrt{M /N }					 \big) 
		N^3    \G_\Phi ( t, \ve )        \, . 
		\end{align} 
		\item  For any $t \geq 0 $ and $\ve \in ( 0,1 ) $  
		\begin{align}\nonumber 
		\frac{d}{dt} \| 	 (N_b+1 )^{1/2} 
  \U_{N,t}^{(M, \ve)} \Phi 
		\|^2  
	 \ \leq \ C  
		\big(				1  + \sqrt{M /N }					 \big) 
		N^3    \G_\Phi ( t, \ve )        \, . 
		\end{align} 
		\item For any $t \geq 0 $ and $\ve \in ( 0,1 ) $  
		\begin{align}\nonumber 
		 \| 	 (N_b+1)^{2} 
		\ \U_{N,t}^{(M, \ve)} \Phi 
		\|^2  
		\ \leq \ C   
		\exp \big(   C t (1 +  \sqrt{ M /  N }  )  \big)   
 \, 		N 
	\, 	\| 	 (N_b+1)^{3/2} 
		\ \U_{N,t}^{(M, \ve)} \Phi 
		\|^2   \, . 
		\end{align} 
		
	\end{enumerate}
\end{lemma}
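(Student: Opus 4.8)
The plan is to prove each of $(a_1)$, $(a_2)$, $(a_3)$ by differentiating the relevant particle-number moment of $\Psi_t\defeq\U_{N,t}^{(M,\ve)}\Phi$ along the truncated flow and then running a Gr\"onwall-type argument; by Section \ref{section WP truncated} the truncated propagator preserves $\mathscr{D}$, so all the computations below are justified. Since the kinetic terms $-\Delta_\bX/2N$ and $T_b$, the diagonal interaction and the quartic term $\calI_4^\ve$ all commute with $N_b$, and so does $\chi(N_b\le M)$, we have for every $j\ge 0$
\begin{equation*}
\tfrac{d}{dt}\|(N_b+1)^{j}\Psi_t\|^2=\i\,\big\langle\Psi_t,\ \big[\chi(N_b\le M)\,\L^{(od),\ve}(t),\ (N_b+1)^{2j}\big]\,\Psi_t\big\rangle ,
\end{equation*}
so only the off-diagonal operator $\L^{(od),\ve}(t)$ of \eqref{diagonal off diagonal} contributes. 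I split $\L^{(od),\ve}(t)$ into its three lines: the \emph{dangerous} linear term $\sqrt N\,\L_1^\ve(t)$, the $O(1)$ pair term $\L_2^\ve(t)$ and the $O(1/\sqrt N)$ cubic term $\L_3^\ve(t)$; each has the form $F+F^*$ with $F$ shifting $N_b$ by a fixed $r\in\{1,2\}$.

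For $\L_2^\ve$ and $\L_3^\ve$ — the pieces already present in \cite{Rodianski Schlein 2009} — I would use the identity
\begin{equation*}
\im\big\langle(N_b+1)^{j}\Psi,(N_b+1)^{j}(F+F^*)\Psi\big\rangle=\im\big\langle(N_b+1)^{j}\Psi,\ F\big[(N_b+1+r)^{j}-(N_b+1)^{j}\big]\Psi\big\rangle
\end{equation*}
(valid up to the harmless cut-off shifts $a_x\chi(N_b\le M)=\chi(N_b\le M\mp 1)a_x$) together with the operator bound $0\le(N_b+1+r)^{j}-(N_b+1)^{j}\lesssim(N_b+1)^{j-1}$; after a Cauchy--Schwarz this leaves $\|(N_b+1)^{j}\Psi_t\|$ times a norm controlled by the \emph{Particle Number} and \emph{Potential Estimates} for $v^\ve$, whose constants (and $\sup_t\|\vp_t^\ve\|_{H^1}$) are uniform in $\ve$ by the regularization remark and energy conservation (Section \ref{appendix WP mean field}). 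For $\L_3^\ve$ a surplus $(N_b+1)^{1/2}$ remains; here the cut-off enters, via $(N_b+1)^{1/2}\chi(N_b\le M)\lesssim(M+1)^{1/2}$, which produces the $(1+\sqrt{M/N})$ prefactor. These two pieces are thereby bounded by $C(1+\sqrt{M/N})\|(N_b+1)^{j}\Psi_t\|^2\le C(1+\sqrt{M/N})\,N^3\,\G_\Phi(t,\ve)$.

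The real work is the $\sqrt N$ term. I would first use the \emph{Taylor Estimates} to write $\underline w^\ve(x,\bX)-\underline w^\ve(x,\bX_t^\ve)=\sum_{i,k}\Delta X_k^{(i)}(t,\ve)\,h^\ve_{i,k}(x,\bX,t)$ with $\|h^\ve_{i,k}\|_{L^\infty}\lesssim\|w^\ve\|_{C_b^1}\lesssim 1$; as each $\Delta X_k^{(i)}(t,\ve)$ is multiplication on $L^2_\bX$ it commutes with $a_x,a_x^*,N_b$ and can be pulled out. The identity above with $r=1$ and the \emph{Particle Number Estimates} then give, for $(a_j)$, a contribution $\lesssim\sqrt N\,\|(N_b+1)^{j}\Psi_t\|\,\sum_{i,k}\|\Delta X_k^{(i)}(t,\ve)\,(N_b+1)^{j-1/2}\Psi_t\|$. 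For $j=1/2$ no $N_b$-weight survives next to $\Delta X_k^{(i)}$, and a weighted Young inequality lands the result in the $N^{-2}$-weighted linear terms of $\G_\Phi$. For $j=3/2$ one must control $\|\Delta X_k^{(i)}(t,\ve)(N_b+1)\Psi_t\|$, and here the cube in $\G_\Phi$ is used through the operator Young inequality
\begin{equation*}
\|\Delta X_k^{(i)}(t,\ve)(N_b+1)\Psi_t\|^2\ \le\ \tfrac{\delta}{3}\,\|\Delta X_k^{(i)}(t,\ve)^3\Psi_t\|^2\ +\ \tfrac{2}{3\sqrt\delta}\,\|(N_b+1)^{3/2}\Psi_t\|^2 ,\qquad\delta>0 ,
\end{equation*}
which follows from $\alpha^2\beta^2\le\tfrac13\delta\,\alpha^6+\tfrac23\delta^{-1/2}\beta^3$ applied to the commuting self-adjoint operators $\Delta X_k^{(i)}(t,\ve)$ and $N_b+1$; choosing $\delta$ and the Young parameters as suitable powers of $N/(1+\sqrt{M/N})$ distributes the $\sqrt N$ between $N^3\|\Delta X_k^{(i)}(t,\ve)^3\Psi_t\|^2$ and $\|(N_b+1)^{3/2}\Psi_t\|^2$ inside $N^3\G_\Phi(t,\ve)$, keeping the prefactor $\lesssim(1+\sqrt{M/N})$. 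This proves $(a_1)$ and $(a_2)$. For $(a_3)$ I would run the same analysis with $j=2$; the cut-off confines the interaction to $N_b\le M$ and yields a differential inequality $\tfrac{d}{dt}\|(N_b+1)^2\Psi_t\|^2\le C(1+\sqrt{M/N})\big(\|(N_b+1)^2\Psi_t\|^2+N\|(N_b+1)^{3/2}\Psi_t\|^2+\text{(tracer terms)}\big)$, which one integrates; when $M\le N$ the elementary bound $\|(N_b+1)^2\Psi_t\|^2\le(M+1)\|(N_b+1)^{3/2}\Psi_t\|^2$ already suffices.

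I expect the main obstacle to be exactly the bookkeeping for the $\sqrt N$ term in $(a_1)$: the factor $\sqrt N$ can be absorbed only after extracting $\Delta\bX(t,\ve)$ by Taylor, but the $N_b$-weights that stay attached to it force control of the mixed quantities $\|\Delta X_k^{(i)}(t,\ve)(N_b+1)^{s}\Psi_t\|$ — this intertwining of particle-number moments with the tracer observables is precisely why the quadratic form $\G_\Phi$ in \eqref{Gnt} is built from the cube, the $N^{-2}$-weighted linear, and the $N^{-3}$-weighted $N_b$ terms simultaneously — and making every resulting term land inside $C(1+\sqrt{M/N})N^3\G_\Phi(t,\ve)$ with no stray powers of $M$ or $N$ needs a careful, $N$- and $M$-dependent choice of the Young/splitting parameters and of where the cut-off power is spent. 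Throughout, uniformity of the constant in $\ve\in(0,1)$ rests on the $\ve$-uniform \emph{Taylor} and \emph{Potential Estimates} and the uniform $H^1$-bound on $\vp_t^\ve$.
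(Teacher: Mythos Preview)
Your treatment of $(a_1)$ and $(a_2)$ is correct and is essentially the paper's argument in different packaging. Where the paper invokes its Interpolation Lemma (Lemma~\ref{lemma interpolation}) with $A=|\Delta\bX(t)|$, $B=N^{-1/2}(N_b+1)^{1/2}$, $\lambda=0$ to obtain
\[
N\,\|\Delta\bX(t)(N_b+1)\Psi\|^2\ \lesssim\ \|(N_b+1)^{3/2}\Psi\|^2+N^3\,\||\Delta\bX(t)|^3\Psi\|^2,
\]
your operator Young inequality $\alpha^2\beta^2\le\tfrac{\delta}{3}\alpha^6+\tfrac{2}{3\sqrt\delta}\beta^3$ (with $\delta\sim N^2$) delivers exactly the same bound via joint spectral calculus for the commuting pair $\Delta X_k^{(i)}$ and $N_b+1$. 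So for these two items the approaches coincide.

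For $(a_3)$ there is a genuine gap. If you ``run the same analysis with $j=2$'' and Taylor-expand $\underline w^\ve(x,\bX)-\underline w^\ve(x,\bX_t^\ve)$, the $\sqrt N$ term leaves you with $\sqrt N\,\|(N_b+1)^2\Psi_t\|\cdot\|\Delta X_k^{(i)}(N_b+1)^{3/2}\Psi_t\|$, and now your Young device applied to the mixed factor produces either $\|(N_b+1)^{9/4}\Psi_t\|^2$ (a higher moment than you are trying to bound) or, if you spend the cut-off, an extra power of $M$ that does not fit the claimed prefactor. Either way the resulting ``tracer terms'' cannot be absorbed, because the right-hand side of $(a_3)$ contains \emph{no} tracer quantities at all. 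The paper avoids this by \emph{not} Taylor-expanding in $(a_3)$: it simply uses $|\underline w^\ve(x,\bX)-\underline w^\ve(x,\bX_t^\ve)|\le 2\|w^\ve\|_{L^\infty}$, which yields directly
\[
\big|\langle\Psi_t,[\L_1(t),(N_b+1)^4]\Psi_t\rangle\big|\ \lesssim\ \sqrt N\,\|(N_b+1)^2\Psi_t\|\,\|(N_b+1)^{3/2}\Psi_t\|\ \lesssim\ \|(N_b+1)^2\Psi_t\|^2+N\,\|(N_b+1)^{3/2}\Psi_t\|^2,
\]
so the differential inequality you wrote down is obtained cleanly, without tracer terms, and Gr\"onwall finishes. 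Your fallback, the ``elementary bound'' $\|(N_b+1)^2\Psi_t\|^2\le(M+1)\|(N_b+1)^{3/2}\Psi_t\|^2$, is not available for general $\Phi\in\mathscr{D}$: the cut-off sits on the interaction, not on the state, so $\U_{N,t}^{(M,\ve)}\Phi$ need not be supported in $\{N_b\le M\}$, and even when it is this would only cover $M\le N$ while the lemma is stated for all $M,N\ge 1$.
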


\begin{lemma}\label{lemma 2}
	Let $i \in \{1, \ldots, m\}$ and $k \in  \{  1, 2,3  \}$.  There is $C>0$, independent of $N$ and $M$, such that for all $ \Phi \in \mathscr{D}$ the following holds
	\begin{enumerate}[label=(\subscript{  b  }{{\arabic*}})]
		\item 		  For any $t \geq 0 $ and $\ve \in ( 0,1 ) $  
		\begin{align}\nonumber 
		\frac{d}{dt}	\|  \, 
		 \Delta X_k^{(i)}  
		 (t , \ve )    \  
		   \U_{N,t}^{(M, \ve)} \Phi   
		    \|^2 
		 \ \leq \ C 
		N^2   \G_\Phi ( t, \ve )       \, . 
		\end{align}
		\item 	  For any $t \geq 0 $ and $\ve \in ( 0,1 ) $  
		\begin{align}\nonumber 
		\frac{d}{dt}	\|  \, 
		 \Delta V_k^{(i)}  (t , \ve )     \ 
  \U_{N,t}^{(M, \ve)} \Phi 
		   \|^2 
		 \ \leq \ C 
		(1 + M/N )  N^2    \G_\Phi ( t, \ve )       \, . 
		\end{align}
	\end{enumerate}
	
\end{lemma}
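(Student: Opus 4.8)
The plan is to differentiate each squared norm in time and use the infinitesimal generator $\L_M^\ve(t)$ of the regularized truncated dynamics. For $(b_1)$, I would write $\frac{d}{dt}\|\Delta X_k^{(i)}(t,\ve)\,\U_{N,t}^{(M,\ve)}\Phi\|^2 = 2\,\re\,\langle \Delta X_k^{(i)}\,\U_{N,t}^{(M,\ve)}\Phi,\, \partial_t(\Delta X_k^{(i)}\,\U_{N,t}^{(M,\ve)}\Phi)\rangle$, and then expand $\partial_t(\Delta X_k^{(i)}\,\U) = -\dot X_k^{(i),\ve}(t)\,\U - \i\, \Delta X_k^{(i)}\,\L_M^\ve(t)\,\U$. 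Since only the tracer kinetic term $-\Delta_\bX/(2N)$ fails to commute with the position observable $\bX$, the commutator $[\Delta X_k^{(i)}, \L_M^\ve(t)] = [X_k^{(i)}, -\Delta_\bX/(2N)] = -\i N^{-1} P_k^{(i)}$ (up to the sign convention $\bP = -\i\nabla_\bX$), so after a standard manipulation the term $-\dot X_k^{(i),\ve}$ cancels against the $\dot\bX^\ve_t$ in the definition of $\Delta\bV$ and one is left with $\frac{d}{dt}\|\Delta X_k^{(i)}\U\|^2 = 2\,\re\,\langle \Delta X_k^{(i)}\,\U,\, \Delta V_k^{(i)}\,\U\rangle$ (the cross term from the imaginary generator piece being purely imaginary and dropping out). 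Cauchy--Schwarz then bounds this by $\|\Delta X_k^{(i)}(t,\ve)\,\U\|^2 + \|\Delta V_k^{(i)}(t,\ve)\,\U\|^2$, and each of these is, by definition, at most $N^2\,\G_\Phi(t,\ve)$ (the $N^{-2}$-weighted terms in \eqref{Gnt} supply exactly the factor $N^2$). This is the easy half.

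For $(b_2)$ I expect the real work to lie. Differentiating $\|\Delta V_k^{(i)}(t,\ve)\,\U\|^2$ produces $2\,\re\,\langle \Delta V_k^{(i)}\,\U,\,\partial_t(\Delta V_k^{(i)}\,\U)\rangle$, and $\partial_t(\Delta V_k^{(i)}\,\U) = -\ddot X_k^{(i),\ve}(t)\,\U + N^{-1}[P_k^{(i)},\L_M^\ve(t)]\,\U\cdot(-\i)(-\i)$ — more carefully, the Heisenberg-type derivative gives $\frac{d}{dt}(N^{-1}P_k^{(i)})(t) = -\i N^{-1}[P_k^{(i)}, \L_M^\ve(t)]$ acting appropriately. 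Now $P_k^{(i)} = -\i\partial_{X_k^{(i)}}$ commutes with $T_b$, with $-\Delta_\bX/(2N)$, and with the quartic term $\calI_4^\ve$, but \emph{not} with the interaction $\chi(N_b\le M)\calI^\ve(t)$, because the potentials $\underline w^\ve(x,\bX)$ depend on $\bX$. Using $\i[P_k^{(i)}, \chi(N_b\le M)\calI^\ve(t)] = \chi(N_b\le M)\,\partial_{X_k^{(i)}}\calI^\ve(t)$ (the cut-off commutes with $\bP$), the $\bX$-derivative hits the diagonal term $\int(\underline w^\ve(x,\bX)-\underline w^\ve(x,\bX_t^\ve))(N|\vp_t^\ve|^2 - a_x^*a_x)\,dx$ producing $\int \partial_{X_k^{(i)}}\underline w^\ve(x,\bX)(N|\vp_t^\ve|^2 - a_x^*a_x)\,dx$, and hits the off-diagonal term $\sqrt N\int(\underline w^\ve(x,\bX)-\underline w^\ve(x,\bX_t^\ve))(a_x\overline{\vp_t^\ve}+a_x^*\vp_t^\ve)\,dx$ producing $\sqrt N\int\partial_{X_k^{(i)}}\underline w^\ve(x,\bX)(a_x\overline{\vp_t^\ve}+a_x^*\vp_t^\ve)\,dx$. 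The classical force $\ddot X_k^{(i),\ve}(t) = -\int\partial_{X_k^{(i)}}\underline w^\ve(x,\bX_t^\ve)|\vp_t^\ve(x)|^2\,dx$ must be inserted to produce differences $\partial_{X_k^{(i)}}\underline w^\ve(x,\bX) - \partial_{X_k^{(i)}}\underline w^\ve(x,\bX_t^\ve)$ wherever possible, which by the \emph{Taylor Estimates} \eqref{taylor estimate} are $\lesssim |\Delta\bX(t,\ve)| \lesssim \sum_{j,\ell}|\Delta X_\ell^{(j)}(t,\ve)|$.

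The resulting terms I would organize as follows. The diagonal contribution splits: a piece $N^{-1}\langle\Delta V_k^{(i)}\U, (\partial_{X_k^{(i)}}\underline w^\ve(x,\bX)-\partial_{X_k^{(i)}}\underline w^\ve(x,\bX_t^\ve))\cdot\text{(fluctuation)}\U\rangle$ where the Taylor bound yields a factor $|\Delta\bX|$ and \textit{Particle Number Estimates} bound the operator $\int g(x,\bX)(N|\vp_t^\ve|^2 - a_x^*a_x)\,dx$ — note this is $\int g\, N|\vp_t^\ve|^2\,dx$ (a bounded multiplication operator, finite since $\|\vp_t^\ve\|_{L^2}=1$) minus $\int g\, a_x^*a_x\,dx$ (controlled by $N_b$ via the second \textit{Particle Number Estimate}); after expanding via the Weyl translation property \eqref{translation property} this becomes $O(\sqrt N)$ in $(N_b+1)^{1/2}$-norm plus $O(1)$ in $(N_b+1)$-norm, and combining with $|\Delta\bX|$ and $N^{-1}$ and Cauchy--Schwarz with $\|\Delta V_k^{(i)}\U\|$ gives a contribution $\lesssim N^2\,\G_\Phi$. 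The off-diagonal contribution is the delicate one: the $\sqrt N$ prefactor times $|\Delta\bX|$ times a $(N_b+1)^{1/2}$ from the \textit{Particle Number Estimate} on $\int\partial_{X_k^{(i)}}\underline w^\ve(x,\bX)(a_x\overline{\vp_t^\ve}+\text{h.c.})\,dx$, all divided by $N$ (from the $N^{-1}$ in front of $[P_k^{(i)},\cdot]$), gives $N^{-1/2}\cdot\|\Delta\bX\,\U\|\cdot\|(N_b+1)^{1/2}\U\|$; Cauchy--Schwarz against $\|\Delta V_k^{(i)}\U\|$ and Young's inequality distribute this as $\|\Delta V_k^{(i)}\U\|^2 + N^{-1}\|\Delta\bX\,\U\|^2\cdot\|(N_b+1)^{1/2}\U\|^2$ — but crucially, \textbf{this last term is quartic in the fluctuation variables and must be reduced to the quadratic form $\G_\Phi$.} Here is where the cut-off $\chi(N_b\le M)$ earns its keep: inside the off-diagonal term we may insert $\chi(N_b\le M)$, so $\|(N_b+1)^{1/2}\chi(N_b\le M)\U\| \le (M+1)^{1/2}$ — wait, more precisely one uses $N_b^{1/2}\chi(N_b\le M)\le (M+1)^{1/2}$ as an operator bound to trade one factor of $(N_b+1)^{1/2}$ for $(M+1)^{1/2} \lesssim (M/N)^{1/2}\sqrt N$ plus $\sqrt N$, converting the quartic term into $(1+M/N)\cdot N^{-1}\cdot\|\Delta\bX\,\U\|^2\cdot N$ — hmm, let me just say: the cut-off lets one replace $\|(N_b+1)^{1/2}\U\|^2$ by $\lesssim (M+N)$, so the quartic term becomes $\lesssim (1+M/N)\|\Delta\bX(t,\ve)\,\U\|^2 \lesssim (1+M/N)N^2\,\G_\Phi$, which is exactly the claimed bound. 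The main obstacle, then, is this bookkeeping: correctly tracking how the $\sqrt N$ from the off-diagonal term, the $N^{-1}$ from the momentum commutator, the Taylor factor $|\Delta\bX|$, and the cut-off trade-off $(N_b)^{1/2}\chi_{\le M}\lesssim (M+N)^{1/2}$ conspire to land precisely on $(1+M/N)N^2\,\G_\Phi$ without any leftover power of $N$ — and ensuring every cross term that is not manifestly of this form is either purely imaginary (hence dropped by taking $\re$) or absorbed into $\G_\Phi$ via one of the \textit{Particle Number} or \textit{Potential Estimates} together with $\|\vp_t^\ve\|_{H^1}$ being bounded uniformly in $\ve$ and $t$ (by conservation of the mean-field energy and mass, Section \ref{appendix WP mean field}).
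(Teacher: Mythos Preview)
Your treatment of $(b_1)$ is correct and coincides with the paper's.

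For $(b_2)$ your overall strategy --- differentiate, compute $\i[P_k^{(i)},\L_M^\ve(t)]$, subtract the classical force $\ddot X_k^{(i),\ve}$, then Cauchy--Schwarz --- is exactly the paper's. But your bookkeeping of the force error is off in two places. First, after hitting the off-diagonal term with $\partial_{X_k^{(i)}}$ you correctly write $\sqrt N\int\partial_{X_k^{(i)}}\underline w^\ve(x,\bX)(a_x\overline{\vp_t^\ve}+\text{h.c.})\,dx$; this carries \emph{no} $|\Delta\bX|$ factor, since the difference structure $\underline w^\ve(\cdot,\bX)-\underline w^\ve(\cdot,\bX_t^\ve)$ has been destroyed by the derivative. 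So the ``quartic term'' you then struggle to reduce never actually appears. Second, the Weyl translation property is not used here at all --- you are already in the fluctuation picture.

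The paper organizes the force error cleanly as $\Delta F_k=\Delta F_k^{(1)}+\Delta F_k^{(2)}$, where $\Delta F_k^{(1)}=\chi_{\le M}\int(\partial_kw^\ve(x-X)-\partial_kw^\ve(x-X_t^\ve))|\vp_t^\ve|^2\,dx$ (the scalar $|\vp_t^\ve|^2$ piece combined with $\ddot X_k^\ve$) and $\Delta F_k^{(2)}=\chi_{\le M}\int\partial_kw^\ve(x-X)\big(N^{-1/2}(a_x\overline{\vp_t^\ve}+\text{h.c.})-N^{-1}a_x^*a_x\big)\,dx$. A Taylor estimate gives $\|\Delta F_k^{(1)}\Psi\|\lesssim\||\Delta X|\Psi\|$, while boundedness of $\nabla w^\ve$ plus the \textit{Particle Number Estimates} give $\|\Delta F_k^{(2)}\Psi\|\lesssim N^{-1/2}(1+\sqrt{M/N})\|(N_b+1)^{1/2}\Psi\|$; the factor $\sqrt{M/N}$ comes from using the cut-off on the $N^{-1}a_x^*a_x$ term to write $\|N_b\chi_{\le M}\Psi\|\le M^{1/2}\|N_b^{1/2}\Psi\|$ --- this, not the linear off-diagonal piece, is the source of $(1+M/N)$. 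Squaring and reading off the definition of $\G_\Phi$ gives the claimed bound with no quartic intermediaries.
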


\begin{lemma}\label{lemma 3}
		Let $i \in \{1, \ldots, m\}$ and $k \in  \{  1, 2,3  \}$.  There is $C>0$, independent of $N$ and $M$,  such that for all $ \Phi \in \mathscr{D}$ the following holds
	\begin{enumerate}[label=(\subscript{c}{{\arabic*}})]
		\item 	 For any $t \geq 0 $ and $\ve \in ( 0,1 ) $  
		\begin{align} \nonumber 
		\frac{d}{dt}	
		\|  \,   \Delta X_k^{(i)}  (t , \ve ) ^3     
		\    \U_{N,t}^{(M, \ve)} \Phi  
		  \|^2 
		 \ \leq \ C  
		 \, 
  \G_\Phi ( t, \ve )        \, . 
		\end{align}
		\item  For any $t \geq 0 $ and $\ve \in ( 0,1 ) $  
		\begin{align}\nonumber 
		\frac{d}{dt}
			\|  \,   
		 \Delta V_k^{(i)} (t , \ve )^3
		     \   \U_{N,t}^{(M, \ve)} \Phi  
		       \|^2 
		 \ \leq \ C 
		( 1 + M /N )^2 \, 
  \G_\Phi ( t, \ve )       \, . 
		\end{align}
	\end{enumerate}
	
\end{lemma}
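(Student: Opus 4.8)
The plan is to differentiate $\| \Delta V_k^{(i)}(t,\ve)^3 \, \U_{N,t}^{(M,\ve)} \Phi \|^2$ in time, just as was done for the low-power version in Lemma \ref{lemma 2}, and then organize the resulting terms so that every one of them is bounded by the quadratic form $\G_\Phi(t,\ve)$ up to the advertised constant $C(1+M/N)^2$. Writing $Q = \U_{N,t}^{(M,\ve)}\Phi$ and $A = \Delta V_k^{(i)}(t,\ve)^3$, the derivative equals $2\,\re\langle A Q, \, [A, \i \L_M^\ve(t)] Q\rangle$ plus the contribution $-6\,\re\langle A Q,\, \Delta V_k^{(i)}(t,\ve)^2 \ddot X_k^{(i),\ve}(t) Q\rangle$ coming from the explicit time dependence of $\Delta V$. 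The commutator $[\Delta V_k^{(i)}, \i\L_M^\ve(t)]$ is computed from the definition \eqref{truncated generator}: the kinetic term $-\Delta_\bX/(2N)$ commutes with $\bP$ hence with $\Delta V$; $T_b$, $\calI_4^\ve$ and the diagonal part $\L^{(d),\ve}$ contribute through $[\bP, \underline w^\ve(x,\bX)] = -\i \nabla_\bX \underline w^\ve(x,\bX)$; and the off-diagonal $\L^{(od),\ve}$ contributes through its first term $\sqrt N \int (\underline w^\ve(x,\bX)-\underline w^\ve(x,\bX_t^\ve))(a_x \overline{\vp_t^\ve} + a_x^* \vp_t^\ve)\,\d x$, again via the $\bP$-commutator. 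Expanding $[A, \i\L_M^\ve(t)]$ by the Leibniz rule produces terms where one, two or three of the three copies of $\Delta V$ are differentiated, leaving the remaining copies intact.

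The key mechanism, exactly as in Lemma \ref{lemma 2}, is that the equation of motion $\ddot \bX_t^\ve = -\int \nabla_\bX \underline w^\ve(x,\bX_t^\ve)|\vp_t^\ve(x)|^2\,\d x$ is used to cancel the ``leading'' piece of the commutator coming from $\frac1N\int \nabla_\bX \underline w^\ve(x,\bX)\, a_x^* a_x\,\d x$ against the explicit $\ddot X$ term: what survives is the difference $\int \nabla_\bX\underline w^\ve(x,\bX)(\frac1N a_x^* a_x - |\vp_t^\ve(x)|^2)\,\d x$, which after the Weyl conjugation \eqref{translation property} that is already built into $\U_{N,t}^{(M,\ve)}$ becomes $\int \nabla_\bX\underline w^\ve(x,\bX)(\frac1N a_x^* a_x + \frac1{\sqrt N}(a_x^*\vp_t^\ve + \text{h.c.}))\,\d x$, together with a Taylor remainder $\int(\nabla_\bX\underline w^\ve(x,\bX)-\nabla_\bX\underline w^\ve(x,\bX_t^\ve))|\vp_t^\ve|^2\,\d x$ of size $\lesssim |\Delta\bX(t,\ve)|$ by the \emph{Taylor Estimates}. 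Applying Cauchy--Schwarz to $2\re\langle \cdot,\cdot\rangle$, bounding $\|\nabla_\bX\underline w^\ve\|_{L^\infty} \lesssim 1$, $\|\vp_t^\ve\|_{L^2}=1$, and $\|\vp_t^\ve\|_{H^1}\lesssim 1$ uniformly, and invoking the \emph{Particle Number Estimates} for the $a_x, a_x^*, a_x^* a_x$ pieces, every resulting term is dominated by a product of two factors each of which is one of $\|\Delta X_k^{(i)}(t,\ve)^3 Q\|$, $\|\Delta V_k^{(i)}(t,\ve)^3 Q\|$, $\frac1N\|\Delta X_k^{(i)}(t,\ve) Q\|$, $\frac1N\|\Delta V_k^{(i)}(t,\ve) Q\|$, $\frac1{N^{3/2}}\|(N_b+1)^{3/2}Q\|$, $\frac1{N^{3/2}}\|(N_b+1)^{1/2}Q\|$ — i.e. the square roots of the six groups of summands in $\G_\Phi(t,\ve)$. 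The $\sqrt N$ in the off-diagonal term is exactly absorbed by the $\frac1{\sqrt N}$ that the $\Delta V = N^{-1}\bP - \dot\bX_t^\ve$ scaling and the coherent-state shift produce, which is the point of the whole construction; the cut-off $\chi(N_b\le M)$ costs at most a factor $M/N$ each time a $\frac1N N_b$-type bound is replaced by $\frac MN$, and since the velocity difference is the one interacting with the boson field through $\L^{(od),\ve}$, two such insertions can occur when all three $\Delta V$ factors are hit, producing the $(1+M/N)^2$ prefactor. For part $(c_1)$, the $\Delta X$ commutator with $\L_M^\ve(t)$ is much simpler: $[\Delta X_k^{(i)},\i\L_M^\ve(t)] = \frac1N[\bX, \i(-\Delta_\bX/(2N))] \cdot (\text{const}) = N^{-1}\bP/N$-type, i.e. it equals $\Delta V_k^{(i)}(t,\ve) + \dot X_k^{(i),\ve}(t) - \dot X_k^{(i),\ve}(t)$; more precisely $\frac{d}{dt}\bX_t^\ve$ matches the explicit time derivative and the commutator produces exactly $\Delta V_k^{(i)}(t,\ve)$, so $\frac{d}{dt}\|\Delta X_k^{(i)}(t,\ve)^3 Q\|^2 = 6\re\langle \Delta X_k^{(i)}(t,\ve)^3 Q, \Delta X_k^{(i)}(t,\ve)^2 \Delta V_k^{(i)}(t,\ve) Q\rangle$, which by Cauchy--Schwarz and Young's inequality is $\le 3(\|\Delta X^3 Q\|^2 + \|\Delta X^2 \Delta V Q\|^2)$; one then controls $\|\Delta X^2 \Delta V Q\|$ by interpolating between $\|\Delta X^3 Q\|$, $\|\Delta V^3 Q\|$ and lower moments, all of which appear in $\G_\Phi$ — here there is no $M$ or $N$ loss at all, consistent with the statement having only a bare constant $C$.

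The main obstacle, and the place requiring genuine care rather than bookkeeping, is the estimate for the cubic velocity difference in part $(c_2)$ when all three factors $\Delta V_k^{(i)}(t,\ve)$ are acted on by the off-diagonal $\sqrt N$ interaction, because a naive count gives $\sqrt N \cdot \|\Delta V^2 \cdot (a_x^*\vp_t^\ve)\,(\cdots) Q\|$ and one must check that reorganizing $\Delta V^3$ times the field operator genuinely pairs against the correct square-root factors of $\G_\Phi$ without leaving an uncompensated positive power of $N$; the resolution is that the cut-off $\chi(N_b\le M)$ lets one replace $(N_b+1)^{1/2}$ acting on $\chi(N_b\le M)Q$ by $\sqrt{M}$ times $\|Q\|$-type quantities, turning $\sqrt N \cdot \frac{1}{\text{(scaling)}}\sqrt M$ into $(1+M/N)$ per such replacement, and cross terms between $\Delta V$ and the commutator $[\Delta V, a_x^* a_x]$-pieces inside the cut-off generator contribute the second factor of $(1+M/N)$. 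One also has to keep track that the commutator $[\Delta V_k^{(i)}, \chi(N_b\le M)] = 0$ since $\Delta V$ acts only on the $L^2_\bX$ factor, which simplifies matters, but $[\Delta V_k^{(i)}, \calI^\ve(t)]$ itself still carries the full $\sqrt N$ from $\L^{(od),\ve}$ so the cut-off is applied to the \emph{remaining} (undifferentiated) structure; verifying that this is legitimate and that the smooth-function space $\mathscr{D}$ is preserved so all manipulations are rigorous (which is guaranteed by Section \ref{section WP truncated}) is the technical crux. Once $(c_1)$ and $(c_2)$ are in hand, together with Lemmas \ref{lemma 1} and \ref{lemma 2} they sum to the differential inequality \eqref{goal}, and Gr\"onwall finishes the proof of Proposition \ref{prop1}.
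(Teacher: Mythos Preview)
Your overall strategy matches the paper's: differentiate, recognise that $i[\L_M,\Delta X_k]-\dot{\Delta X_k}=\Delta V_k$ and $i[\L_M,\Delta V_k]-\dot{\Delta V_k}=\Delta F_k$ (the force error from Lemma~\ref{lemma 2}), and then reduce everything to the summands of $\G_\Phi$. Part $(c_1)$ is essentially correct and is exactly what the paper does, using the Interpolation Lemma with $A=\Delta X_k$, $B=\Delta V_k$, $\lambda=-1$ to control the three orderings $\|\Delta X^2\Delta V\,\Phi_t\|$, $\|\Delta X\Delta V\Delta X\,\Phi_t\|$, $\|\Delta V\Delta X^2\,\Phi_t\|$. (A small slip: the Leibniz rule for $[\L_M,A^3]$ yields three terms each with \emph{exactly one} copy of $A$ replaced by the commutator, not ``one, two or three''; this is what forces you to keep all three orderings.)

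Your account of $(c_2)$ has a genuine gap. You locate the difficulty in an uncompensated $\sqrt N$ from the off-diagonal term and propose to kill it by replacing $(N_b+1)^{1/2}$ with $\sqrt M$ via the cut-off, picking up two factors of $(1+M/N)$ from ``cross terms between $\Delta V$ and $[\Delta V,a_x^*a_x]$''. But $[\Delta V_k,a_x^*a_x]=0$, and after the commutator with $N^{-1}P_k$ the $\sqrt N$ is already gone: $\Delta F_k^{(2)}$ carries a prefactor $N^{-1/2}$, not $\sqrt N$. The actual mechanism in the paper is different. Writing
\[
T_1=\|\Delta F_k\,\Delta V_k^2\,\Phi_t\|^2,\quad
T_2=\|\Delta V_k\,\Delta F_k\,\Delta V_k\,\Phi_t\|^2,\quad
T_3=\|\Delta V_k^2\,\Delta F_k\,\Phi_t\|^2,
\]
one first bounds $T_1$ directly using the force estimates \eqref{estimate delta F}--\eqref{estimate R N} and the Interpolation Lemma (with $A=\Delta X_\ell$ or $A=N^{-1/2}(N_b+1)^{1/2}$, $B=\Delta V_k$); this gives $T_1\lesssim(1+M/N)\G_\Phi$. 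Then $T_2$ is reduced to $T_1$ plus the commutator term $\tfrac1N[\Delta F_k,P_k]$, which brings in $\partial_k^2 w$ and, because of the $\mathcal A_{N,t}$ piece under the cut-off, a factor $(1+\sqrt{M/N}+M/N)^2\sim(1+M/N)^2$ multiplying $N^{-2}\|\Delta V_k\Phi_t\|^2\le\G_\Phi$. Finally $T_3$ is reduced to $T_2$ plus one more commutator, requiring $\partial_k^3 w$ --- this is the single place in the whole paper where $w\in C_b^3$ is used, and it is what you are missing. The $(1+M/N)^2$ thus comes from the commutator ladder $T_3\to T_2\to T_1$, not from two cut-off insertions in a $\sqrt N$ term.
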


\noindent We postpone the proof of Lemma \ref{lemma 1}, \ref{lemma 2} and \ref{lemma 3}  to the next subsection; let us first prove the very important  Proposition \ref{prop1}.

\begin{proof}[Proof of Proposition \ref{prop1}   ]
We will show in full detail the proof of \eqref{prop1 eq 1} and omit that of \eqref{prop1 eq 2}; the proof of the latter follows directly from \eqref{prop1 eq 1} and Lemma \ref{lemma 1}. Indeed, we put Lemma \ref{lemma 1}, \ref{lemma 2} and \ref{lemma 3} together  and we conclude that there is $C>0$ such that for all $t \geq 0 $,  $\ve \in ( 0,1)$, $ N, M  \geq 1 $  and $\Phi \in \mathscr{D}$ it holds that 
	\begin{equation}
 \partial_t 
\G_\Phi(t, \ve )
  \ \leq  \ C 
   (1 +    M/N )^2 
    \G_\Phi( 0 , \ve ) \, .
	\end{equation}
	Therefore, an application of the Gr\"onwall inequality yields 
	\begin{equation}
	\G_\Phi(t, \ve ) 
	 \  \leq   \ 
	\exp\big(   C  \,  t  \,   (1 +    M/N )^2   \,   \big) 
 \G_\Phi(0, \ve ) \, .
	\end{equation}
	In particular, it follows from the definition of $\G_\Phi $ that 
	\begin{equation}\label{gronwall estimate}
	 \| 
	 (N_b + 1 )^{3/2}  
  \   \U_{N,t}^{(M, \ve)} \Phi  
	   \|^2
	  \        \leq  \  
	   N^3  
	   	\exp\big(   C  \,  t  \,   (1 +    M/N )^2   \big) 
	  \   \G_\Phi( 0  ,  \ve  ) \, .  
	\end{equation}
Let us now take  
$ \Phi \equiv \Phi_N^\ve  \defeq u_{N,0}^\ve  \otimes \Omega \in \mathscr{D}$, where $u_{N,0}^\ve$ is the regularized version of the initial data $u_{N,0}$ satisfying Condition \ref{condition 1}. 
Thanks to \eqref{regularization un0}, it follows that 
\begin{equation}
\limsup_{\ve\downarrow 0}  \G_{\Phi_N^\ve}(0,\ve ) \lesssim N^{-3} \, .
\end{equation}
The proof is complete once we put the last two estimates together
for $ \Omega_{N,t }^{(\ve, M )} = \U_{N,t}^{(M,\ve)} \Phi_N^\ve$.  
\end{proof}

\subsection{Proof of Lemmata}

Let us now turn to the proof of Lemma \ref{lemma 1}, \ref{lemma 2} and \ref{lemma 3}. Since one should think about the following calculations as a regularized version of estimates done    for the original dynamics, we will drop the $\ve \in (0,1)$ superscript and assume instead that all of the involved quantities are smooth; since our estimates are uniform in $ \ve $, there is no risk in doing so. We also denote $\G_\Phi(t,\ve) \equiv \G_\Phi(t)$. 
 
 \vspace{1mm}
 
 The following (formal) \textit{Interpolation Lemma} will be helpful. Its use will only be put into use when all the manipulations find a rigorous meaning; for  a  proof see Appendix \ref{appendix interpolation}. 
 \begin{lemma}[Interpolation]\label{lemma interpolation}
 	Let $A$ and $B$ be self-adjoint operators such that $ \i [A , B] = \lambda \mathds{1}$ for some $\lambda \in \C  $. Then, there is $ c >  0 $ such that 
 	\begin{equation}
 	\|  A^2 B \Psi \|^2
 	 + 
 	 \| 	ABA   \Psi \|^2
 	 +
 	 \|   BA^2  \Psi  \|^2 
 	 \leq 
 	 c 
 	 \Big( 
 	 \|  A^3  \Psi 	\|^2 
 	 + 
 	 \| B^3 	 \Psi 	\|^2 
 	 + 
 	 \lambda^2  \| A 	 \Psi 	\|^2
 	 + 
 	  \lambda^2 \|     B   \Psi    \|^2
 	 \Big) 
 	\end{equation}
for all  $\Psi $. 
 \end{lemma}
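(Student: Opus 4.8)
The plan is to reduce the statement to the canonical Heisenberg relation and then close the estimate by a self-improving (bootstrap) argument, rather than by iterated interpolation. Since $A$ and $B$ are self-adjoint, $\i[A,B]$ is self-adjoint, so $\lambda\in\R$; if $\lambda=0$ the operators commute, admit a joint spectral resolution, and the bound follows by inserting the pointwise inequality $s^{4}t^{2}\le\tfrac23 s^{6}+\tfrac13 t^{6}$ into the functional calculus. If $\lambda\neq0$, I would replace $(A,B)$ by $(|\lambda|^{-1/2}A,\,|\lambda|^{-1/2}B)$: this is again a self-adjoint pair, now with $\i[A,B]=\pm\1$, and — because $\|A^{2}B\Psi\|^{2}$, $\|ABA\Psi\|^{2}$, $\|BA^{2}\Psi\|^{2}$, $\|A^{3}\Psi\|^{2}$, $\|B^{3}\Psi\|^{2}$ are squared norms of monomials of degree three while $\|A\Psi\|^{2}$, $\|B\Psi\|^{2}$ are of degree one — the rescaling reproduces exactly the weight $\lambda^{2}$ in front of the last two terms. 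Finally, from $A^{2}B=BA^{2}\mp2\i A$ and $ABA=BA^{2}\mp\i A$ (consequences of $\i[A,B]=\pm\1$) each of the first two norms differs from $\|BA^{2}\Psi\|$ by at most $2\|A\Psi\|$, so it suffices to prove, for $\Psi$ in a common dense core on which all quantities below are finite (this is where ``formal'' enters), that $\tau^{2}\lesssim q^{2}$, where $\tau\defeq\|BA^{2}\Psi\|$ and $q^{2}\defeq\|A^{3}\Psi\|^{2}+\|B^{3}\Psi\|^{2}+\|A\Psi\|^{2}+\|B\Psi\|^{2}$.

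For the bootstrap I would first record the one-variable interpolations $\|A^{2}\Psi\|^{2}=\langle A\Psi,A^{3}\Psi\rangle\le\|A\Psi\|\,\|A^{3}\Psi\|\le q^{2}$ and likewise $\|B^{2}\Psi\|\le q$. Setting $\tau'\defeq\|AB^{2}\Psi\|$ and using $B^{2}A=AB^{2}\pm2\i B$, one gets $\|AB\Psi\|^{2}=\langle B\Psi,A^{2}B\Psi\rangle\le\|B\Psi\|\big(\tau+2\|A\Psi\|\big)\le q\tau+q^{2}$, and by the mirror computation $\|BA\Psi\|^{2}\le q\tau'+q^{2}$. Then, since $B$ is self-adjoint and $B^{2}A^{2}=A^{2}B^{2}\pm2\i(AB+BA)$,
\begin{equation*}
\tau^{2}=\langle A^{2}\Psi,B^{2}A^{2}\Psi\rangle\le|\langle A^{3}\Psi,AB^{2}\Psi\rangle|+2\|A^{2}\Psi\|\big(\|AB\Psi\|+\|BA\Psi\|\big)\lesssim q\tau'+q^{3/2}\big(\tau^{1/2}+{\tau'}^{1/2}\big)+q^{2},
\end{equation*}
and, by the $A\leftrightarrow B$ symmetry of $q$ and of the hypothesis (up to the immaterial sign of $\lambda$), the same bound holds with $\tau$ and $\tau'$ interchanged.

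Adding the two inequalities and writing $U\defeq\tau+\tau'$, one obtains $U^{2}\lesssim qU+q^{3/2}U^{1/2}+q^{2}$; Young's inequality then absorbs a small multiple of $U^{2}$ into the left-hand side, leaving $U\lesssim q$, hence $\tau\le U\lesssim q$, which is the claim. The step I expect to be delicate — indeed the only genuinely delicate point — is precisely this absorption: it is legitimate only because $U<\infty$, which is why I would run the whole argument on a dense core where $\tau$ and $\tau'$ are a priori finite. Everything else is bookkeeping, and the scheme is designed exactly to sidestep the degree escalation that a naive repeated Cauchy–Schwarz would produce, where every step trades a degree-$3$ mixed monomial for a degree-$4$ one and never terminates.
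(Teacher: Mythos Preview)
Your proof is correct, and at its core it follows the same strategy as the paper's: expand a mixed cubic norm via the commutation relation into a pairing of a ``pure'' cube with the $A\leftrightarrow B$-mirrored mixed term, then symmetrize under $A\leftrightarrow B$ and absorb a small multiple of the left-hand side by Young's inequality. The paper does this more directly: it expands $\|ABA\Psi\|^2=\langle\Psi,AB^{2}A^{3}\Psi\rangle-2i\lambda\langle\Psi,ABA^{2}\Psi\rangle$ using $[A^{2},B]$, estimates each piece by $\varepsilon(\|AB^{2}\Psi\|^{2}+\|ABA\Psi\|^{2})+C_{\varepsilon}M(A,B)$, passes to $\|A^{2}B\Psi\|^{2}$ and $\|BA^{2}\Psi\|^{2}$ by the elementary shift $\|A^{2}B\Psi\|^{2}\leq 2\|ABA\Psi\|^{2}+2\lambda^{2}\|A\Psi\|^{2}$, swaps $A\leftrightarrow B$, adds, and takes $\varepsilon<1/20$. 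Your route is a bit longer: you first rescale to $\lambda=\pm1$, reduce to the single quantity $\tau=\|BA^{2}\Psi\|$, and use the second-order commutator $[A^{2},B^{2}]$, which forces you to carry auxiliary bounds on $\|A^{2}\Psi\|$, $\|AB\Psi\|$, $\|BA\Psi\|$ before you can close. Both arrive at the same absorption step; the paper's version just needs less bookkeeping, while yours makes the scaling structure (and hence the provenance of the $\lambda^{2}$ weights) more transparent.
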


\begin{proof}[Proof of Lemma \ref{lemma 1}]
Let us  fix $N,M \geq 1 $ and $\Phi \in \mathscr{D}$. 
We start by    recalling   the interaction terms    to be $\calI_M(t) = \chi(N_b \leq M ) ( \L^{( od) }(t) + \L^{(d)}(t) ) $. In particular, we may write the truncated, off-diagonal term as 
	\begin{equation}
  \chi(N_b \leq M )  \L^{( od) }(t) = \L_1(t) + \L_2(t) + \L_3(t)
	\end{equation}
	where we introduce the notation
	\begin{align}
	& \L_1(t) = 
	 \chi(N_b \leq M )
	\sqrt N 
\, \intt 
	\big(  
	\underline w (x, \bX)   
	-
	  \underline w
	  (x, \bX_t ) \big) \, 
	\big(
	a_x \overline{\vp_t(x)}   +   a_x^* \vp_t( x)				\nonumber
	\big) \, \d x \,  \\					
	& \L_2(t)  
	= 
	 \chi(N_b \leq M )
	\nonumber
	\frac{1}{2}
\, \inttt 
	 v( x- y) 
	\bigg(
	\vp_t( x) \vp_t( y) a_x^* a_y ^* 
	\, + \, 
	\overline{\vp_t } ( x)\, \overline{\vp_t  }(y ) a_x a_y 
	\bigg)
	\d x \d y \\
	& \L_3(t) =
	 \chi(N_b \leq M )
	\frac{1}{\sqrt N}
\, \inttt 
	v( x- y ) \, 
	a_x^* \, 
	\bigg(  
	\vp_t(y) 
	a_y ^* 
	\ +  \ 
	\overline{\vp_t }( y) \, 
	a_y 
	\bigg) \,  a_x \, \d x \d y \, .
	\end{align}
	The labeling has been chosen so that each $\L_i (t)$ depends on a  $i$-th power of the creation and annihilation operators.  
	
	\vspace{1mm}
	
	The terms   $\L_2(t)$ and $\L_3(t)$ have already been analyzed in the literature; we will take some commutator estimates from 
	\cite{Rodianski Schlein 2009}. More precisely, it is known that for all $ j \in \Z$ 
	there is $C(j)>0$ such that for all 
	smooth $\Psi \in \mathscr{D} $  and  $ t \geq 0 $   it holds that 
	\begin{equation}
	\big| 
	\< 
	\Psi , [  \L_2 (t) + \L_3(t) , (N_b  + 1)^j ]  
	\Psi 
	\>
	\big|
	\leq 
	C(j)
	 \big( 
1 +  \sqrt{M / N } 
	 \big) 
	   \< \Psi , (N_b + 1)^j \Psi \> \, .\label{lemma 1 eq 0}
	\end{equation}
The constant $C(j)$ depends on the potentials only through $|v|$, as the above estimates  are a consequence of  \textit{Particle Number Estimates} and the \textit{Potential Estimates}. 	Therefore, it suffices to focus on deriving similar commutator estimates for $\L_1(t) $ for the cases  $ j  = 1,2,3 \, .$
	
	\vspace{1mm}
	
	\noindent \textsc{Proof of $(a_1)$}. 
	Let $\Psi \in \mathscr{D}$ be smooth but arbitrary.  Starting from the definition of $\L_1(t)$ one easily finds that 
\begin{align} 
	\big|
	\langle    		 				\label{lemma 1 comm 1}    
	\Psi , 
	[\L_1(t) , (N_b +  &    1)^3] 
	\Psi 
	\rangle  	   
	\big| 	\\
       & \leq    
	\sqrt{ N } 
	\,
	\big| 						\nonumber 
	\langle  
	\Psi , \intt
			\big( 
		\underline w (x,  \bX) 
	-
	  \underline w(x,  \bX_t )
	  	\big) 
	[ a_x \overline{\vp_t}( x) + a_x^* \vp_t( x) , (N_b+ 1)^3 ] \Psi 
	\,   \d x 
	\rangle 
	\big|  \\ 
	&  \lesssim  			 
	\sqrt{ N } 										\nonumber 
	\,
	\big| 
	\langle  
	\Psi , \intt
	\big( 
		\underline w (x,  \bX) - \underline w 
		(x,  \bX_t  )	\big)  
	\, \overline{\vp_t}( x)  \, 						 
	[ a_x  , (N_b+ 1)^3 ] \Psi 
	\, \d x 
	\rangle    
	\big|  \, .
 	\end{align}
	The remaining commutator  can be controlled as follows: by means of the relation $a_x N_b = (N_b+1) a_x$ one finds that for any $ \ell  \in \N $ it holds that 
	\begin{equation}
	[a_x , (N_b +1)^\ell ]
	=
	\sum_{k = 0}^{ \ell  -1 }
	\binom{  \ell  }{ k } (N_b + 1 )^k a_x \, .
	\end{equation}
We plug this relationship back in \eqref{lemma 1 comm 1} and use the Cauchy-Schwarz inequality to find that 
	\begin{align}
	\big|
	\langle				\nonumber 
	\Psi , 
	[\L_1(t) ,     (N_b & +1)^3]
	\Psi 
	\rangle
	\big| 	 \\ 
 & 	 \lesssim  			\nonumber 
	\sqrt{ N }  \, 
	\sum_{ k = 0}^2 
	\, \binom{3}{k}
	\big| 
	\big\langle
	\Psi , \intt
	\big( 	
	\underline w (x,  \bX)
	 - 
	 \underline w (x,  \bX_t  )	\big)  
	\, \overline{\vp_t}( x)  \, 
	(N_b + 1)^k     a_x  \Psi 
	\, \d x 
	\big\rangle
	\big|   \\
	&  \lesssim  			\nonumber 
	\sqrt{ N }   
	\, 
	\|   (N_b+1)^{3/2} \Psi 			\|
	\, 
	\big\|      \intt
	\big( 	
	\underline w (x,  \bX) - 
	\underline w (x,  \bX_t  )	\big) 
 \, 	\overline{\vp_t}( x)  \, 
	(N_b + 1)^{1/2} a_x 
	 \d x  \,  \Psi 
	\big\|  \\ 
	& \lesssim 
	\sqrt{N } \, 					\nonumber 
	\|   (N_b+1)^{3/2} \Psi 			\| \, 
	\|  \Delta \bX(t)  (N_b + 1) \Psi      \| \\
	& \lesssim 
	\|   (N_b+1)^{3/2} \Psi 			\|^2  
	\ + \ 
	N  \, 
	\|   \Delta \bX(t) (N_b + 1) \Psi      \|^2			\label{lemma 1 eq 1} 
	\end{align}
	where,  in the third  line,  we have used a  Taylor estimate 
	together with the \textit{Particle Number Estimates}. 
	In order to control the second term that appears in \eqref{lemma 1 eq 1} we use the Interpolation  Lemma   for  $A = |\Delta \bX(t)|$, $B = N^{-1/2} (N_b + 1 )^{1/2}$ and $ \lambda = 0 $ to find that 
	\begin{equation}
 N 	 \, \|  \Delta \bX(t) \,  (N_b + 1) \Psi      \|^2		 
 \ \lesssim \ 
	  \|  (N_b + 1)^{3/2} \Psi    \|^2 
	+
N^3
	\|  | \Delta \bX(t)|^3   \Psi 		\|^2			 \, 
 	\label{lemma 1 eq 2} .  
	\end{equation}
	By putting together  estimates \eqref{lemma 1 eq 0}, \eqref{lemma 1 eq 1} and \eqref{lemma 1 eq 2} for 
	$\Psi = \U_{N,t}^{(M)} \Phi   $  
	 we find     
	\begin{align}
	\frac{d}{dt}		\nonumber 
	\|  (N_b + 1)^{3/2}   \U_{N,t}^{(M)} \Phi    \|^2 
	& \  \leq  \ 
	\big|
	\< 
 \U_{N,t}^{(M)} \Phi  , 
	[\L_1(t) + \L_2(t) + \L_3(t) , (N_b+1)^3]
  \ \U_{N,t}^{(M)} \Phi  
	\>
	\big| 	 \\ 
&  \ 	\lesssim  \ 
	\big(1 + \sqrt{M /N }			\big)N^3 
 \,  \G_\Phi (t)  \, . 
	\end{align}
	
	\vspace{1mm}

	\noindent \textsc{Proof of $(a_2)$}. 
We repeat the argument presented above;   the proof  is actually shorter. In particular, the estimate 
	\begin{align}
	\big|
	\< 
	\Psi , 
	[\L_1(t) , N_b+1]
	\Psi 
	\>
	\big| 	
	\lesssim 
	\| (N_b+1)^{1/2} \Psi    \|^2 
	\ + \ 
	N  \|  \,  \Delta \bX(t) \,  \Psi    \|^2 
	\end{align}
	is the analogous of \eqref{lemma 1 eq 1} but the operators $N_b$ and $\Delta X(t)$ are already decoupled.
	 For 
	$\Psi = \U_{N,t}^{(M)} \Phi $
	 the right hand side may be immediately bounded above by 
	$N^3   \,  \G_\Phi (t)  $; an application of \eqref{lemma 1 eq 0}  for $ j =1$   then finishes the proof. 
	
	\vspace{1mm}
	
		\noindent \textsc{Proof of $(a_3)$}. 
	Similarly as in the proof of $(a_1)$, we estimate the following commutator as 
	\begin{align} 
	\big|
	\langle													\nonumber 
	\Psi , 
	[\L_1(t) , (N_b+1)^4]
 & 	\Psi 
	\rangle
	\big| 	\\ 
	&   \lesssim  		 
 \sqrt N 
 \|  (N_b + 1 )^2 \Psi    \|
 \Big\|   
 \intt \big(  	\underline w (x, \bX) - \underline w (x, \bX_t )	\big)  \overline \vp_t  (x) (N_b + 1 ) a_x \Psi \d x \nonumber
 \Big\|
  \\
 & \lesssim 
  \sqrt N 
  \|  (N_b + 1 )^2 \Psi    \|
 \| 						  \nonumber
(N_b + 1 )^{3/2} \Psi 
 \|    \\
 & \lesssim 
 \| (N_b + 1 )^2 \Psi   \|^2 
 + 
 N
 \| 						   \label{a3 proof eq 1}
 (N_b + 1 )^{3/2} \Psi 
 \|^2 \, ,
 	\end{align}
where we have used boundedness of $w$ instead of a Taylor estimate to control the $\bX$-dependent term. We put together estimates \eqref{a3 proof eq 1} and \eqref{lemma 1 eq 0} for $ j = 4 $, to find that  for $\Psi  =  \U_{N,t}^{(M)} \Phi $ we have  
\begin{equation}
\frac{d}{dt} 
\| (N_b + 1 )^2   \U_{N,t}^{(M)} \Phi   \|^2 
\lesssim (1 + \sqrt{ M / N })  \| (N_b + 1 )^{2}    \U_{N,t}^{(M)} \Phi   \|^2 
+
N
 \| 						   
(N_b + 1 )^{3/2}  \U_{N,t}^{(M)} \Phi 
\|^2 \, .
\end{equation}
The proof of the lemma is finished after we apply the Gr\"onwall inequality. 
\end{proof}

\begin{proof}[Proof of Lemma \ref{lemma 2}]
Let us introduce simplifications in the notation:
\begin{enumerate}[label=(\roman*)]
	\item Since the following proof is conceptually independent of the number of tracer particles $m$, we shall assume for simplicity  that $ m =1 $.  
	In particular, we drop the boldface notation for $\bX $ and $\bV$; we write instead
	\begin{equation}
	\bX = X = (X_1 , X_2 ,X_3 )\in \R^3  \, ,
	\quad  \t{and } \quad 	
	w(x - X) = \underline w ( x , \bX ) 
	\end{equation}
	and similarly for velocity. 
	\item  	In order
	to avoid double subscripts, we write 
      \begin{equation}
       \bX_t = X_t = \big(X_1 (t) , X_2 (t) , X_3 (t) \big)
      \end{equation}  
	to denote time dependence of the components of the tracer's particle position, and similarly for velocity. 
	\item  Whenever there is no risk of confusion, we   write
	 \begin{equation}
	\Phi(t) 
	\equiv
	\Phi_t 
	\defeq \U_{N,t}^{(M)} \Phi  \, .
	\end{equation}
\end{enumerate}
Let us then  fix $ k \in \{  1,2,3\}$, together with $N,M \geq 1 $ and $\Phi \in \mathscr{D}. $

\vspace{1mm}

	\noindent \textsc{Proof of $(b_1)$}. 
Let us first calculate that 
	\begin{align}
	\frac{d}{dt} 
	\|  \Delta X_k(t)  \ 
	\Phi(t)
	\|^2 
	& =
	 \Big\langle 
	\Phi(t)  , 
	\Big( 
	\i [ - \frac{\Delta_{X }}{ 2N} , \Delta X_k(t)^2 ]
	- 2\, \Delta X_k(t) \dot X_k(t) 
	\Big) 
 \,  	\Phi(t)
	 \Big\rangle 
	\, .
	\end{align}
	Moreover, the commutation relations $ \i [P_\ell, X_k ] = \delta_{k,\ell}$ imply that the following identity holds
	\begin{equation}
	\i  [   P^2   ,  \, \Delta X_k(t)^2 ] 
	=
P_k  \Delta X_k(t)  
	+
	\Delta X_k(t)  P_k  \,  . 
	\end{equation}
Thus, since $\Delta V_k(t) = N^{-1} P_k - \dot X_k (t) $, we find  thanks to the Cauchy-Schwarz inequality that 
	\begin{align}
	\frac{d}{dt} 
	\|  \Delta X_k(t) 
	\Phi(t) 
	\|^2 
&  \ 	=  \ 
2 \mathrm{ Re} \nonumber 
	\<  
	\Phi(t) , 
	\Delta V_k(t) \Delta X_k(t) 
	\Phi(t) 
	\>  \\
&  \ 	\leq  \ 
	\|  \Delta V_k(t) \ 
	\Phi(t) 
	\|^2 
	+
	\|  \Delta X_k(t)  \ 
	\Phi(t) 
	  \|^2 
	\end{align}
	from which our claim  follows easily. 

\vspace{1mm}

	\noindent \textsc{Proof of $(b_2)$}. 
First, let us identify   the
term cointained in the  generator $\L_M(t)$   that does not commute with $P_k =  - \i \partial_{k }$; it is  given by 
	\begin{equation}
	\F_M(t) = 
	\chi(N_b \leq M )
	\intt 
	w(x - X )
	 \,  
	\big(
	N |\vp_t( x)|^2 + \sqrt{N} (a_x \overline{\vp_t}( x) + h.c)  - a_x^* a_x
	\big) \, \d x \, .
	\end{equation}
	Moreover, we recall that the trajectory $X(t)$ satisfies the mean-field equations 
	\eqref{mf equations}. Therefore, similarly as we calculated for $(b_1),$
	we find   
	\begin{align}
	\frac{d}{dt}   \nonumber 
	\| \Delta V_k(t) 	\Phi(t)  \|^2 
	&  \ =  \ 
	2\, 
	 \mathrm{ Re}
	\Big\langle 
	\Phi(t) , 
	\Big(
	    \i \, [\F_M(t) , N^{ -1 } \! P_k ] 
	 -
	  \ddot X_k (t)   \Big)
	   \Delta V_k(t) 
 \U_{N,t}^{(M)} \Phi    
	 \Big\rangle 
	 \\ 
	& 
	 \ \leq \ 
	\|    \Delta V_k(t)
	\Phi(t) 
	   \|^2 
	   +
	   	  \big\| 
	   \big( 
	   \i \, [\F_M(t) ,   N^{ -1 }P_k] \,
	   - \ddot X_k (t) 
	   \big)    
	   \Phi(t) 
	   \big\|^2 
	   \, .   \label{lemma 2 eq 1}
	\end{align}
	The first term that shows up in \eqref{lemma 2 eq 1} is clearly bounded above by 
	$N^2  \G_\Phi (t) $. Therefore, it remains to estimate the first one in terms of $ N^2 \G_\Phi(t)$. To this end, we  write 
	\begin{equation} \label{delta force}
\Delta F_k(t) 
 \ \defeq  \ 
	\i\,  [\F_M(t) ,  N^{-1 } P_k   ] \,
	- \ddot X_k (t) 
	\end{equation}
to denote the error in  the force term exherted on the tracer particle. Further, we decompose it as 
$\Delta F_k(t) = \Delta F_k^{(1)}(t)
	\, + \, 
	\Delta F_k^{(2)}(t)$, where 
	\begin{align}
	& 
	\Delta F_k^{(1)}(t)
	\  \defeq  \ 
  	\chi(N_b \leq M )
  	\intt 
	\big(  
	\partial _k w( x- X )		 - 		\partial_k w  (x - X_t )
	\big)
	 |\vp_t( x)|^2 \, \d x  \\
	&  
	\Delta F_k^{(2)}(t)
 \ 	\defeq \ 
		\chi(N_b \leq M )
	\intt 
	\partial_k  w (x - X )
	\bigg(
	\frac{1}{ \sqrt{N}} 
	(a_x \overline{\vp_t}( x) + h.c) 
	-
	\frac{1}{N} a_x^* a_x    
	\bigg) \, \d x \, .
	\end{align}
	Since $w \in C_b^2(\R^3 , \R )$, we may use a \textit{Taylor Estimate} and the \textit{Particle Number Estimates} to find that  the following inequalities 
	\begin{align}  
	&
	 \|
	   \Delta F_k^{(1)} (t) 
	   \Psi 
	     \| 
	     \leq 
	      \|\nabla^2 w\|_{L^\infty }  \|  |  \Delta X(t) | \Psi   \| 	 \label{estimate delta F}		\\
	& 
	 \|
	\Delta F_k^{(2)}(t) 
	\Psi 
	\| 
	\leq 
	  \frac{   \|\nabla w\|_{L^\infty }    }{ \sqrt N }
	  \big(     1  +    \sqrt{M /N }  \big) 
	     \| (N_b + 1 )^{1/2} \Psi   \| \, .			\label{estimate R N}
	\end{align}
hold, 	for all   $\Psi \in \mathscr{D}.$ By plugging these estimates back  in \eqref{lemma 2 eq 1} for $\Psi  = \Phi(t)$ one easily finds that  
	\begin{equation}
	\frac{d}{dt}   \nonumber 
	\| \Delta V_k(t) 
		\Phi(t) 
	 \|^2 
 \ 	\lesssim \ 
	( 1 +   M /N )
N^2   \,  \G_\Phi (t) \, .
	\end{equation}
	This finishes the proof of the lemma. 
\end{proof}

\begin{proof}[Proof of Lemma \ref{lemma 3}]
		 We will keep on using the same simplifications and notation   introduced at the beginning of  the proof of lemma \ref{lemma 2}. 
		Let us fix $ k \in \{  1,2,3\}$, together with $N,M \geq 1 $ and $\Phi \in \mathscr{D}. $

	\vspace{1mm}
	The following general  general calculation will turn out to be useful. Let $A(t)$ be a time-dependent, self-adjoint operator. As long as everything is well-defined, one formally has that 
	\begin{align}
	\frac{d}{dt} 									\nonumber
	\|  A(t)^3  	\Phi(t)    \|^2 
	& \  = \  
	\< 
	\Phi(t) 
	 , \Big(   \i [\L_M(t) , A(t)^6]   - 6 A(t)^5 \dot A(t)     \Big)   	\Phi(t) \> \\
	&  \   \lesssim  \ 			\nonumber
	\|   A(t)^3  
		\Phi(t) 
	\|^2 
	+ 
	\| A(t)^2 \Big(  \i [\L_M(t), A ]  - \dot A(t)    \Big) 	
		\Phi(t) 
	 \|^2  \\ 
	&  \ \quad +					\nonumber
	\| A(t) \Big(  \i [\L_M(t), A ]  - \dot A(t)    \Big) A(t) 
		\Phi(t) 
		 \|^2 \\
 &  \ \quad 	+
	\|   \Big(  \i [\L_M(t), A ]  - \dot A(t)    \Big) A(t)^2 	
		\Phi(t) 
		 \|^2  \, . 
	\end{align}
		\noindent \textsc{Proof of $(c_1)$}. 
		First, we specialize to $A(t) = \Delta X_k(t)$ for which the above estimate turns into
	\begin{align}\label{lemma 3 eq 1}
	\frac{d}{dt} \|  \Delta X_k(t)^3 
		\Phi_t			\nonumber
	  \|^2 
 \ 	\lesssim  \ 
& 	\|  \Delta X_k(t)^3  
		\Phi_t
	\|^2		 
 	+ 
	\|  \Delta X_k(t)^2 \Delta V_k(t)  
		\Phi_t
	\|^2  \\ 
	&  +
	\|  \Delta X_k(t) \Delta V_k(t) \Delta X_k (t)   
		\Phi_t 
	\|^2  +
	\|   \Delta V_k(t) \Delta X_k (t)^2  
		\Phi_t 
	\|^2 \,  .
	\end{align}
	It suffices to show that every term contained in the right hand side above may be bounded in terms of $\G_\Phi (t).$ The first one is trivial and follows from the definition, so we omit it.  The other three may be controlled using the Interpolation Lemma with $A = \Delta X_k(t)$, $B = V_k(t) $ and $\lambda = -1 $: we find that for all $ t \geq 0 $
	\begin{equation}
		\|  \Delta X_k(t)^2 \Delta V_k(t)  
	\Phi_t
	\|^2  +
	\|  \Delta X_k(t) \Delta V_k(t) \Delta X_k (t)   
	\Phi_t 
	\|^2  +
	\|   \Delta V_k(t) \Delta X_k (t)^2  
	\Phi_t 
	\|^2 \,  
 \, 	\lesssim  \, 
  \G_\Phi (t) \,  . 
	\end{equation}

	\vspace{1mm}
	\noindent \textsc{Proof of $(c_2)$}. 
	 Let us now specialize our  general calculation to 
	$A (t ) = \Delta V_k(t) $. We    find that 
		\begin{align}\label{lemma 3 eq 2}
	\frac{d}{dt} \|  \Delta V_k(t)^3 
	\Phi_t 
	  \|^2 
	 \  \lesssim  \ 
	 &
	\|  \Delta V_k(t)^3 \nonumber
\Phi_t 
	\|^2	 
	+ 
	\|  \Delta V_k(t)^2  
\Delta F_k(t)
\Phi_t 
	\|^2  \\ 
& 	 +
	\|  \Delta V_k(t) 			  
\Delta F_k(t)
	\Delta V_k(t) 
\Phi_t 
\|^2 \,  
   +
	\| 
\Delta F_k(t)
	\Delta V_k (t)^2   
\Phi_t 
	\|^2 \,  ,
	\end{align}
	where we recall that $\Delta F_{k}(t)$ was defined  in  \eqref{delta force}. Let us introduce some notation for the different terms that we found above: 
	\begin{align}
	& T_1
	 \  \defeq  \ 
	\| 				 
\Delta F_k(t)
	\Delta V_k (t)^2  
		\Phi(t) 
	\|^2 \,   \\ 		
	& T_2 
	 \ \defeq \ 
	\|  \Delta V_k(t) 			  
\Delta F_k(t)
	\Delta V_k(t) 
	\Phi(t) 
\|^2 \,  \\
	& T_3
 \ : 	= \ 
	\|  \Delta V_k(t)^2  			 
\Delta F_k(t)
	\Phi(t) 
	\|^2  \, .
	\end{align}
	Our goal is to show that $T_ i \lesssim (1 + M / N)^2  \G_\Phi (t) $ for $ i \in \{1,2,3\}$.

\vspace{1mm}
	\noindent  \textbf{\textit{\textbf{Bounds on} $  T_1:$}} 
	
	Thanks to the force estimates \eqref{estimate delta F}, \eqref{estimate R N} one easily finds that 
	\begin{align}
	T_1				\nonumber 
	 &  \  \lesssim  \ 
	\| 	 	 |\Delta X |   \, \Dv^2 
		\Phi(t) 	\|^2 
	+ 
	 (1 + M/N ) \, 
	  \| 	N^{ -1 /2 } 	(N_b + 1)^{1/2} \Dv^2 
		\Phi(t) 	\|^2  \\ 
		&  \ \lesssim \ 
		\sum_{ \ell  = 1 }^3 	\| 	 	  \Delta X_\ell  (t)    \, \Dv^2 
		\Phi(t) 	\|^2 
		+ 
		(1 + M/N ) \, 
		\| 	N^{ -1 /2 } 	(N_b + 1)^{1/2} \Dv^2 
		\Phi(t) 	\|^2
	\end{align}
Both the first and the second term that appear in the right hand side above can be controlled by $\G_\Phi (t)$ using the Interpolation Lemma: 

\textit{(i)} For the 
the first term choose:
  $A_\ell  = \Delta X_\ell (t)$, $B = \Delta V_k(t) $ with $\ell  \in \{ 1,2,3\}$ and $\lambda_\ell   \in \{ -1 , 0  \}$. 
  
\textit{(ii)}   For the  second term choose:
  $A = N^{-1/2} (N_b + 1 )^{ 1/2} $,  $B = \Delta V_k (t)$ and $\lambda = 0$.  \\
  \noindent We conclude that $T_1  \lesssim (1 +  M / N )\G_\Phi (t)$.
\vspace{2mm}

	\noindent  {\textit{\textbf{Bounds on} $  T_2:$}}
	
	 The idea will be to give an upper bound for $T_2$ in terms on $T_1$, plus a commutator term arising from
	$ [ \Delta F_k(t), \Dv ] =  \frac{1}{N} [ \Delta F_k(t) , P_k  ]$. For the sake of the upcoming calculations, let us briefly denote by
	\begin{equation}
	\mathcal A _{N,t}(x) 
	\defeq 
	\frac{1}{ \sqrt N } \big(  a_x \overline{\vp_t}( x)  + h.c     \big)  
	+
	\frac{1}{N } a_x^* a_x \,  ,   \qquad x \in \R^3 
	\end{equation}
	the operator-valued distribution, found in the definition of the force term $\Delta F_k (t)$.  
	 We find that 
	\begin{align}
	T_2
	\  \lesssim \ 
	T_1 
	\, +  \, 
	\frac{1}{N^2 } 
	\big\|  
 \, 	\chi (N_b \leq M )
	 \intt \partial_{k }^2 
	 w(  x -  X)
	\Big( 
 \, 	|\vp_t( x)|^2 
	+ \mathcal A_{N,t}(x) \, 
	\Big)    \d x 
	\Dv  
		\Phi(t) 
 	\big\|^2  \, .
	\end{align}
	Since $w \in C_b^2(\R^3  )$ and $ \| \vp_t \|_{L^2 } = \| \vp_0 \|_{L^2 }$ we find, thanks to the cut-off term $\chi(N_b \leq M)$ and the \textit{Particle Number Estimates},  that 
	\begin{align}
	T_2
	 \ \lesssim \ 
	T_1 
	\, + \, 
	\frac{1}{N^2 } 
	\big(
1 +     \sqrt{ {M} / {N}}  +{M}/ {N}
\big)^2
	 \|  \Dv 
	\Phi(t) 
	\| ^2 \, 
	\ \lesssim \
	T_1 
	 + 
	  	\big(
	  1 +      {M}/ {N}
	  \big)^2  \G_\Phi (t)  \, .
	\end{align}
	 Thanks to the bound $T_1  \lesssim (1 +  M / N )\G_\Phi (t)$ we conclude that 
	 $T_2 \lesssim (1 + M / N )^2 \G_\Phi (t)$. 
	
	\vspace{2mm}
	
	\noindent  {\textit{ \textbf{Bounds on  }$ T_3:$}}
	
	Similarly as before, we give an upper bound for $T_3$ in terms of $T_2$, plus commutator terms. In particular, the following estimate is the only one in this article that requires  three derivatives of the boson-tracer interaction $w$. We find 
	\begin{align}
	T_3 							 	\nonumber 
	  \   \lesssim \  & 
	T_2 
	\, + \, 
	\frac{1}{N^2 } 
	\big\| 
	\Dv 
	  	\, \chi (N_b \leq M )
	 \intt \partial_{k }^2 
	 w(x  -  X)
	\big( 
	|\vp_t( x)|^2  			+ 				\mathcal A _{N,t}(x) 
	\big)    \d x 
 \, 	\Phi(t) 
	\big\|^2   \\		
	 \ \lesssim \  & 						\nonumber 
	T_2 
	\, + \, 
	\frac{1}{N^2 } 
	\big\| 
		\, \chi (N_b \leq M )
		 \intt
		  \partial_{k }^2 w( x  - X)
	\big( 
	|\vp_t( x)|^2 
	+
	\mathcal A _{N,t}(x) 
	\big)    \d x \ 
	\Dv  
 \, 	\Phi(t) 
	\big\|^2  \\ 
	&     + 											\nonumber 
	\frac{1}{N^4 }
	\big\|   
		\, \chi (N_b \leq M )
	\intt 
	\partial_k^3 w(x  - X )
	\big( 
	|\vp_t( x)|^2 +
	\mathcal A _{N,t}(x) 
	\big)    \d x 
 \, 	\Phi(t)   
	\big\|^2		
	\\
	 \ \lesssim \  & 								\nonumber 
	T_2 
	+ 
	\frac{1}{N^2 }
		\big(
	1  
	\, +  \, 
	    \sqrt{ {M} / {N}}  +{M}/ {N}
	\big)^2
\|  \Dv  \, 
\Phi(t) 
\| ^2 \,   \\ 
  &    +  
	\frac{1}{N^4} 											\nonumber  
	\big(
1 +     \sqrt{ {M} / {N}}  +{M}/ {N}
\big)^2
\|   
\Phi(t) 
\| ^2 \,    \\ 
\ \lesssim \  & 
	T_2  \, +  \, 
	  (1 + M/N)^2  \,  
	\G_\Phi (t) 
	+
	N^{ - 4 } (1 + M/N)^2 
	\, .
	\end{align}
Thanks to the upper bound  $T_2 \lesssim (1  +  M / N )^2 \G_\Phi (t)$ and $N^{-4 } \leq \G_\Phi (t)$ we conclude  that 
$T_3 \lesssim (1 +  M / N )^2 \G_\Phi (t)$. This finishes the proof of the lemma. 
\end{proof}

\section{Well-Posedness of the Mean-Field Equations}\label{appendix WP mean field}

In this section, we establish   well-posedness of the mean-field equations, introduced in \eqref{mf equations}. The energy of the system will be described by the functional 
$E: \R^{6m } \times H^1(\R^3 ) \rightarrow \R $ defined as  
\begin{equation}\label{energy}
E( \bX,\bV, \vp )
\defeq
\frac{1}{2} \bV^2
+ 
 \|  \vp  \|_{H^1}^2
+
\, 
\frac{1}{2} \, 
\inttt 
| \vp(x)|^2 v(x - y ) | \vp (y) |^2 \d x \d  y
+
\, \intt 
\underline w (   x ,  \bX )  |\vp(x)|^2 \d x \, .
\end{equation}
The fundamental estimate satisfied by the energy functional $E$ is the   lower bound
\begin{equation}\label{energy inequality}
\bV^2 + \|\vp  \|_{H^1 }^2 
\, \leq  \, 
\, \mu   \, 
\big( 
E(\bX , \bV , \vp ) + \|\vp  \|_{L^2 }^2 + \|\vp \|_{L^2 }^6 
\big) \, , \qquad (\bX , \bV , \vp ) \in \R^{6m } \times H^1(\R^3 )
\end{equation}
where $\mu = \mu (v,w) >0$ is a constant  depending only on the potentials.  Its proof follows from the operator inequality \eqref{operator inequality} for $v$, and the boundedness of $w$. 

\vspace{1mm}

 For notational simplicity,     given a solution 
$  (  \bX_t   , \vp_t   ) $ of the mean-field equations, we shall simply write 
$E(t) = E( \bX_t ,    \dot \bX_t , \vp_t   )$. In addition,  we denote by $\bV_t = \dot \bX_t$ the velocity of the tracer particles. The main result of this section is

\begin{theorem}\label{thm 3}
	Assume   that $v$ and $w$ satisfy Condition \ref{cond 2}. Then, 
	for any initial condition $( \bX_0 , \bV_0 , \vp_0) \in \R^{6m } \times H^1(\R^3)$, there is a unique  global solution 
	$$(\bX ,\bV,  \vp )  \in 
	C\big(  \R ; \, \R^ {6m }  \! \times  \! H^1( \R^3 )	\big)$$
	 to the mean-field equations \eqref{mf equations} in mild form; it preserves the $L^2$-norm of the boson field and the total energy of the system. In particular,
	  for all $t \in \R$ it holds that  
	 \begin{equation}
	 \bV_t^2 
 \, 	+ \, 
	  \|  \vp_t\|_{H^1 }^2
	 \,  \leq  \, 
\mu 
	 \, 
	 \big(
	 E ( 0 ) 
	 +
\| \vp_0 \|_{L^2}^2 
	 +
 \|  \vp_0  \|_{L^2}^6
	 \big) 
	 \, .
	 \end{equation}
	 	 Further, if $\vp_0 \in H^2(\R^3)$, it holds that 
	$
  \vp   \in C^1\big(  \R ;    L^2( \R^3 )	\big) 
  \cap 
	C\big(  \R ;   H^2( \R^3 )	\big)  $  and \eqref{mf equations} holds in the strong sense. 
\end{theorem}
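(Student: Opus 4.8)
The plan is a standard contraction-mapping argument for local existence, followed by globalization via the conservation laws and the energy lower bound \eqref{energy inequality}. First I would recast the system as a coupled fixed-point problem: write the tracer equation as the first-order system $\dot\bX_t=\bV_t$, $\dot\bV_t=F(\bX_t,\vp_t)$ with $F(\bX,\vp)\defeq-\intt\nabla_\bX\underline w(x,\bX)|\vp(x)|^2\d x$, and the boson equation in mild (Duhamel) form
\begin{equation*}
\vp_t=e^{\i t\Delta}\vp_0-\i\int_0^t e^{\i(t-s)\Delta}\Big(\underline w(\cdot,\bX_s)\vp_s+(v*|\vp_s|^2)\vp_s\Big)\d s\,.
\end{equation*}
On the Banach space $C([0,T];\R^{6m}\times H^1(\R^3))$, the map sending $(\bX,\bV,\vp)$ to the right-hand sides of these three equations is well defined: since $w\in C_b^3\subset C_b^2$, multiplication by $\underline w(\cdot,\bX)$ is bounded on $H^1$ uniformly in $\bX$ and the assignment $\bX\mapsto\underline w(\cdot,\bX)\in\mathcal B(H^1)$ is Lipschitz; $J(\vp)=(v*|\vp|^2)\vp$ is locally Lipschitz on $H^1$ by the \emph{Local Lipschitz Property} \eqref{lipschitz}; and $F$ satisfies $|F(\bX,\vp)|\le m\|\nabla w\|_{L^\infty}\|\vp\|_{L^2}^2$ together with the matching Lipschitz bound in $(\bX,\vp)$ on $L^2$. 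A routine estimate then shows this map is a contraction on a suitable ball for $T=T(\|\vp_0\|_{H^1}+|\bV_0|)$ small, yielding a unique local mild solution whose existence time depends only on $\|\vp_0\|_{H^1}+|\bV_0|$.

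Next I would establish the two conservation laws. Because $\underline w(x,\bX_t)$ and $(v*|\vp_t|^2)(x)$ are real-valued, the boson equation has the form $\i\partial_t\vp_t=\mathcal H(t)\vp_t$ with $\mathcal H(t)$ symmetric, so $\|\vp_t\|_{L^2}=\|\vp_0\|_{L^2}$; to make the differentiation licit I would first carry it out for $\vp_0\in H^2$ (strong solutions, see below) and then pass to general $\vp_0\in H^1$ using the Lipschitz continuity of the data-to-solution map just obtained. Energy conservation follows identically: for $H^2$ data one differentiates $E(t)=E(\bX_t,\bV_t,\vp_t)$, substitutes both equations of \eqref{mf equations}, and observes that the terms generated by $\tfrac{d}{dt}\intt\underline w(x,\bX_t)|\vp_t(x)|^2\d x$ cancel against the power $\bV_t\cdot\ddot\bX_t$; one then passes to the $H^1$ limit using that $E$ is continuous on $\R^{6m}\times H^1$, the Coulomb term being controlled by \eqref{v estimate 3}. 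Combining the two conservation laws with \eqref{energy inequality} gives, on the maximal interval of existence, $|\bV_t|^2+\|\vp_t\|_{H^1}^2\le\mu\big(E(0)+\|\vp_0\|_{L^2}^2+\|\vp_0\|_{L^2}^6\big)$, hence $|\bX_t|$ grows at most linearly; since the local existence time depends only on $\|\vp_t\|_{H^1}+|\bV_t|$, the usual continuation argument rules out finite-time blow-up and the mild solution is global. The displayed a priori bound is then precisely \eqref{energy inequality} along the flow.

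For the $H^2$ statement I would rerun the contraction argument in $C([0,T];H^2(\R^3))$ for the boson field: multiplication by $\underline w(\cdot,\bX)$ is bounded on $H^2$ (using $w\in C_b^3$) and $J$ is locally Lipschitz on $H^2$ as well (the Coulomb convolution maps $H^2$ into $W^{1,\infty}$ by Sobolev embedding, and the second-derivative term is estimated by Calder\'on--Zygmund bounds exactly as in \cite[Lemma A.2]{Hott2021}). This produces $\vp\in C([0,T];H^2)$; reading off the equation, $\i\partial_t\vp_t=-\Delta\vp_t+\underline w(\cdot,\bX_t)\vp_t+(v*|\vp_t|^2)\vp_t\in C([0,T];L^2)$, so $\vp\in C^1([0,T];L^2)$ and \eqref{mf equations} holds strongly. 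A Gr\"onwall estimate on $\|\Delta\vp_t\|_{L^2}$ (the $H^1$ and $L^2$ norms being already globally bounded) keeps $\|\vp_t\|_{H^2}$ finite on bounded time intervals, so this solution is global too, and by the uniqueness statement in $C(\R;H^1)$ it agrees with the mild solution.

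The only genuinely delicate point is the rigorous justification of energy conservation for merely $H^1$ (mild) solutions; as indicated, this is circumvented by proving it for $H^2$ data, where all manipulations are classical, and then invoking the Lipschitz dependence on the initial data together with the continuity of $E$ on $\R^{6m}\times H^1$. The Coulomb singularity of $v$ never causes difficulty, being absorbed throughout by the \emph{Potential Estimates} \eqref{v estimate 1}--\eqref{v estimate 3} and the \emph{Local Lipschitz Property} \eqref{lipschitz}, which are already available.
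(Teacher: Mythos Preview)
Your proposal is correct and follows the same overall architecture as the paper: a contraction-mapping argument in $C([0,T];\R^{6m}\times H^1)$ for local existence, conservation of mass and energy established first for $H^2$ data and then extended to $H^1$ by continuous dependence, and globalization via the a~priori bound \eqref{energy inequality}. The one genuine difference is in the $H^2$ step. The paper does \emph{not} rerun the fixed-point argument in $H^2$; instead it shows (Lemma~\ref{lemma H2}) that the $H^1$ mild solution $t\mapsto\vp_t$ is Lipschitz as a map into $L^2$, so that the source $t\mapsto\underline w(\cdot,\bX_t)\vp_t+(v*|\vp_t|^2)\vp_t$ lies in $W^{1,1}_{loc}(\R;L^2)$, and then invokes the abstract regularity-propagation result \cite[Proposition~4.1.6]{Cazenave 1998} for linear Schr\"odinger equations with $W^{1,1}$ sources. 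Your route---verifying that $J$ is locally Lipschitz on $H^2$ (exploiting $\Delta(|x|^{-1}*|\vp|^2)=-4\pi\lambda|\vp|^2$ for the Coulomb part, or Calder\'on--Zygmund as you indicate) and closing a Gr\"onwall estimate on $\|\Delta\vp_t\|_{L^2}$---is more direct and self-contained but requires establishing the $H^2$-Lipschitz bound for the Hartree nonlinearity, which the paper sidesteps entirely. Either way works; the paper's argument trades a nonlinear estimate for a citation.
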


Its proof is divided into three steps: local well-posedness, propagation of $H^2$ regularity, and conservation laws. These steps are given as the following three lemmas.

\begin{lemma}[Local Well-Posedness]	\label{lemma local wp}
Let $( \bX_0 , \bV_0 , \vp_0) \in \R^{6m }\times H^1(\R^3 )$  be initial conditions for which the system has energy $E_0. $ 
Then, there exists $T = T (v , w , E_0, \| \vp_0\|_{L^2 })>0$ such that the system of equations 
\begin{align}
& \bX_t 
=
 \bX_0 + \int_0^t \bV_s \d s \\
& \bV_t = 
\bV_0 
-
 \int_0^t \int_{\R^3  } \nabla_\bX  \underline w  \big( x  , \bX_s \big)\, |\vp_s (x)|^2 \d x \\
&  \vp_t  =
 e^{- \i  t \Delta } \vp_0  
- \i \int_0^t e^{- \i ( t - s ) \Delta  } 
\Big(
(v* |\vp_s |^2 )  \, \vp_s 
+
\underline w  \big(  x ,  \bX_s    \big)
\vp_s 
\Big)  \, \d s \, 
\end{align}
has a unique solution 
$( \bX, \bV,\vp ) 
\in C\big(  [  0  ,T] ; \, \R^{6m }    \times   H^1( \R^3 )	\big)
 $.  In addition, the solution depends continuously on the initial data. 
\end{lemma}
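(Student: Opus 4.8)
The plan is to set up a standard contraction-mapping (Banach fixed-point) argument on the integral formulation, treating the three equations as a coupled system with unknown $(\bX, \bV, \vp)$. First I would fix the energy $E_0$ of the initial data, use the energy inequality \eqref{energy inequality} to get an a priori bound on $\| \vp_0 \|_{H^1}$ and $|\bV_0|$ in terms of $E_0$ and $\| \vp_0 \|_{L^2}$, and then choose a radius $R$ (a multiple of these bounds) and a time $T$ to be specified. The relevant complete metric space is
\begin{equation}
\mathcal{X}_{T,R} \defeq \big\{ (\bX, \bV, \vp) \in C\big([0,T]; \R^{6m} \times H^1(\R^3)\big) : \|(\bX,\bV,\vp) - (\bX_0,\bV_0,\vp_0)\|_{C([0,T])} \leq R \big\}
\end{equation}
equipped with the sup-in-time norm of the $\R^{6m} \times H^1$ norm. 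On this set I define the map $\Theta$ sending $(\bX,\bV,\vp)$ to the right-hand sides of the three integral equations; the free Schrödinger group $e^{-\i t \Delta}$ is a unitary on $H^1$, which is what makes the $\vp$-component well behaved.

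Next I would verify that $\Theta$ maps $\mathcal{X}_{T,R}$ into itself for $T$ small. For the $\bX$- and $\bV$-components this is elementary: $|\bX_t - \bX_0| \leq T \sup|\bV_s|$, and $|\bV_t - \bV_0| \leq T \sup_s \| \nabla_\bX \underline w(\cdot, \bX_s) \|_{L^\infty} \| \vp_s \|_{L^2}^2 \lesssim T \| w \|_{C_b^1} R^2$ using Condition \ref{cond 2}(1) and that $m$ is fixed. For the $\vp$-component I would estimate
\begin{equation}
\| \vp_t - e^{-\i t \Delta} \vp_0 \|_{H^1} \leq \int_0^t \big( \| (v * |\vp_s|^2) \vp_s \|_{H^1} + \| \underline w(\cdot, \bX_s) \vp_s \|_{H^1} \big) \d s \, ,
\end{equation}
where the first term is controlled by the map $J$ from the \textbf{Local Lipschitz Property} (taking $\psi = 0$ in \eqref{lipschitz}, or directly from \eqref{v estimate 2}), giving $\lesssim L(v) \| \vp_s \|_{H^1}^3$, and the second by $\| w \|_{C_b^1} \| \vp_s \|_{H^1}$ via the Leibniz rule. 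Both are $\lesssim T \cdot \mathrm{poly}(R)$, so choosing $T$ small forces the image back into the ball.

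The contraction estimate is the analogous computation applied to differences. Given two elements $(\bX, \bV, \vp)$ and $(\wti\bX, \wti\bV, \wti\vp)$ of $\mathcal{X}_{T,R}$, the $\bX$- and $\bV$-differences are controlled by $T$ times the sup-norm of the difference, using the Taylor/Lipschitz bound $|\nabla_\bX \underline w(x,\bX_s) - \nabla_\bX \underline w(x, \wti\bX_s)| \lesssim \| w \|_{C_b^2} |\bX_s - \wti\bX_s|$ (note $w \in C_b^3$ by Condition \ref{cond 2}, so two derivatives are available) together with a splitting $|\vp_s|^2 - |\wti\vp_s|^2$ estimated in $L^1$ by $(\| \vp_s \|_{L^2} + \| \wti\vp_s \|_{L^2}) \| \vp_s - \wti\vp_s \|_{L^2}$. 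For the $\vp$-difference, the nonlinear term uses \eqref{lipschitz} directly and the potential term uses $\| \underline w(\cdot, \bX_s) \vp_s - \underline w(\cdot, \wti\bX_s) \wti\vp_s \|_{H^1} \lesssim \| w \|_{C_b^1} \| \vp_s - \wti\vp_s \|_{H^1} + \| w \|_{C_b^2} |\bX_s - \wti\bX_s| \| \vp_s \|_{H^1}$, using that $\nabla_x \underline w$ and its derivatives are bounded and Lipschitz in $\bX$. Summing, $\Theta$ is Lipschitz on $\mathcal{X}_{T,R}$ with constant $C(v,w,R) \cdot T$, which is $< 1$ for $T$ small; Banach's fixed point theorem then gives the unique solution. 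Continuous dependence on initial data follows from a Grönwall estimate on the same difference inequalities once the data $(\bX_0,\bV_0,\vp_0)$ are also allowed to vary. The main obstacle is bookkeeping rather than conceptual: one must be careful that the $H^1$ estimate for the Hartree nonlinearity genuinely closes for the Coulomb part $v_1 = \lambda|x|^{-1}$, but this is exactly what the \textbf{Potential Estimates} \eqref{v estimate 2} and the cited Lipschitz bound \cite[Lemma A.2]{Hott2021} are designed to supply, so no new analysis is required here.
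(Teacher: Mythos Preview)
Your approach is correct and essentially identical to the paper's: a Banach fixed-point argument on the integral system, with self-mapping and contraction estimates obtained from the \textit{Local Lipschitz Property} \eqref{lipschitz}, the \textit{Potential Estimates}, and Taylor bounds on $w$. One small discrepancy: you define the ball $\mathcal{X}_{T,R}$ centered at the constant $(\bX_0,\bV_0,\vp_0)$, but your actual estimate for the $\vp$-component controls $\|\vp_t - e^{-\i t\Delta}\vp_0\|_{H^1}$, not $\|\vp_t - \vp_0\|_{H^1}$; the paper resolves this by centering the ball at $e^{-\i t\Delta}\vp_0$ for the field component, which is necessary since $\|e^{-\i t\Delta}\vp_0 - \vp_0\|_{H^1}$ cannot be made small uniformly over $\vp_0$ of bounded $H^1$ norm, and the lemma requires $T = T(v,w,E_0,\|\vp_0\|_{L^2})$.
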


\begin{remark}
The map $(\bX_0 , \bV_0 , \vp_0) \in \R^{6m} \times H^1(\R^3 ) \mapsto T \in (0,\infty )$	 is continuous; see \eqref{definition of M} and \eqref{definition of T}. 
\end{remark}

\begin{proof}
Let us fix some initial conditions $(\bX_0  , \bV_0 , \vp_0 ) \in \R^{6m} \times H^1 (\R^3 )$. In particular, it follows from \eqref{energy inequality} that the inequality
\begin{equation}\label{definition of M}
M^2  \defeq  \mu  (E_0 + \|\vp_0  \|_{L^2 }^2 + \| \vp_0 \|_{L^2}^6   ) \geq \| \vp_0 \|_{H^1 }^2 + |\bV_0|^2  \geq 0  \, , 
\end{equation}
holds. For   $T = T(M,v,w) > 0 $, we will set up a fixed-point argument in  the Banach space
\begin{equation}
Y_T \defeq 
 \Big\{  (\bX , \bV , \vp ) \in 
 C\big(  [  0  ,T] ; \, \R^{6m }  \! \times  \! H^1( \R^3 )	\big)
 \, \big| \, 
  \sup_{  |t | \leq T } 
  \Big(   |  \bV (t) - \bV_0       | + \| \vp(t) -  e^{ - \i t \Delta }\vp_0  \|_{H^1 } \Big) \leq M 
   \Big\} \, ,
\end{equation}
 endowed with the norm
$\| ( \bX   , \bV  , \vp ) \| \defeq  \sup_{   0 \leq t  \leq T } 
\big( 
 |\bX(t)| + |\bV(t)| + \| \vp(t) \|_{H^1 }
 \big)  $. 
 We consider the  map $\M = (\M_1 , \M_2, \M_3 )$ on $ C\big(  [  0  ,T] ; \, \R^{3m } \times \R^{3m }  \! \times  \! H^1( \R^3 )	\big)$ defined as  
\begin{align}
  \mathcal{M }_1 (\bX , \bV , \vp )(t )   & \defeq 
   \bX_0 + \int_0^t \bV_s \d s \\
   \mathcal{M }_2 (\bX , \bV , \vp )(t )    & \defeq 
   \bV_0 
   -
   \int_0^t \int_{\R^3  } \nabla_\bX  \underline w  \big( x  , \bX_s \big)\, |\vp_s (x)|^2 \d x \d s   \\
    \mathcal{M }_3 (\bX , \bV , \vp )(t )   & \defeq 
     e^{- \i  t \Delta } \vp_0  
    - \i \int_0^t e^{- \i ( t - s ) \Delta  } 
    \Big(
    (v* |\vp_s |^2 )  \, \vp_s 
    +
    \underline w  \big(  x ,  \bX_s    \big)
    \vp_s 
    \Big)  \, \d s   \, . 
\end{align}
Thanks to Banach's fixed point theorem, existence and uniqueness of solutions is established once we show that $\mathcal{M}$ is a contraction from $Y_T$ into itself, for $T$ small enough. We will show that such $T$ can be chosen to   depend only on  $M$,  $v$ and $w$. The details  of  the proof of continuity with respect to initial data are left as  an exercise to the reader. 

\vspace{1mm}

\noindent \textsc{Proof of  $\mathcal{M } ( Y_T) \subset Y_T$}. Let 
$( \bX , \bV , \vp) \in Y_T $ and 
  $t \in [  0 , T] $. Then,  we have the estimates
\begin{align}
|   \mathcal{M}_2 ( \bX  , \bV , \vp ) (t) - \bV_0 | 
& \leq  \, T  \, 
\|  w \|_{C_b^1 }   
\sup_{   0 \leq \tau   \leq T }  \|     \vp_\tau  \|_{L^2}^2 
  \\ 
\|   
\mathcal{M }_3 (\bX, \bV  , \vp )(t) - e^{ - \i t \Delta }\vp_0 
\|_{H^1 }
& \leq 
T
		\sup_{  0 \leq \tau   \leq T  }
\big( 
\|   (v * |\vp_\tau |^2 ) \vp_\tau  \|_{H^1 }
+
\|  w (x, \bX_s) \vp_s  \|_{H^1 }					\nonumber 
\big)  \\
& \leq \, T  \, 
( L(v) + 2 \| w \|_{C_b^1 }) 
\sup_{  0 \leq \tau   \leq T  }
\big( 
 \|      \vp_\tau \|_{H^1}^3  + \|\vp_\tau  \|_{H^1 }
\big) 
\end{align}
where, in deriving the last inequality, we have used   the Lipschitz property \eqref{lipschitz} for $\psi = 0$.  Thanks to the triangle inequality and the definition of $M$ and $Y_T$, it follows that 
\begin{equation}\label{M bound}
 \| \vp_\tau  \|_{H_1  } 
 +
  | \bV_\tau | \leq 
 \| e^{ - i \tau \Delta } \vp_0  \|_{H^1 } + |\bV_0 | +    \|   \vp_\tau - e^{ - \i \tau \Delta }\vp_0   \|_{H^1 }
 +
 |\bV_\tau - \bV_0 |
\leq  3  M \, .
\end{equation}
Therefore, we may combine the last three estimates to conclude that there is 
$ C_{v,w}^{(1)}>0$ such that 
for all $ t \in [0,T ]  $ 
\begin{equation}
|   \mathcal{M}_2 ( \bX  , \bV , \vp ) (t) - \bV_0 | 
+
\|   
\mathcal{M }_3 (\bX, \bV  , \vp )(t) - e^{ - \i t \Delta }\vp_0 
\|_{H^1 }
\leq
C_{v,w}^{(1)}
 \, T \, ( M + M^3 ).
\end{equation}
Thus, $\mathcal{M}   $ maps $Y_T$ into itself provided  
$T \leq  ( C_{v,w}^{(1)} (1 + M^2))^{-1}$.  

\vspace{1mm}

\noindent \textsc{Proof of Contractivity}. 
Consider $(\bX , \bV , \vp )$ and $(\tilde \bX , \tilde \bV , \tilde \vp )$ in $Y_T$, and let $t \in [ 0 , T ]$. For the position variables, one simply has that 
\begin{align}
|   \mathcal{M}_1 ( \bX  , \bV , \vp ) (t) 
-  
  \mathcal{M}_1 ( \tilde \bX  ,  \tilde \bV ,  \tilde \vp ) (t)  | 
& \leq 
 T  \textstyle \sup_{ 0 \leq \tau   \leq T  }  
 |\bV_\tau - \tilde \bV_\tau |  \, .
\end{align}
For the velocity variables, we   use a Taylor estimate and the triangle inequality to find that 
\begin{align}
 |   \mathcal{M}_2 ( \bX  , \bV , \vp &	) (t) 					 
  -
  \mathcal{M}_2 ( \tilde \bX  ,  \tilde \bV ,  \tilde \vp ) (t)  |  \\ 
& \leq 												\nonumber		
\textstyle  T \sup_{  0 \leq \tau   \leq T  }
 \Big( 
\|  w\|_{C_b^1 }(\| \vp_\tau  \|_{L^2}+ \| \tilde \vp_\tau  \|_{L^2}   ) 
\|  \vp_\tau  - \tilde \vp_\tau    \|_{L^2 }
+ 
\| w \|_{C_b^2 }\| \vp_\tau    \|_{L^2}^2  |   \bX_\tau -  \tilde \bX_\tau     |
 \Big) . 
  \end{align}
For the boson fields, we use the   Lipschitz property  \eqref{lipschitz}, a Taylor estimate and the triangle inequality to find that 
  \begin{align} 
\|   \mathcal{M}_3 ( \bX  , \bV , \vp ) (t)		\nonumber 
 \,  -  \,   
   \mathcal{M}_3 &  ( \tilde  \bX  , \tilde  \bV , \tilde \vp ) (t) \|_{H^1 } \\
& \  \leq 			 \ 		  
\textstyle 				\nonumber 
T \sup_{ 0 \leq \tau   \leq T  } 
L(v) (  \|  \vp_\tau  \|_{H^1}^2   +   \|  \tilde \vp_\tau  \|_{H^1}^2    ) 
\| \vp_\tau  - \tilde \vp_\tau   \|_{H^1 } \\ 
&  \quad  \ +  \ T \sup_{  0 \leq \tau   \leq T } 
3 \| w  \|_{C_b^2 }   (    \|  \vp_\tau  \|_{H^1 } | \bX_\tau - \tilde \bX_\tau |  
+\|  \vp_\tau - \tilde \vp_\tau   \|_{H^1 } )  \, .
\end{align}
Since the last three estimates are uniform in $t \in [ 0 ,T]$, we put them together  to find that there is $C_{v,w}^{(2)}> 0 $ such that 
\begin{equation}
\|   \M  (\bX, \bV , \vp )  -  \M (\tilde  \bX , \tilde \bV, \tilde \vp )\| 
\leq T 
\, 
C_{v,w}^{(2)}
( M + M^2 )
\|   (\bX, \bV , \vp )  - (\tilde  \bX , \tilde \bV, \tilde \vp )\| 
\end{equation}
where we have used $\eqref{M bound}$ as well. Since $M + M^2 \leq 2 (1 + M^2)$, it follows that by choosing 
\begin{equation}\label{definition of T}
T(M,v,w)  \defeq \min 
\Big\{
 \frac{1}{ C_{v,w}^{(1)} } ,
 \frac{1}{4 C_{v,w}^{(2)}}
\Big\} 
\frac{1}{ 1 + M^2 }
\end{equation}
the map $\M$ becomes Lipschitz continuous, with Lipschitz constant $1/2$. In particular, it becomes a contraction and the proof is complete. 
\end{proof}

The following proof of propagation of regularity is heavily inspired by \cite[Chapter 4]{Cazenave 1998}. 

\begin{lemma}[Propagation of $H^2$ Regularity]\label{lemma H2}
	Let 
	$(\bX, \bV ,\vp )
	\in C\big(  [ 0  ,T] ; \, \R^{6m }  \! \times  \! H^1( \R^3 )	\big)
	$ be  as in Lemma \ref{lemma local wp} with initial data   $\vp_0 \in H^2(\R^3)$. Then,  we have  that 
$  \vp \in 
C^1\big( [ 0  ,T ] ; L^2( \R^3 )	\big) 
\cap 
	C\big(  [  0  ,T ] ;     H^2( \R^3 )	\big) $
	and the mean-field equation \eqref{mf equations} holds in the strong sense. 
 
\end{lemma}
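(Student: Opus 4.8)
The plan is to prove propagation of $H^2$ regularity by a standard bootstrap argument based on the Duhamel (mild) formulation already established in Lemma \ref{lemma local wp}, combined with the conservation-type a priori bound \eqref{energy inequality} that keeps $\|\vp_t\|_{H^1}$ controlled on the existence interval. First I would fix the local solution $(\bX,\bV,\vp)\in C([0,T];\R^{6m}\times H^1)$ given by Lemma \ref{lemma local wp} with $\vp_0\in H^2(\R^3)$, and set up a new fixed-point argument --- or equivalently a Gr\"onwall estimate on the already-constructed solution --- in the space $C([0,T'];H^2(\R^3))$ for $T'\le T$ possibly smaller. The free Schr\"odinger group $e^{-\i t\Delta}$ is an isometry on $H^2$, so applying $(1-\Delta)$ to the Duhamel formula
\begin{equation}
\vp_t = e^{-\i t\Delta}\vp_0 - \i\int_0^t e^{-\i(t-s)\Delta}\big( (v*|\vp_s|^2)\vp_s + \underline w(x,\bX_s)\vp_s\big)\,\d s
\end{equation}
reduces matters to estimating $\|(v*|\vp_s|^2)\vp_s\|_{H^2}$ and $\|\underline w(x,\bX_s)\vp_s\|_{H^2}$ in terms of $\|\vp_s\|_{H^2}$ and already-controlled quantities.

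The key steps, in order: (1) For the boson self-interaction term $J(\vp)=(v*|\vp|^2)\vp$, I would prove an $H^2$ analogue of the local Lipschitz bound \eqref{lipschitz}, namely $\|J(\vp)\|_{H^2}\lesssim P(\|\vp\|_{H^1})\,\|\vp\|_{H^2}$ for a polynomial $P$; this uses the Leibniz rule, the fact that $v=\lambda|x|^{-1}+v_2$ with $v_2\in L^\infty$, the Hardy--Littlewood--Sobolev and Sobolev embeddings ($H^1(\R^3)\hookrightarrow L^6$, $H^2\hookrightarrow L^\infty$), and follows the scheme of \cite[Lemma A.2]{Hott2021} and \cite[Chapter 4]{Cazenave 1998}. (2) For the boson-tracer term, since $w\in C_b^3(\R^3)$ (Condition \ref{cond 2}), multiplication by $\underline w(x,\bX_s)=\sum_\ell w(x-X^{(\ell)}_s)$ is bounded on $H^2$ with norm $\lesssim \|w\|_{C_b^2}$, uniformly in $s$. (3) Combining (1)--(2) with the a priori bound $\sup_{s\in[0,T]}\|\vp_s\|_{H^1}\le C(E_0,\|\vp_0\|_{L^2})$ from \eqref{energy inequality}, a Gr\"onwall inequality applied to $t\mapsto\|\vp_t\|_{H^2}$ yields $\sup_{t\in[0,T]}\|\vp_t\|_{H^2}\le C(T,E_0,\|\vp_0\|_{L^2},\|\vp_0\|_{H^2})<\infty$, so no finite-time blow-up of the $H^2$ norm occurs and the solution stays in $H^2$ for all $t\in[0,T]$. (4) Weak continuity in $H^2$ plus norm control upgrades, via the standard argument (e.g.\ \cite[Chapter 4]{Cazenave 1998}), to strong continuity $\vp\in C([0,T];H^2)$; and then from the Duhamel formula the right-hand side $-\i\big((v*|\vp_t|^2)\vp_t+\underline w(x,\bX_t)\vp_t\big)$ is continuous with values in $L^2$, so $\vp\in C^1([0,T];L^2)$ and the PDE in \eqref{mf equations} holds strongly.

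The main obstacle I expect is step (1): controlling the Coulomb self-interaction $(|x|^{-1}*|\vp|^2)\vp$ in $H^2$. One must differentiate the convolution twice, which produces terms like $(|x|^{-1}*\partial^2(|\vp|^2))\vp$ and $(|x|^{-1}*\partial(|\vp|^2))\partial\vp$; bounding these requires care with the interplay between the Riesz-type kernel and the available regularity --- in particular one uses that $\nabla(|x|^{-1}*f)$ and $|x|^{-1}*f$ map $L^{6/5}\to L^2$ and $L^{3/2}\to L^6$ respectively by Hardy--Littlewood--Sobolev, together with Sobolev embeddings for $\vp\in H^2\hookrightarrow W^{1,6}\cap L^\infty$. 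Once this nonlinear estimate is in hand, everything else is routine: the $C_b^3$ regularity of $w$ makes the tracer coupling harmless, and the energy bound \eqref{energy inequality} supplies exactly the $H^1$ control needed to close the Gr\"onwall loop and rule out $H^2$ blow-up, so that local $H^2$ well-posedness is automatically global on the interval $[0,T]$ where the $H^1$ solution exists --- and hence, by the global existence already proven in Lemma \ref{lemma local wp} iterated via \eqref{energy inequality}, on all of $\R$.
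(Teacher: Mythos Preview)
Your approach is correct but takes a different route from the paper. You propose to estimate the nonlinearity $J(\vp)=(v*|\vp|^2)\vp$ directly in $H^2$ and then close a Gr\"onwall loop on $\|\vp_t\|_{H^2}$. The paper instead never touches the $H^2$ norm of the nonlinearity: it shows that $t\mapsto\vp_t$ is Lipschitz as a map into $L^2$ (this uses only $H^1$ control on $\vp$ and the $L^2$-Lipschitz estimate $\|F(t,\phi)-F(t,\psi)\|_{L^2}\lesssim(\|\phi\|_{H^1}^2+\|\psi\|_{H^1}^2)\|\phi-\psi\|_{L^2}$), deduces that the source term $t\mapsto F(t,\vp_t)$ lies in $W^{1,1}_{\mathrm{loc}}(\R;L^2)$, and then invokes the abstract regularity result \cite[Proposition 4.1.6]{Cazenave 1998} for the inhomogeneous linear Schr\"odinger equation to conclude propagation of $H^2$. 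What the paper's approach buys is that the Coulomb singularity is handled entirely at the $L^2$ level via the easy estimate \eqref{v estimate 2}, avoiding the second-derivative bookkeeping you flag as the main obstacle; what your approach buys is self-containment (no black-box citation) and a more quantitative $H^2$ bound, at the cost of working through the HLS/Sobolev estimates for $\partial^2 J(\vp)$ --- which, incidentally, simplify considerably for the pure Coulomb piece since $-\Delta(|x|^{-1}*f)=4\pi f$.

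Two minor points. First, your invocation of \eqref{energy inequality} for the $H^1$ a priori bound is premature: energy conservation (Lemma following this one) has not yet been established, so on $[0,T]$ you should simply use that $\vp\in C([0,T];H^1)$ from the local theory. Second, ``fixed-point argument'' and ``Gr\"onwall on the already-constructed solution'' are not quite interchangeable here, since the latter presupposes $\vp_t\in H^2$; the clean version is to run the contraction in $C([0,T'];H^2)$ and then identify with the $H^1$ solution by uniqueness.
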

\begin{proof}
In terms of  $\bX \in C^1([  0 ,T] , \R^{3m})$,  as given by   Lemma \ref{lemma local wp},  we  define  the map 
\begin{equation}
 F(t, \phi ) \defeq  (v * |\phi  |^2 )\phi + w\big(x  , \bX_t  \big) \phi   \, , 
 \qquad 
  (t, \phi ) \in [ 0 , T] \times H^1(\R^3 ) \, .
\end{equation}
In particular, a  Taylor estimate combined with the \textit{Potential Estimate}
\eqref{v estimate 2}, imply that the following Lipschitz estimates are satisfied
\begin{align}
& \|  F(t  ,  \phi  ) 					\label{F estimate 1}
- 
F( s , \phi  )  
\|_{L^2 }
 \  \lesssim   \ 
\sup_{ 0 \leq \tau   \leq T }
 \! 
|  \dot \bX_\tau  |
 \ 
\|  \phi \|_{L^2 }
\ 
 |t-s |  \\
& \|  						\label{F estimate 2}
F(t, \phi  )
-
F (t, \psi )
\|_{L^2 }
\  \lesssim \
(  \|  \phi   \|_{H^1  }^2  + \| \psi  \|_{ H^1  }^2   ) \|   \phi  - \psi  \|_{L^2 } \,   . 
\end{align}
Let $\vp \in C([  0  ,T] , H^1)$ be the solution constructed in Lemma \ref{lemma local wp}. 
Our goal will be to show that $t \in [  0  ,T] \mapsto \vp_t \in L^2$ is a Lipschitz map. Indeed,
after a change of variables, we are able to write  
$\vp_t 
=
 e^{ - \i t \Delta  } \vp_0 - \i \int_0^t e^{ - \i s  \Delta }  F(t -s , \vp_{t -s } ) \d t $
 for all $t \in ( 0  ,T )$. 
 Thus, for all $h$ small enough we  find thanks to \eqref{F estimate 1} and \eqref{F estimate 2}
 \begin{align}
 \|  \vp_{t + h } - \vp_t 		\|_{L^2}
 \  \lesssim  \ 
  & 
 |h| \, 
   \| \Delta \vp_0  \|_{L^2 }			\nonumber 
 +
 \int_{t}^{t + h }  \|   F(t + h - s , \vp_{t + h -s }   ) \|_{L^2   }
 \d s  \\ 
&  +
 \int_{ 0 }^{t   }	
   \|  
    F(t + h - s , \vp_{t + h -s }  ) 
   -										\nonumber 
    F(t  - s , \vp_{t  -s }  ) 	  
    \|_{L^2 }
    \d s  \, ,  \\ 
     \ \lesssim \  & 
 \,     | h|  \, 
      \big(
      \| \Delta \vp_0 		\|_{L^2 }		
      + \nu^3 
        \big) 
        +
        \int_0^t 
        \big(
          \nu^2
          |h|						  
         +
         \nu^2 \|  \vp_{t + h -s } - \vp_{t -s } \|_{L^2 }
        \big) \d s \, , 
 \end{align}
where $\nu \defeq \sup_{  0 \leq t \leq  T  }
(  \,  \| \vp_t  \|_{H^1}  +  | \dot \bX_t   | \,  ) < \infty $.  A second change of variables in the last integral let us apply the Gr\"onwall inequality and conclude that for some $C>0$, independent of $t$ and $h$, it holds that 
\begin{equation}
\| \vp_{t + h } - \vp_t 		\|_{L^2 }
\leq 
C
e^{ C \nu^2 t } 
\big(
\|  \Delta \vp_0  \|_{L^2 }
+
\nu^3 
+
\nu^2 t 
 \big) \, |h |   \, .
\end{equation}
Consequently, the map $t \in [  0  ,T ] \mapsto  f(t ) \defeq F(t, \vp_t )  \in L^2(\R^3 ) )$ is   Lipschitz continuous and, in particular, $f \in W^{1,1}    (  0 ,T ;  L^2(\R^3 ) )  $. 
One may then view the mean-field equation for $\vp_t $ as a semi-linear Schr\"odinger equation with source term $f$. In particular $W^{1,1}$ regularity of the source term implies that $H^2$ regularity of the solution is propagated in time; see \cite[Proposition 4.1.16]{Cazenave 1998}.          
\end{proof}

In order to obtain global well-posedness, we prove that both the energy  of the system and the $L^2$ mass of the boson field are constant in time. Consequently, the time $T >0$--introduced in Lemma \ref{lemma local wp}--may be used to patch local solutions and  cover the real line.

\begin{lemma}[Conservation Laws]
	Let $(\bX, \bV ,\vp )
	\in C\big(  [ 0  ,T] ; \, \R^{6m }  \! \times  \! H^1( \R^3 )	\big)
	$ be  as in Lemma \ref{lemma local wp}. 
	Then,  it holds that for all $t \in [ 0  ,T]$
	\begin{equation}
 \|  \vp_t \|_{L^2 } = \|  \vp_0 \|_{L^2 } \qquad \text{ and } \qquad 	E(t) = E (0 ) \, .
	\end{equation}
where $E(t)$ is the energy of the system defined in \eqref{energy}. 
\end{lemma}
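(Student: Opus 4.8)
The plan is to establish the two conservation laws by differentiating the conserved quantities in time and showing the derivatives vanish, working first under the additional assumption $\vp_0 \in H^2(\R^3)$ so that all formal manipulations are justified by Lemma \ref{lemma H2}, and then removing this assumption by density and continuous dependence on initial data (Lemma \ref{lemma local wp}).

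For the $L^2$-norm, under the $H^2$ assumption $\vp \in C^1([0,T];L^2) \cap C([0,T];H^2)$ and the equation $\i \partial_t \vp_t = -\Delta \vp_t + \underline w(x,\bX_t)\vp_t + (v*|\vp_t|^2)\vp_t$ holds strongly in $L^2$. Then $\frac{d}{dt}\|\vp_t\|_{L^2}^2 = 2\,\re\<\vp_t, \partial_t \vp_t\> = 2\,\re\<\vp_t, -\i(-\Delta\vp_t + \underline w\vp_t + (v*|\vp_t|^2)\vp_t)\> = 2\,\im\<\vp_t, -\Delta\vp_t + \underline w\vp_t + (v*|\vp_t|^2)\vp_t\>$, and each of the three terms is a real inner product (self-adjointness of $-\Delta$ on $H^2$, and multiplication by the real-valued functions $\underline w(x,\bX_t)$ and $(v*|\vp_t|^2)(x)$), so the imaginary part is zero. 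For the energy, I would differentiate $E(t) = \frac12 \bV_t^2 + \|\vp_t\|_{H^1}^2 + \frac12 \inttt |\vp_t(x)|^2 v(x-y)|\vp_t(y)|^2\,\d x\,\d y + \intt \underline w(x,\bX_t)|\vp_t(x)|^2\,\d x$. Using $\dot \bX_t = \bV_t$, $\dot \bV_t = -\intt \nabla_\bX \underline w(x,\bX_t)|\vp_t(x)|^2\,\d x$, and collecting the $\vp_t$-derivative terms, the contribution involving $\partial_t\vp_t$ assembles (after integrating $-\Delta$ by parts to pair $\|\vp_t\|_{H^1}^2$ with $\langle \vp_t, -\Delta \partial_t \vp_t\rangle$ type terms) into $2\,\re\<\partial_t\vp_t,\, -\Delta\vp_t + \underline w(x,\bX_t)\vp_t + (v*|\vp_t|^2)\vp_t\> = 2\,\re\<\partial_t\vp_t,\, \i\partial_t\vp_t\> = 0$; the remaining terms are the kinetic cross-term $\bV_t\cdot\dot\bV_t$ and the explicit time-dependence $\intt \dot\bX_t\cdot\nabla_\bX\underline w(x,\bX_t)|\vp_t(x)|^2\,\d x$, which cancel precisely by the Newton equation for $\bV_t$. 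Care is needed with the factor $\frac12$ on the self-interaction term: its $t$-derivative produces $2 \cdot \frac12 = 1$ times $\re\<\partial_t\vp_t, (v*|\vp_t|^2)\vp_t\>$ after symmetrizing in $x\leftrightarrow y$ using that $v$ is even, matching the Hartree nonlinearity correctly.

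To pass to general $\vp_0 \in H^1(\R^3)$, I would pick $\vp_0^{(j)} \in H^2(\R^3)$ with $\vp_0^{(j)} \to \vp_0$ in $H^1$; by Lemma \ref{lemma local wp} the corresponding solutions converge in $C([0,T];H^1)$ on the common existence interval (noting $T$ depends continuously on the data, so a uniform $T$ can be chosen for $j$ large), hence $\|\vp_t^{(j)}\|_{L^2} \to \|\vp_t\|_{L^2}$ and, since $E$ is continuous on $\R^{6m}\times H^1(\R^3)$, $E^{(j)}(t) \to E(t)$; passing to the limit in $\|\vp_t^{(j)}\|_{L^2} = \|\vp_0^{(j)}\|_{L^2}$ and $E^{(j)}(t) = E^{(j)}(0)$ gives the claim. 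The main obstacle is the energy computation: one must be scrupulous about integration by parts in the $H^1$-term (justified since $\vp_t \in H^2$ and $\partial_t\vp_t \in L^2$), about the combinatorial factors arising from differentiating the quartic self-interaction term, and about verifying that the Newton-equation term cancels the kinetic cross-term exactly rather than up to a sign — these are the places where an error would hide, whereas the $L^2$ conservation is essentially immediate.
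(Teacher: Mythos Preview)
Your proposal is correct and follows essentially the same approach as the paper: first establish both conservation laws for $H^2$ data by direct differentiation (using the regularity from Lemma \ref{lemma H2} and self-adjointness of the Schr\"odinger operator $h(t)=-\Delta + v*|\vp_t|^2 + \underline w(\cdot,\bX_t)$), then pass to $H^1$ data by density and the continuous dependence on initial data from Lemma \ref{lemma local wp}. The paper phrases the energy computation via the Fr\'echet derivative of $E$ rather than expanding term by term, but this is purely cosmetic; your explicit bookkeeping of the factor $\tfrac12$ and the symmetrization in the quartic term, and your remark that a uniform $T$ can be arranged for the approximating sequence, are details the paper leaves implicit.
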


\begin{proof} 
Let us first consider   the case $\vp_0 \in H^2 (\R^3 )$
and let $(\bX , \bV, \vp)$
be the local solution of the mean-field equations, as given by Lemma \ref{lemma local wp}.
Thanks to Lemma      \ref{lemma H2}, the boson field remains in $H^2$ 
and satisfies $  \dot \vp_t =  - \i \, h(t) \vp_t$, where we introduce the self-adjoint operator 
\begin{equation}
h(t) \defeq - \Delta + v*|\vp_t|^2 + \underline w \big( x , \bX_t \big) \, ,\qquad t 
\in [ 0 ,T] \, .
\end{equation}

\vspace{1mm}

\noindent \textsc{Conservation of $L^2$-norm}. 
We differentiate at $ t\in( 0 ,T)$ to find
\begin{equation}
 \frac{d}{dt} \| \vp_t \|_{ L^2 }^2  
 =  2 \mathrm{ Re} 
 \< 
 \vp_t ,   \i  \, h(t) \vp_t 
 \>_{L^2} = 0
\end{equation}
thanks to self-adjointness of $h(t)$. It suffices to  apply the Fundamental Theorem of Calculus.

\vspace{1mm}
\noindent \textsc{Conservation of energy}.
Let $D E_Z : \R^{6m}\times H^1(\R^3 ) \mapsto \R $ be the Fr\'echet derivative of the energy functional $E$, evaluated at $Z $. Then, the chain rule gives  at $ t\in( 0 ,T)$  
\begin{align}
 \frac{d}{dt} 
 E(t)					\nonumber 
  &  \ = \ 
 D E_{(\bX_t ,   \bV_t , \vp_t )}  ( \dot \bX_t, \dot \bV_t, \dot \vp_t )  \, , \\ 
&  \  = \ 
\intt 
\nabla_\bX  \underline w \big( x,  \bX_t  \big) |\vp_t(x)|^2 \d x  \cdot \dot \bX_t  
 \, + \, 
\bV_t \cdot \dot \bV_t  
 \, + \, 
2 \mathrm{ Re} 
 \<  h(t) \vp_t  , \dot \vp_t  \>_{L^2} \, .
\end{align}
Thanks to the mean-field equations, the first and second term cancel out. The third term vanishes, thanks to self-adjointess of $h(t)$. 

\vspace{1mm}

Finally, for the case $\vp_0 \in H^1(\R^3 )$ we take a sequence $(\vp_{0,n})_{n \geq 1 } \subset  H^2(\R^3 )$ that converges to $\vp_0 $ in $H^1(\R^3 )$. In particular, thanks to continuity with respect to intial data, it holds that
\begin{equation}
\lim_{ n \rightarrow \infty }  \|\vp_{t,n} - \vp_{t}\|_{H^1 } 
+
 | \dot \bX_{t,n} -\dot  \bX_{t} | 
 +
  | \bX_{t,n} - \bX_{t} | 
= 0 
 \, , \qquad \forall t \in  [   0 ,T ]
\end{equation}
where $(\bX_t, \vp_t)$ solves \eqref{mf equations} with initial data $(\bX_0 , \bV_0, \vp_0) $ and  $( \bX_{t,n} ,    \vp_{t,n}) $ solves \eqref{mf equations} with initial data
$(\bX_0, \bV_0 , \vp_{n,0 })$. Thus, it easily follows that for all $ t \in  [  0  ,T ] $
\begin{equation}
E(t) = E( \bX_t , \dot \bX_t , \vp_t ) =
 \lim_{n \rightarrow \infty } 
E (\bX_{t,n} , \dot \bX_{t,n } , \vp_{t,n }) =
 \lim_{ n \rightarrow \infty } 
 E (\bX_{0} ,   \bV_0 ,   \vp_{0,n}) = E_0 \, ,
\end{equation}
which concludes the proof. 
\end{proof}

\section{Well-posedness of the Regularized, Truncated Dynamics}\label{section WP truncated}
 Let $N,  M \geq 1$ and $\ve \in (0,1)$. Throughout this section, we shall drop the regularization paremeter $\ve \in (0,1)$ and  assume instead that the potentials and the  initial data that we work with satisfy the stronger assumptions:
 \begin{equation}\label{norms}
 v \in L^\infty (\R^3 ) \, , \quad w \in \S(\R^3 ) \, , \quad \t{and}\quad \vp_0 \in H^2(\R^3 )\, .
 \end{equation}
 We will then  only apply well-posedness results presented in this section,  to  the truncated  dynamics.  
 Indeed, we now 
  turn our attention
  to  the study of the  Cauchy problem for solutions
 of the equation  
 \begin{equation}\label{WP truncated dynamics}
  \i \partial_t  
  \U_{N}^{(M)} (t,s )
    =  \L_M    (t)   \U_{N}^{(M)} (t,s )  
\, , \quad  \t{and}  \quad   
   \U_{N}^{(M)} (t,t  ) = \1    \, , \qquad  t,s \in \R \, . 
 \end{equation} 
Here,   $\L_M (t)$ is the regularized, truncated generator introduced in  \eqref{truncated generator}.   
 We will decompose it in the form 
\begin{align}
\L_M     (t)   \ = \    \mathbb{H}    \  + \         
     \calI_M (t) \,  ,  
\quad \t{where} \quad 
\mathbb{H}  \ \defeq  \ 
 - \frac{1}{2N} \Delta_{\bX}  \, +  \, T_b  \, + \,  \calI_4  \, .
\end{align}

\begin{remark}
$\mathbb{H}  $  is a self-adjoint operator
with  a domain $\calD(  \mathbb H  )$ satisfying the anlogous characterization \eqref{domain characterization}, but with $N_b^2$ instead of  $N_b^3$; this thanks to boundedness of $v$.  In particular, its domain becomes a Banach space when endowed with the norm
\begin{equation}
\|  \Psi \|_{ \mathbb H }
\defeq 
\| \Psi  \| 
+
\|  \mathbb{H}  \Psi  \| \, , 
\qquad \Psi \in \calD ( \mathbb H )\, .
\end{equation}	
Further,   note that $\mathbb H $ commutes with both the particle number operator $N_b$ and the  momentum operator $\bP = - \i \nabla_{\bX }$.  
\end{remark}
Thanks to the particle number cut-off, the potentials being bounded,
and the boson field being in 
$C^1(\R, L^2_x )$, the interaction term $\calI_M(t)$ enjoys the following two regularity properties.

\begin{lemma}\label{lemma lipschitz}
Let $T \geq 0 $ and  $N, M \geq 1$. Then, the following holds:

\vspace{1mm}

\noindent (1) $\sup_{t \in \R} \| \calI_M(t)  \|_{\mathcal B  (\H)} < \infty . $ 

\vspace{1mm}

\noindent (2) There   exists $C = C( v,w ,T , N,M)$ such that 
	\begin{equation}
	 \|   
	 \calI_M(t) - \calI_M(s)
	  \|_{\mathcal{B} (\H)}
	  \leq C    \,  |t -s | \, , \qquad \forall t,s \in  [0  ,T ]  \, .
	\end{equation}
\end{lemma}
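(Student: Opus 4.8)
The plan is to prove both statements by estimating, term by term, the finitely many pieces of $\calI_M(t) = \chi(N_b \le M)\,\calI(t) = \chi(N_b \le M)\big(\L^{(d)}(t) + \L^{(od)}(t)\big)$, using the \textit{Particle Number Estimates} and the \textit{Potential Estimates} together with the elementary bound $\|\,N_b^{\,j}\chi(N_b\le M)\,\|_{\mathcal B(\H)} \le (M+1)^{j}$ valid for every $j\ge0$. For part (1), I would note that in this section $v\in L^\infty(\R^3)$ and $w\in\mathcal S(\R^3)\subset C_b^3(\R^3)$, and that by Theorem \ref{thm 3} one has $\|\vp_t\|_{L^2}=\|\vp_0\|_{L^2}$ for all $t$ and $\sup_{t\in\R}\|\vp_t\|_{H^1}<\infty$. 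Consequently every functional norm that appears when the \textit{Particle Number Estimates} are applied is bounded uniformly in $t$: for instance $\sup_\bX\|(\underline w(\cdot,\bX)-\underline w(\cdot,\bX_t))\vp_t\|_{L^2}\le 2m\|w\|_\infty\|\vp_0\|_{L^2}$ (using only that $\underline w$ is bounded, independently of $\bX_t$), $\|v*|\vp_t|^2\|_\infty\le\|v\|_\infty\|\vp_0\|_{L^2}^2$, and $\sup_x\int v(x-y)^2|\vp_t(y)|^2\,\d y\le\|v\|_\infty^2\|\vp_0\|_{L^2}^2$. Since the bosonic monomials of degree $1,2,3$ occurring in $\L^{(od)}(t)$ and $\L^{(d)}(t)$ are controlled by $(N_b+1)^{1/2}$, $N_b+1$ and $(N_b+1)^{3/2}$ respectively---each of which is a bounded operator of norm at most $(M+1)^{3/2}$ on the range of $\chi(N_b\le M)$---and the $\sqrt N$, $N^{-1/2}$ prefactors are mere constants here, summing the pieces gives $\sup_{t\in\R}\|\calI_M(t)\|_{\mathcal B(\H)}<\infty$, with a bound depending on $v,w,\vp_0,N,M$.

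For part (2) I would first record two Lipschitz facts on the compact interval $[0,T]$. Since $\bX\in C^1$ and the mean-field equations bound $\ddot\bX_t$, we get $|\bX_t-\bX_s|\le\big(\sup_{[0,T]}|\dot\bX|\big)|t-s|$. Since $\vp_0\in H^2(\R^3)$ in this section, Lemma \ref{lemma H2} gives $\vp\in C([0,T];H^2)$, hence $\sup_{[0,T]}\|\vp_t\|_{H^2}<\infty$; from the equation $\dot\vp_t=-\i\big(-\Delta\vp_t+\underline w(x,\bX_t)\vp_t+(v*|\vp_t|^2)\vp_t\big)$ we then obtain that $\|\dot\vp_t\|_{L^2}$ is bounded on $[0,T]$, so $\|\vp_t-\vp_s\|_{L^2}\le C(v,w,T,\vp_0)\,|t-s|$.

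Then I would run the same term-by-term decomposition on the difference $\calI(t)-\calI(s)$ (note that $\chi(N_b\le M)$ is a fixed projection and $\calI(\cdot)$ depends on time only through $(\bX_t,\vp_t)$). In each term I would telescope the difference of products, e.g.\ $\vp_t(x)\vp_t(y)-\vp_s(x)\vp_s(y)=\vp_t(x)(\vp_t(y)-\vp_s(y))+(\vp_t(x)-\vp_s(x))\vp_s(y)$, and use $(\underline w(x,\bX)-\underline w(x,\bX_t))-(\underline w(x,\bX)-\underline w(x,\bX_s))=\underline w(x,\bX_s)-\underline w(x,\bX_t)$, which by the \textit{Taylor Estimate} is bounded by $m\|\nabla w\|_\infty|\bX_t-\bX_s|$ pointwise in $(x,\bX)$; for the convolution term one uses $\|v*(|\vp_t|^2-|\vp_s|^2)\|_\infty\le\|v\|_\infty(\|\vp_t\|_{L^2}+\|\vp_s\|_{L^2})\|\vp_t-\vp_s\|_{L^2}$. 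Applying the \textit{Particle Number Estimates} exactly as in part (1) but with one factor now replaced by $\|\vp_t-\vp_s\|_{L^2}$ or $|\bX_t-\bX_s|$ (each $\le C|t-s|$ by the two facts above), and summing the finitely many contributions, yields $\|\calI_M(t)-\calI_M(s)\|_{\mathcal B(\H)}\le C(v,w,T,N,M)|t-s|$.

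The only real difficulty here is bookkeeping: one must make sure that in the regularized setting every bosonic monomial gets paired with a functional norm of $\vp$ that is both uniformly bounded and $L^2$-Lipschitz in $t$. This is exactly why $v$ is assumed bounded in this section---it lets one replace each \textit{Potential Estimate} of the shape $\sup_x\int v(x-y)^2|\vp(y)|^2\,\d y\le|v|^2\|\vp\|_{H^1}^2$ by the cruder but sufficient $\|v\|_\infty^2\|\vp\|_{L^2}^2$, so that only the $L^2$-valued (rather than $H^1$-valued) Lipschitz continuity of $t\mapsto\vp_t$ is needed, which in turn follows cleanly from the propagated $H^2$ regularity.
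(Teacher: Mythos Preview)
Your proof is correct and follows essentially the same approach as the paper: both exploit the particle number cut-off to reduce the bosonic monomials to bounded operators, use $v\in L^\infty$ in place of the sharper \textit{Potential Estimates}, and obtain the Lipschitz bound in (2) by telescoping and invoking the $L^2$-Lipschitz continuity of $t\mapsto\vp_t$ coming from $H^2$ propagation (the paper phrases this last point as $\vp\in C^1(\R,L^2_x)$ and writes $\vp_t-\vp_s=\int_s^t\partial_\tau\vp_\tau\,\d\tau$). Your treatment is slightly more explicit than the paper's in that you spell out the Lipschitz control of $\bX_t$ needed for the $\underline w(x,\bX_t)$-dependent pieces, which the paper absorbs into the phrase ``the estimate for each one of these is very similar.''
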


\begin{proof} (1) Using the \textit{Particle Number Estimates} for $v ,w \in L^\infty (\R^3 )$ one   finds, thanks to the particle number cut-off, that for all $ t \in \R$
\begin{equation}
\|  \calI_M(t)  \|_{   \mathcal B (\H) }
 \, \lesssim  \, 
C(N,M) \big( \| v   \|_{L^\infty }   + \|  w \|_{L^\infty }  \big)  (1 + \| \vp_t \|_{L^2}^2)
\end{equation}
from which our claims follows easily.

\vspace{1mm}

\noindent (2) Let $t,s \in [0  ,T ]$ and write the difference as 
\begin{equation} 
\calI_M(t) - \calI_M(s)
= \chi(N_b \leq M ) 
\big(
N \Delta \calI_0(t,s)
+
N^{1/2}\Delta \calI_1(t,s)
+
 \Delta \calI_2(t,s)
 +
 N^{ -1 /2 } \Delta \calI_3(t,s)
  \big) 
\end{equation}
where each term $\Delta \calI_i(t)$ that has beed  introduced above, depends on a $i$-th power on creation and annihilation operator. Since the estimate for each one of these is very similar, we shall only show one in detail. Indeed, for the cubic term we have that 
\begin{align}
\| 															\nonumber 
\chi(N_b \leq M )
\Delta \calI_3 (t,s)
 \| 
 & \,  \lesssim  \, 
\| 
\chi(N_b \leq M )
\inttt v(x  -t ) a_x^*
 \big(
 \vp_t(y) - \vp_s (y)
  \big) 
  a_y^* a_x 
  \d x \d y 
\| \\
 & \,  \lesssim  \, 								\nonumber 
M^{3/2}
\big( \sup_{x \in \R^3 }
\intt  v (x - y)^2 
|  ( \vp_t - \vp_s ) (y)|^2
\d y
\big)^{1/2}  \\ 
& \,  \lesssim  \, 
\| v \|_{L^\infty } M^{3/2} \| \vp_t - \vp_s\|_{L^2 }
 \,   \lesssim  \, 
 \| v \|_{L^\infty } M^{3/2}  
 \sup_{   \tau \leq T    } \|  \partial_\tau \vp_\tau \|_{L^2 } |t -s | \, , 		\label{lemma 7.1 eq 1}
\end{align}
where, in the last line, we have used   $\vp_t  - \vp_s   = \int_s^t \partial_\tau \vp _\tau \d \tau $. Since 
$ (t \mapsto \vp_t) \in C^1(\R, L^2_x)$, the Lipschitz constant of the right hand side of \eqref{lemma 7.1 eq 1} is finite. 
\end{proof}

We  combine Lemma \ref{lemma lipschitz}, together with \cite[Theorem X.69]{ReedSimonVol2} and \cite[Proposition 4.1.16]{Cazenave 1998} to prove existence, uniqueness and basic propagation of regularity for the propagator of the dynamics
defined by \eqref{WP truncated dynamics}. Note that a similar result could have been obtained using the general theory of Kato \cite{Kato1973}.

\begin{proposition}\label{proposition propagator}
	There exists a   unique
	unitary propagator
	 $ \big( \U_{N}^{(M)}(t,s) \big)_{t,s\in \R} $  such that 
	 for all $\Psi_0 \in \calD(\mathbb H )$ and $s \in \R $, 
	 the map
	 $ t \in \R  \mapsto \Psi(t) \defeq \U_{N}^{(M)}( t , s ) \Psi_0   \in \H  $ satisfies:
	 
	 \vspace{1mm}
	 
	 \noindent (1) $ \Psi \in C^1 (\R , \H )  \cap  C(\R , \calD( \mathbb  H  ))$
	 
	 \vspace{1mm}
	 
	 \noindent (2) $ \i \partial_t \Psi (t) = \L_M(t) \Psi(t)$ holds in $\H$ \, .
\end{proposition}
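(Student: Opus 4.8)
The plan is to treat the truncated generator as a bounded, time-dependent perturbation of the \emph{fixed} self-adjoint operator $\mathbb H$ and to run the classical interaction-picture (Dyson series) construction; this is precisely the setting of \cite[Theorem X.69]{ReedSimonVol2}, and Lemma \ref{lemma lipschitz} supplies the two hypotheses that make it apply. Since $\mathbb H$ is self-adjoint (see the Remark above), Stone's theorem gives the unitary group $e^{-\i t \mathbb H}$. Put $\widetilde{\calI_M}(t) \defeq e^{\i t \mathbb H}\, \calI_M(t)\, e^{-\i t\mathbb H}$. By Lemma \ref{lemma lipschitz}$(1)$ one has $\sup_{t}\|\widetilde{\calI_M}(t)\|_{\mathcal B(\H)} < \infty$, while strong continuity of $e^{\pm \i t\mathbb H}$ together with Lemma \ref{lemma lipschitz}$(2)$ makes $t\mapsto \widetilde{\calI_M}(t)$ strongly continuous. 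The Volterra equation $V(t,s) = \1 - \i\int_s^t \widetilde{\calI_M}(r)\, V(r,s)\,\d r$ then has a unique solution, given by the norm-convergent Dyson series, defining a strongly continuous family of bounded operators with $V(t,r)V(r,s) = V(t,s)$ and $V(t,t) = \1$; undoing the conjugation, $\U_{N}^{(M)}(t,s) \defeq e^{-\i t\mathbb H} V(t,s) e^{\i s\mathbb H}$ is a strongly continuous propagator solving the mild (integral) form of \eqref{WP truncated dynamics}.

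Next I would upgrade this to the strong form and establish propagation of $\calD(\mathbb H)$. Fix $\Psi_0\in\calD(\mathbb H)$ and set $\Psi(t) \defeq \U_N^{(M)}(t,s)\Psi_0$; it obeys the inhomogeneous equation $\i\partial_t\Psi = \mathbb H\Psi + f(t)$ with $f(t) \defeq \calI_M(t)\Psi(t)$. From the construction, $\Psi\in C(\R;\H)$, and since $\calI_M(\cdot)$ is bounded and locally Lipschitz in $\mathcal B(\H)$ (Lemma \ref{lemma lipschitz}), a short bootstrap makes $\Psi$ locally Lipschitz, hence $f\in W^{1,1}_{\mathrm{loc}}(\R;\H)$. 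The standard propagation-of-regularity result for $\i u' = \mathbb H u + f$ with $f\in W^{1,1}_{\mathrm{loc}}$ and $u(s)\in\calD(\mathbb H)$ --- \cite[Proposition 4.1.16]{Cazenave 1998}, applied with $\mathbb H$ in place of $-\Delta$ --- then gives $\Psi\in C^1(\R;\H)\cap C(\R;\calD(\mathbb H))$ with $\i\partial_t\Psi = \L_M(t)\Psi$ holding in $\H$, which is $(1)$--$(2)$. Uniqueness is a Grönwall argument: if $\widetilde{\U}$ is another propagator with these properties, then $w(t)\defeq(\U_N^{(M)}(t,s)-\widetilde{\U}(t,s))\Psi_0$ solves the same equation with $w(s) = 0$, Duhamel gives $\|w(t)\|\le \int_s^t \|\calI_M(r)\|_{\mathcal B(\H)}\|w(r)\|\,\d r$, so $w\equiv 0$, and density of $\calD(\mathbb H)$ yields $\U_N^{(M)} = \widetilde{\U}$.

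For unitarity I would argue as follows: $\calI_M(t)$ is bounded and symmetric, so by Kato--Rellich $\L_M(t) = \mathbb H + \calI_M(t)$ is self-adjoint on $\calD(\mathbb H)$; hence for $\Psi_0\in\calD(\mathbb H)$ the strong solution satisfies $\tfrac{\d}{\d t}\|\Psi(t)\|^2 = 2\,\mathrm{Im}\,\langle\Psi(t),\L_M(t)\Psi(t)\rangle = 0$, so each $\U_N^{(M)}(t,s)$ is an isometry; running the same construction for $s\ge t$ produces its two-sided inverse, so it is unitary. This also confirms $\U_N^{(M)}(t,s)\mathscr D\subset\mathscr D$ in the subsequent applications, by the same inhomogeneous argument applied to higher powers of $\bX$, $\bP$ and $N_b$.

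The one genuinely delicate step is the passage from the mild solution to a strong solution with values in $\calD(\mathbb H)$: mere norm-continuity of $\calI_M(\cdot)$ would not suffice here, and it is exactly the Lipschitz bound of Lemma \ref{lemma lipschitz}$(2)$ --- which itself rests on $\vp_t\in C^1(\R;L^2)$ and the particle-number cut-off rendering $\calI_M(t)$ bounded --- that places the source term $f$ in $W^{1,1}_{\mathrm{loc}}$ and lets the Cazenave-type lemma close. Everything else (Dyson series, cocycle identity, uniqueness, unitarity) is routine once the uniform operator bound of Lemma \ref{lemma lipschitz}$(1)$ is in hand.
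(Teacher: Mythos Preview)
Your proof is correct and follows essentially the same route as the paper: pass to the interaction picture, build the propagator via the Dyson series using \cite[Theorem X.69]{ReedSimonVol2} and Lemma \ref{lemma lipschitz}, then bootstrap Lipschitz continuity of $t\mapsto\Psi(t)$ to put the source $f(t)=\calI_M(t)\Psi(t)$ in $W^{1,1}_{\mathrm{loc}}$ and invoke \cite[Proposition 4.1.16]{Cazenave 1998} for propagation of $\calD(\mathbb H)$. The only cosmetic differences are that the paper extracts unitarity directly from the Reed--Simon theorem rather than arguing it separately, and it points to the proof of Lemma \ref{lemma H2} for the Lipschitz bootstrap you summarize as ``a short bootstrap''.
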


\begin{remark}
This result can be extended to the following abstract setting. Consider time-dependent operators of the form
	\begin{equation}
	\mathbb H (t) = \mathbb H _0 + \calI (t) \, , 
	\qquad t \in \R \, , 
	\end{equation}
	where  $\mathbb H _0 $  is self-adjoint with domain $\calD( \mathbb H _0)$, and 
	$  \calI(t)$ is bounded, self-adjoint and satisfies assumptions (1) and (2) of Lemma \ref{lemma lipschitz}.  Then, there is a unique unitary propagator $U(t,s)$ associated to the evolution of $\mathbb H(t)$, in the sense of Proposition \ref{proposition propagator}. 
	Note that  we do not require $\mathbb H _0$ to be bounded from below, nor we require 
additional assumptions on the derivative of $\calI(t)$ or on the commutator $[\mathbb H_0  , \calI(t)]$. 
\end{remark}

\begin{proof}[Proof of Proposition \ref{proposition propagator}]
Let us first pass to the interaction picture. Namely, we define 
\begin{equation}
\widetilde \calI_M (t )
 \defeq e^{ \i t \mathbb H }
\calI_M(t)
e^{- \i t \mathbb H } \,  , \qquad t \in \R \, .
\end{equation}
Then, Lemma \ref{lemma lipschitz} and \cite[Theorem X.69]{ReedSimonVol2} imply the existence of a unitary propagator $\widetilde \U_N^M (t,s)$--explicitly given by the absolutely convergent Dyson series--that satisfies the integral equation 
\begin{equation}
\widetilde \U_N^{(M)}(t,s) 
 = \1 - \i \int_{s}^t 
\widetilde 
\calI_M (r  )
\widetilde \U_N^{(M)} (r ,s)  \d r \, , \qquad t,s \in \R \, .
\end{equation}
Consequently,      for $t ,s \in \R $ we let our original propagator be  
$
\U_N^{(M)}(t,s) 
\defeq 
e^{- \i t \mathbb H } 
\widetilde \U_N^{(M)} (t,s) 
e^{\i s \mathbb H }$. 
  Let now $\Psi(t)$ be as in the statement of the proposition. Then, it is straightforward to verify that for all $ t, s \in \R$ it holds that 
\begin{equation}
\Psi(t)
=
e^{- \i (t - s ) \mathbb H }   \Psi_0 
-\i
\int_s^t
e^{- \i (t - r ) \mathbb H } 
\calI_M (r  )
\Psi (r )\, \d r \, .
\end{equation}
Since the interaction term $\calI_M(t)$ is locally Lipschitz continuous, we may   adapt the argument presented in the proof of Lemma \ref{lemma H2} to show that the map 
$ t\in \R \mapsto \Psi(t) \in \H $ is locally Lipschitz continuous, as well. In particular, the source term
  $t \mapsto F(t) \defeq \calI_M(t) \Psi(t) \in \H  $  becomes locally Lipschitz continuous and, therefore, belongs to $W^{1,1}_{ loc }( \R , \H)$.  We may then apply
\cite[Proposition 4.1.6]{Cazenave 1998} to conclude propagation of regularity, in the sense that (1) and (2) hold true. Uniqueness follows from standard arguments using the symmetry  of $\L_M(t)$  and   denseness of $\calD( \mathbb H )\subset \H $. 
\end{proof}

Next, we show that the regularized truncated dynamics propagates smoothness with respect to the tracer particle variables, together with boundedness with respect to the particle number operator and its powers. We remind the reader that the 
space $\mathscr{D}_\infty $ has been introduced in \eqref{smooth}

\begin{proposition}\label{corollary WP truncated} 
In the notation of Proposition \ref{proposition propagator}, if
$ \Psi_0 \in \mathscr{D}_\infty $, then   $\Psi (t)  \in \mathscr{D}_\infty $ for all $t \in \R $. 
\end{proposition}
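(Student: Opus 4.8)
The plan is to show that each of the three families of operators $N_b^k$, $|\bP|^k$, $|\bX|^k$ is propagated by $\U_N^{(M)}(t,s)$, working first at a formal level on a dense invariant core and then justifying the manipulations by a commutator/regularization argument. The starting point is Proposition \ref{proposition propagator}: for $\Psi_0 \in \calD(\mathbb H)$ the map $t \mapsto \Psi(t)$ is $C^1$ into $\H$ and continuous into $\calD(\mathbb H)$, and solves $\i\partial_t \Psi(t) = \L_M(t)\Psi(t)$ strongly. Since $\mathscr{D}_\infty \subset \calD(\mathbb H)$ (indeed $\mathscr{D}_\infty \subset \calD(N_b^2) \cap H^2_\bX \cap \calD(T_b) \subset \calD(\mathbb H)$), it suffices to produce, for each $k$, an a priori bound of the form $\tfrac{d}{dt}\|A^k \Psi(t)\|^2 \lesssim_{N,M,k,t} \|A^k\Psi(t)\|^2 + (\text{lower order})$ with $A \in \{(N_b+1)^{1/2}, |\bP|, |\bX|\}$, and then invoke Grönwall. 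The global-in-$N$ uniformity that matters for the main results is \emph{not} needed here — Proposition \ref{proposition propagator}'s constants may depend on $N,M$ — so the bounds can be crude.

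First I would handle $N_b$: since $[\L_M(t),N_b] = [\L_M^{(od)}(t),N_b] = \chi(N_b\le M)[\L_1(t)+\L_2(t)+\L_3(t),N_b]$ is, on $\calD(N_b^{1/2})$, controlled by $C(N,M,t)(N_b+1)$ (this is exactly the content of the commutator estimates in the proof of Lemma \ref{lemma 1}, specialized to $j=k$, where now we allow the constant to swallow $N$ and $M$), one gets $\tfrac{d}{dt}\|(N_b+1)^{k/2}\Psi(t)\|^2 \le C(N,M,t,k)\|(N_b+1)^{k/2}\Psi(t)\|^2$ and hence $\Psi(t)\in\calD(N_b^k)$ for all $k$. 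Next, $|\bP|$: the only term of $\L_M(t)$ not commuting with $\bP=-\i\nabla_\bX$ is the boson–tracer term $\F_M(t)$ (in the notation of the proof of Lemma \ref{lemma 2}), and since $w\in\S(\R^3)$ all its $\bX$-derivatives are bounded; commuting $\bP$ through $\F_M(t)$ repeatedly produces only bounded operators times the cut-off $\chi(N_b\le M)$, which together with the already-established $N_b$ bounds yields $\tfrac{d}{dt}\|\bP^k\Psi(t)\|^2 \lesssim_{N,M,k,t}\|\bP^k\Psi(t)\|^2 + \|(N_b+1)^k\Psi(t)\|^2$, closing the estimate for each $k$. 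Finally, $|\bX|$: here $[\L_M(t),\bX] = \tfrac{1}{2N}[-\Delta_\bX,\bX] = -\tfrac{1}{N}\bP$ (up to constants), so $\tfrac{d}{dt}\|\bX^k\Psi(t)\|^2$ is controlled by $\|\bX^k\Psi(t)\|^2 + \|\bX^{k-1}\bP\Psi(t)\|^2$ plus lower-order mixed terms; since momentum regularity is already in hand, an induction on $k$ (controlling the mixed norms $\|\bX^j\bP^{k-j}\Psi(t)\|$ by an interpolation argument, e.g.\ via Lemma \ref{lemma interpolation} applied to $A=|\bX|$, $B=|\bP|$ componentwise) closes this too. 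Intersecting over all $k$ gives $\Psi(t)\in\mathscr{D}_\infty$.

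The main obstacle is purely one of rigor rather than of strategy: the computations above are formal differentiations of $\|A^k\Psi(t)\|^2$, valid a priori only when $\Psi(t)$ stays in $\calD(A^{k+1})$ (so that $[\L_M(t),A^{2k}]\Psi(t)$ makes sense), which is exactly what we are trying to prove. I would circumvent this the standard way: introduce regularized multipliers $A_\delta^k := A^k(1+\delta A)^{-k}$ (or resolvent cut-offs $A^k(1+\delta A^2)^{-k}$), which are \emph{bounded}, so that $t\mapsto\|A_\delta^k\Psi(t)\|^2$ is genuinely $C^1$ by Proposition \ref{proposition propagator}; carry out the Grönwall estimate with a $\delta$-uniform constant (the commutator bounds above are all $\delta$-uniform because $(1+\delta A)^{-k}$ is a contraction commuting with $N_b$ and $\bP$, and its commutators with $\F_M(t)$ and with $\bX$ are bounded by those of $A^k$ itself via the integral representation of the resolvent); and let $\delta\downarrow0$ using monotone convergence $\|A_\delta^k\Psi(t)\|\uparrow\|A^k\Psi(t)\|$ to conclude finiteness. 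The three cases are intertwined only through the lower-order terms, so they should be done in the order $N_b$, then $\bP$, then $\bX$, each feeding the next; within the $\bX$ step an internal induction on the total degree handles the mixed position–momentum norms. No new estimates beyond those already appearing in the proofs of Lemmata \ref{lemma 1}, \ref{lemma 2}, \ref{lemma 3} and the Interpolation Lemma are required — only their ``$N,M$-dependent constants allowed'' versions.
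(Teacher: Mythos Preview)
Your proposal is essentially correct, but it takes a genuinely different route from the paper. The paper does not run a differential Gr\"onwall argument at all: instead it introduces for each $k$ the Banach space $\H_k$ with norm $\|\Psi\|_{\H_k} = \||\bX|^k\Psi\| + \||\bP|^k\Psi\| + \|N_b^k\Psi\| + \|\Psi\|$, verifies directly that (i) $\calI_M(t)$ is bounded $\H_k\to\H_k$ uniformly in $t$ (using the cut-off and $w\in\S$), and (ii) $e^{-\i t\mathbb H}$ is bounded on $\H_k$ locally uniformly in $t$ (using $[T_b+\calI_4,\bX]=[T_b+\calI_4,\bP]=[T_b+\calI_4,N_b]=0$ and the elementary free-Schr\"odinger bound $\|X^k e^{-\i t P^2}\psi\| = \|(X+tP)^k\psi\|$). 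With these two continuity estimates the Duhamel/mild equation $\widetilde\Psi(t)=e^{-\i t\mathbb H}\Psi_0 - \i\int_0^t e^{-\i(t-r)\mathbb H}\calI_M(r)\widetilde\Psi(r)\,\d r$ becomes a contraction on $C([0,T],\H_k)$ for small $T$, and uniqueness in $C([0,T],\H)$ identifies the fixed point with $\Psi(t)$.

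The trade-off is this: your approach reuses the commutator machinery of Lemmata \ref{lemma 1}--\ref{lemma 3} but pays for it with a regularization layer ($A_\delta^k$) and a sequential induction ($N_b$, then $\bP$, then $\bX$ with mixed-norm interpolation) to make the formal differentiations honest. The paper's approach sidesteps the regularization entirely---by working in the integral formulation nothing unbounded is ever differentiated---and handles all three operators simultaneously via the combined norm $\|\cdot\|_{\H_k}$, at the cost of re-deriving boundedness of $\calI_M(t)$ and $e^{-\i t\mathbb H}$ in that norm from scratch. Both are valid; the fixed-point route is shorter here because the cut-off makes $\calI_M(t)$ genuinely bounded on each $\H_k$, which is the key structural fact you are also exploiting (just less directly).
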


\begin{proof}  
Let $N,M \geq 1$ be fixed and assume $s = 0$ for simplicity.  We will be using a fixed-point argument. Indeed, 
let  $ k\in\N \cup \{ 0 \}$ and consider  the following Banach space 
\begin{equation}
\H_k \defeq
\{
\Psi \in \H
 \, | \, 
  \|  \Psi  \|_{\H_k } < \infty 
\}
\end{equation}
which we endow with the norm 
\begin{equation}
  \|  \Psi  \|_{\H_k }  			\textstyle 
 \   \defeq  \ 
   \|   \, |\bX|^k \Psi  \|_{\H }
   +
      \|   \, |\bP|^k \Psi  \|_{\H }
      +
         \|   N_b^k \Psi  \|_{\H }
         + \| \Psi  \|_{ \H }
         \, .
\end{equation}

\noindent \textsc{Step 1.} First,   we  show that  the interaction term $\calI_M(t)$ is continuous with respect to $\H_k$,  uniformly in $t\in \R$.  To this end, let 
$\alpha$    be a multi-index with $| \alpha| \leq k$. Then, the terms in $\calI_M(t)$ that depend on $\bX \in \R^{3m }$  satisfy the following estimate
\begin{align}
\|     \partial_\bX^\alpha      
\Big( \nonumber
\intt 
\underline w ( x, \bX)   \, \chi(N_b \leq M )
(    N |\vp_t|^2  +&
 \sqrt N a_x \vp_t (x) 
+ a_x^* a_x    )  \d x
\  \Psi  			  
 \Big) 
\|_{  \H  } \\
&     \leq C(k)   \| w \|_{C_b^k}  
  ( N + \sqrt{NM} + M )    \|   \Psi   \|_{ \H_k  }
\end{align}
where   $C(k)>0$ depends only on $ k $. The other terms that show up in $\calI_M(t)$ do not depend on the tracer particle variables $\bX \in \R^{3m } $ and can be controlled analogously, thanks to the particle number cut-off. 
We put all of the terms together and find that for all $t \in \R $ it holds that 
\begin{equation}
\|
   |\bP|^k
   \calI_M(t)
   \Psi 
  \|_{  \H } 
\ \lesssim  \ 
( N  + \sqrt{NM} +  M + \sqrt{M^3 / N  }  )
\, 
 \|   \Psi   \|_{ \H_k  }  \,  , \qquad \Psi \in \H_k \, .
\end{equation}
Next, for the particle number operator we can use the particle number cut-off to easily find that the following (rough) estimate holds 
\begin{equation}
\|
N_b^k 
\calI_M(t)
\Psi 
\|_{  \H } 
 \lesssim 
M^k  \sup_{t \in \R } \|   \calI_M (t)   \|_{\mathcal{B} (\H )} \|  \Psi  \|_\H 
\lesssim 
M^k 
 \sup_{t \in \R } \|   \calI_M (t)   \|_{\mathcal{B} (\H )} 
 \|  \Psi  \|_{\H_k } \, .
\end{equation}
For the position variables, simply note that $\bX $ commutes with $\calI_M(t )$. Thus, we put these estimates together to find that for all $ t \in \R $
\begin{equation}\label{cont of interaction}
\|  \calI_M(t) \Psi  \|_{\H_k} 
\lesssim 
\big( N + M  + \sqrt{M^3 / N  }   + M^k 
\sup_{t \in \R } \|   \calI_M (t)   \|_{\mathcal{B} (\H )} 
 \big)  
\|  \Psi \|_{\H_k } \, .
\end{equation}

\noindent \textsc{Step 2}. Secondly, we show that the evolution group $(e^{\i t \Ha })_{t \in  \R }$ is continuous 
with respect to $\H_k$, locally uniformly in time. To this end, we note that in one dimension the following estimate holds for the free Schr\"odinger propagator
\begin{equation}
\|   X^k e^{ - \i t  P^2  }  \psi  \|_{L^2(\R )} 
=
\|    (X + t P )^k   \psi  \|_{L^2(\R )} 
\leq
C(k ) 
(
\| X^k \psi  \|_{L^2(\R)}
+
t^k \|  P^k \psi  \|_{L^2(\R)}
+ \| \psi \|_{L^2(\R)}
)
\end{equation}
as it can be   verified by using the commutation relation $ XP = PX + \i$ to control mixed powers. 
This in turn easily  implies that 
\begin{align}\label{cont of evolution}
\|  e^{ - \i t \mathbb{H}} \Psi  \|_{\H_k }
 & = 
 \|  |\bX|^k  e^{ - \i t \mathbb{H}} \Psi  \|_{\H}
 +
 \|  |\bP|^k  \Psi  \|_{\H}
 +
  \|   N_b^k  \Psi  \|_{\H} + \| \Psi \|_\H  
  \leq C(1 + t^k ) \|  \Psi  \|_{\H_k } \, 
\end{align}
for some constant $C = C(k)>0$ and all $t \in \R$. 

\noindent \textsc{Step 3.} Finally, 
we    use the continuity estimates \eqref{cont of interaction} and \eqref{cont of evolution} 
to set up  a fixed-point argument in $C( [0,T] , \H_k )$ to show that the equation
\begin{equation}\nonumber 
\widetilde \Psi(t) = e^{ - \i t \mathbb H }\Psi_0 
- \i 
\int_0^t 
e^{- \i (t-r )\mathbb H }
\calI_M(r  ) \widetilde  \Psi(r  )  \  \d r  .
\end{equation}
has a unique solution. 
In particular, if $\Psi(t)$ is the solution constructed 
in Proposition \ref{proposition propagator},  it holds that  $\Psi(t)  = \widetilde  \Psi(t)\in \H_k$ for all $t \in [0,T]$, thanks  to uniquess of solutions in $C([0,T] , \H)$ (i.e. the $k = 0  $ case). 
Further,  $T < 1 $ can be chosen to depend only on the constants that show up in the continuity estimates. Thus,  we can iterate the argument to cover $\R$.  
 Since $ k \in \N $ was arbitrary, 
this concludes the proof. 
\end{proof}

 \appendix

 \section{The Regularization Lemmata}\label{appendix regularization}

\begin{proof}[Proof of Lemma \ref{lemma regularized microscopic} ]
	Let us fix $ N  \geq 1 $ and  $ t\geq 0 $. Then, the triangle inequality 
	and the unitarity of the evolution group $(e^{ - \i t \Ha^\ve })_{ t \in \R } $ implies 
	\begin{align}
	\|  \Psi_{N,t } - \Psi_{N,t }^\ve  \|   
	\leq
	\| \Psi_{N, 0 } - \Psi_{N, 0 }^\ve  \|   
	+
		\|  ( e^{- \i t \Ha^\ve}    -  e^{ - \i t \Ha }) \Psi_{N, 0 }  \|   \, , \quad \forall \ve \in (0,1) \, .
	\end{align}
	We shall estimate these terms separately.
		
	\noindent \textsc{The First Term}. The triangle inequality implies that 
	\begin{equation}\label{second term eq 1}
	\|  
	\Psi_{N, 0 } - \Psi_{N,0  }^\ve
	\|
	\, \lesssim   \, 
	\|  \W(\sqrt N \vp_0 ) \Omega  	-  \W(\sqrt N \vp_0^\ve )  \Omega 			 \|_{\F_b }  
	+
	\|    
	u_{N,0 } - u_{N,0}^\ve
	\|_{L_\bX^2 }
	\end{equation}
	where we   used the fact that the norms of 
	  $   u_{N,0}^\ve  $ and $ \W(\sqrt N \vp_0) \Omega  $ are uniformly bounded in $N$ and $ \ve$. 
	It is known   \cite[Lemma 3.1]{GinibreVelo1979 1} that  the map 
	$f \in L^2 \mapsto \W(f) \Phi   \in \F_b $ is continuous for every $ \Phi  \in \F_b$. 
	Thus, we apply the definitions of the regularized quantities $u_{N,0}^\ve$ 
	and $\vp_0^\ve$ 
	to conclude  that the right hand side of \eqref{second term eq 1} vanishes  as $\ve \downarrow 0 $. 
	
	\vspace{1mm}

	\noindent   \textsc{The Second Term}. 
	Since the Hamiltonians preserve particle number, for any  $ n_0 \in \N$ we may write
	\begin{equation}\label{second term eq 0}
		\|  ( e^{- \i t \Ha^\ve}    -  e^{ - \i t \Ha }) \Psi_{N,0 }     \|  
		\lesssim 
		\| 
		 \chi ( N_b \leq n_0) ( e^{- \i t \Ha^\ve}    -  e^{ - \i t \Ha }) \Psi_{N,0}
		\|
		+
		\|  \chi ( N_b  >  n_0) \Psi_{N, 0 }  \| \, .
	\end{equation}
Let us decompose 
$ e^{ - \i t  \Ha  } \Psi_{N, 0 } = ( \psi_{n} (t))_{ n \geq 0 }$ 
and
$ e^{ - \i t \Ha^\ve  } \Psi_{N, 0 } = ( \psi_{n}^\ve (t))_{ n \geq 0 }$ according to their direct sum representation in $\bigoplus_{n\geq 0} L^2_{\bX } \otimes \F_n   $. 
Then, 
we obtain the following standard estimates
\begin{align}
\|  \psi_n(t)   - \psi_n^\ve& (t)   \|^2_{L^2_\bX \otimes \F_n  } 		  	\\					\nonumber 
& \lesssim 
\int_0^t 
| \<   
\psi_n( s) , ( H_{N,n} - H_{N,n}^\ve) \psi_{n}^\ve ( s )
   \>_{L^2_\bX \otimes \F_n  } 		
    | \, \d s   \quad \t{(in the sense of quad. forms)}
    \\
  & \lesssim  \int_0^t 
   |    \big\langle 							\nonumber 
   \psi_n( s) , 
    \Big(\, 
   \frac{n^2}{N} (v - v^\ve)(x_1 - x_2) 
   +
     n ( \underline w  - \underline w^\ve)   (x_1 , \bX ) 
   \,   \Big) 
       \psi_{n}^\ve ( s )
    \big\rangle_{L^2_\bX \otimes \F_n  } 		
     | \d s			\\
     & \lesssim 								  
      \int_0^t 
  \Big( 
   \frac{n}{N}   \| v - v^\ve \|_{L^{3/2}}
   + n \|  w - w^\ve \|_{C_b}   
   \Big)  
   \| \psi_{n}(s)  \|_{  L_\bX^2 \otimes H^1(\R^{3n})   } 
       \| \psi_{n}^\ve   (s)  \|_{  L_\bX^2 \otimes H^1(\R^{3n})   }  \d s \, .		\nonumber
\end{align}
Here, we have used 
$  | \<\vp_1 , v  \vp_2 \>_{L^2(\R^3 )} | 
 \lesssim
 \|v \|_{L^{3/2}}
  \|    \vp_1  \|_{L^6}
   \|\vp_2 \|_{L^6 }
\lesssim  \|v \|_{L^{3/2}} \| \vp_1 \|_{H^1 } \|\vp_2 \|_{H^1},$
which is a consequence of  H\"older's inequality and Sobolev's Embedding Theorem for $d = 3 $. 
  It is also a standard exercise to check that, for each $ n \in \N$,  the $H^1$ norms of $\psi_n(t)$ and $\psi_n^\ve(t)$ are  uniformly bounded in $ t \in \R$ and $\ve \in (0,1)$. Therefore, for each fixed $ n \in \N$, it holds that 
  \begin{equation}
   \lim_{\ve\downarrow 0} 
   \| \psi_n(t)   -  \psi_n^\ve(t)  \|_{L^2_\bX \otimes \F_n  } 	= 0  \, , \qquad \forall t \in \R \, . 
  \end{equation}
Thus, it easily follows now from \eqref{second term eq 0} that  for all $ n_0 \in \N $
\begin{equation}
\limsup_{\ve\downarrow 0} 
		\|  ( e^{- \i t \Ha^\ve}    -  e^{ - \i t \Ha }) \Psi     \|  
 \, 		\lesssim  \, 
		\|  \chi (N_b  > n_0) \Psi_{N, 0 }  \| \, .
\end{equation}
The proof is complete after we take the limit $ n_0 \rightarrow \infty $. 
\end{proof}

 \begin{proof}[Proof of Lemma \ref{lemma regularized mf}  ]
 	Throughout the proof, we will use the fact that the $H^1$ norms of both $\vp_t$ and $\vp_t^\ve$ are uniformly bounded in $t$ and $\ve$. 
 	They will be absorbed into universal constants.
 	
 	\vspace{1mm}
 	
 	\noindent \textsc{The Boson Field}. First, we note that  difference between the boson fields may be written as 
 	\begin{align}
 	\vp_t  -  \vp_t^{ \ve} 				
 	& = 
 	e^{ - \i t \Delta } (\vp_0 - \vp_0^{\ve}) \\
 	&     - \i \int_0^t 
 	e^{ - \i (t - s )\Delta }
 	\Big(   
 		(v * |\vp_s|^2  ) \vp_s 
 	-  
 	( v^\ve  * | \vp_s^{ \ve}|^2 ) \vp_s^{\ve}
 	+
 	\underline w (x,  \bX_s  ) 
 	\vp_s   
 	- 
 	\underline w^{\ve}		
 	(x,  \bX_s^{ \ve } )\vp_s^{ \ve}
 	\Big)  \d s  . 		\nonumber 
 	\end{align}
 	Under our condition for the potential $v$, it is known that the map 
 	$\vp \in H^1(\R^3 )\mapsto  (v* |\vp|^2) \vp \in H^1(\R^3)$ is locally Lipschitz continuous, i.e. it satisfies  \eqref{lipschitz}.
 	Therefore, we apply the triangle inequality, a first order Taylor estimate and 
 	Lipschitz continuity to find that 
 	\begin{align}
 	\|  \vp_t     -   \vp_t^{ \ve }   \|_{ H^1 }  
 	&    \lesssim 
 	\| \vp_0 - \vp_0^{\ve} \|_{H^1 }
 	+ 
 	\int_0^t 
 	\|  ( v - v^{\ve})  * | \vp_t^{\ve} |^2 \vp_t^{\ve} \|_{H^1 } \d s \, .
 	\\ 
 	& + 
 	\int_0^t 
 	(
 	\| w \|_{C_b^1}  + 
 	\|  \vp_s \|_{H^1 }^2  
 	+
 	\| \vp_s^{\ve} \|_{H^1  }^2 	
 	) 
 	(  \|\vp_s  -  \vp_s^{\ve} \|_{ H^1  } 
 	+ | \bX_s  - \bX_s^{ \ve }  | 
 	+
 	\| w - w^{ \ve }  \|_{C_b^1}
 	\nonumber
 	)   \ \d s  
 	\end{align}
 	For the term containing the   difference $\delta v =  v - v^{\ve}$,  we use the Leibniz rule to find that 
 	\begin{align}
 	\big\|  
  ( 	\delta v  * |\vp|^2 )   \vp 
 	\big\|_{H^1 }
 	\lesssim 
 	 	\big\|  
  ( 	\delta v  * |\vp|^2  )    \vp 
 	 	\big\|  _{  L^2  }
 	+
 	\big\|  
   \big( 	\delta v  * \bar \vp \nabla \vp  \big)   \vp 
 	 	\big\|  _{ L^2 } 
 	+
 	 	\big\|  
  ( 	\delta v  * |\vp|^2   )   \nabla \vp   
 	 	\big\|  _{ L^2  } \, .
 	\end{align}
 	In addition, we note than an application of H\"older's and Young's inequality gives
 	\begin{equation}
 	\big\|
  ( 	  \delta v  *   \vp \phi  ) \,  \psi 
 	 \big\|_{L^2 } 
  \, 	\lesssim  \, 
 	\| \delta v   \|_{L^{3/2}}
 	\|  
 	\vp 
 	\|_{L^{p_1} }
 	\|  
 	\phi 
 	\|_{L^{p_2} }
 	\|  
 	\psi 
 	\|_{L^{p_3} } \, , \qquad  \frac{1}{p_1 } + \frac{1}{p_2 } + \frac{1}{p_3 } = \frac{5}{6} \, .
 	\end{equation}
 	We apply the last estimate for $(p_1 , p_2, p_3) = (6,2,6)$ and $(p_1, p_2, p_3 ) = (6,6,2)$, combined with the    embedding $H^1(\R^3 ) \hookrightarrow L^6 (\R^3  )$,   to find that 
 	\begin{align}
 	\big\|  
  ( 	\delta v * |\vp_t^{\ve}|^2  ) \, \vp_t^{\ve}
 	\big\|_{H^1 }
 	\lesssim 
 	\|  \delta v  \|_{L^{3/2}}  
 	\|  \vp_t^{\ve}  \|_{ H^1 }^3 
 	\lesssim 
 	\ve \, .
 	\end{align}

 	\noindent \textsc{Tracer Particle Variables}.
 	Similarly as before, we may use a Taylor estimate and the triangle inequality to find that 
 	\begin{align}
 	|\bX_t - \bX_t^\ve| &  \lesssim  \int_0^t  |\dot \bX_s - \dot \bX_s^\ve| \d s    
 	\\
 	|  \dot \bX_t     - \dot \bX_t^{\ve}| 				\nonumber 
 	&
 	\  \lesssim  \ 
 	\int_0^t  
 	\big(  
 	\| w \|_{C_b^2}  
 	+ \|  \vp_s \|_{L^2}^2 
 	+ \| \vp_s^{\ve} \|_{L^2 }^2 
 	\big)   
 	(				  \|\vp_s  -  \vp_s^{\ve} \|_{L^2 } 
 	+ 
 	| \bX_s  - \bX_s^{ \ve }  | 
 	+
 	\| w - w^{ \ve }  \|_{C_b^1}
 	)  \d s 			
 	\, .
 	\end{align}

 	We combine the   estimates
 	for the boson field and the tracer particle position
 	and apply the Gr\"onwall inequality to the quantity 
 	$  \| \vp_t  - \vp_t \|_{H^1 } + | \bX_t  - \bX_t^{ \ve }| + | \dot \bX_t  - \dot \bX_t^\ve |$ to obtain 
 	\begin{equation}
 	\|\vp_t   -  \vp_t^{\ve} \|_{ H^1  } 
 	+ | \bX_t  - \bX_t^{ \ve }  | 
 	+ | \dot \bX_t  - \dot \bX_t^\ve |
 	\lesssim 
 	\, e^{Ct }   \, 			
 	(1  +  t  )
 	\Big( 
 	\ve 
 	+ 
 	\| \vp_0 - \vp_0^{\ve}  \|_{H^1 }
 	+
 	\|  w - w^{\ve} \|_{C_b^1 }
 	\Big)  \, ,    
 	\end{equation}
 	for some $C>0$. The proof of the lemma is finished once we apply the definitions of the regularized objects and take the limit $\ve \downarrow 0 $.  
 \end{proof}

  \section{The Interpolation Lemma}\label{appendix interpolation}
 In this appendix we give a proof of the Interpolation Lemma, stated as Lemma \ref{lemma interpolation}.
 \begin{proof}
 	Let $\Psi$ be arbitrary and assume $\lambda \in \R$ for simplicity. Then, using the commutation relations we find that 
 	\begin{equation}
 	\|  ABA \Psi  \|^2 
 	= 
 	\<  \Psi ,     AB (A^2 B) A \Psi \> 
 	= 
 	\< \Psi , AB^2 A^3 \Psi  \>  -  2 \i \,  \lambda \< \Psi , ABA^2 \Psi   \> \, .
 	\end{equation}
 	Then, using the Cauchy-Schwarz inequality and Young's inequality, we find that for all $\ve >0 $ there is $C_\ve $ (independent of $\Psi$) such that 
 	\begin{align*}
 	|
 	\< \Psi , AB^2 A^3 \Psi  \> 
 	|
 	& \leq
 	\ve  \| AB^2 \Psi  \|^2 
 	+ 
 	C_\ve \|  A^3 \Psi  \|^2 \\
 	|
 	\< \Psi ,  ABA^2  \Psi  \> 
 	|
 	& \leq
 	\ve \| ABA \Psi  \|^2 
 	+ 
 	C_\ve 
 	\lambda^2 
 	\|  A \Psi  \|^2  . 
 	\end{align*}
 	If we let $M(A,B) \defeq \| A^3 \Psi  \|^2  +   \|  B^3 \Psi  \|^2 + \lambda^2 \| A \Psi\|^2 + \lambda^2 \| B\Psi  \|^2$, we have found that 
 	\begin{equation}
 	\| ABA \Psi  \|^2 
 	\leq 
 	2 \ve  \,   (		\| AB^2 \Psi  \|^2  + 	 \| ABA \Psi \|^2 	) +  2\, C_\ve  \, M(A_,B) \, .
 	\end{equation}
 	Using $\| A^2B \Psi \| ^2 \leq 2 \| ABA \Psi \|^2 + 2 \lambda^2 \| A \Psi \|^2 $ 
 	and
 	$\| B A^2  \Psi \| ^2 \leq 2 \| ABA \Psi \|^2 + 2 \lambda^2 \| A \Psi \|^2 $ we find   (after re-updating $C_\ve $)  
 	\begin{equation}
 	\|  ABA \Psi  \|^2 
 	+
 	\|   A^2 B \Psi  \|^2 
 	+ \| BA^2 \Psi  \|^2 
 	\leq 
 	10 \ve  \,   (		\| AB^2 \Psi  \|^2  + 	 \| ABA \Psi \|^2 	) +   C_\ve  \, M(A_,B) \, .
 	\end{equation}
 	Similarly, we can interchange the roles of $A$ and $B$ to find that
 	\begin{equation}
 	\|  BAB \Psi  \|^2 
 	+
 	\|   B^2 A  \Psi  \|^2 
 	+ \|   A B^2 \Psi  \|^2 
 	\leq 
 	10 \ve  \,   (		\| B A^2 \Psi  \|^2  + 	 \| BAB \Psi \|^2 	) +   C_\ve  \, M(A_,B) \, .
 	\end{equation}
 	A straightforward combination of the last two inequalities, for $ \ve < 1/20$, finishes the proof of the lemma.
 \end{proof}

\bigskip
\noindent {\bf Acknowledgments.}
I am deeply grateful to Thomas Chen for mentoring,  reading this manuscript, for holding several      encouraging discussions, and for giving me the opportunity to work on 
the problem of tracer particles interacting with bosons. 
I would also like to thank Michael Hott for giving useful references and insight regarding singular Hartree-type potentials. 
I gratefully ackowledge support from the Provost’s Graduate Excellence Fellowship at The University of Texas at Austin, and by the NSF grants DMS-1716198 and DMS-2009800 through T. Chen, and by the NSF RTG Grant DMS-1840314 ``Analysis of PDE".

\end{document}